\newtheorem{corollary}{\bf Corollary}
\newtheorem{proposition}{\bf Proposition}
\newtheorem{lemma}{\bf Lemma}
\newtheorem{proof}{Proof}
\newtheorem{remark}{\bf Remark}
\begin{document}
\title{Energy Efficiency of Downlink Networks with Caching at Base Stations}
\author{Dong Liu, \IEEEmembership{Student Member, IEEE}, and Chenyang Yang, \IEEEmembership{Senior Member, IEEE}
	\thanks{
		Manuscript received April 30, 2015; revised September 15, 2015; accepted October 26, 2015. This work was supported in part by National Natural Science Foundation of China (NSFC) under Grant 61120106002 and National Basic Research
		Program of China (973 Program) under Grant 2012CB316003. The preliminary work of this paper was presented at the 2014 IEEE Global Conferencez on Signal and Information Processing ({\em GlobalSIP}), Atlanta, December 3-5, 2014 \cite{dong}.
		
		D. Liu and C. Yang are with the School of Electronics and
		Information Engineering, Beihang University, Beijing, 100191, P.R. China (e-mail: \{dliu,
		cyyang\}@buaa.edu.cn).
	} }
	\maketitle \vspace{-2mm}
	\begin{abstract}
		Caching popular contents at base stations (BSs) can reduce the backhaul cost and improve the
		network throughput. Yet whether locally caching at the BSs can improve the energy efficiency (EE),
		a major goal for 5th generation cellular networks, remains unclear. Due to the entangled impact of
		various factors on EE such as interference level, backhaul capacity, BS density, power consumption
		parameters, BS sleeping, content popularity and cache capacity, another important question is what are
		the key factors that contribute more to the EE gain from caching. In this paper, we attempt to explore
		the potential of EE of the cache-enabled wireless access networks and identify the key factors. By deriving closed-form expression of the approximated EE, we provide the condition when the EE can benefit from
		caching, find the optimal cache capacity that maximizes the network EE, and analyze the maximal EE
		gain brought by caching. We show that caching at the BSs can improve the network EE when power
		efficient cache hardware is used. When local caching has EE gain over not caching, caching more contents
		at the BSs may not provide higher EE. Numerical and simulation results show
		that the caching EE gain is large when the backhaul capacity is stringent, interference level is low,
		content popularity is skewed, and when caching at pico BSs instead of macro BSs.
	\end{abstract}
	
	\begin{IEEEkeywords}
		Energy efficiency, Cache, Wireless Access Networks, Downlink
	\end{IEEEkeywords}
	
\section{Introduction}
 \PARstart{T}{o} meet the explosive demands for throughput, support sustainable development and reduce
global carbon dioxide emission, energy efficiency (EE) has become a major performance
metric for 5th generation (5G) cellular networks. While EE of a network can be improved
from various aspects such as introducing new network architecture \cite{IZL5G14},
optimizing network deployment and resource allocation \cite{Yunas2015,RoseHu2014}, an
alternative approach is rethinking the goal of the network. Recently, it has been
observed that a large portion of mobile multimedia traffic is generated by many
duplicate downloads of a few popular contents \cite{woo2013comparison,ramanan2013cacheability}. This reflects a shift in major goal of
the networks from traditional transmitter-receiver communication to content
dissemination. On the other hand, the storage capacity of today's memory devices grows
rapidly. As a consequence, equipping caches at base stations (BSs) offers a promising
way to unleash the potential of cellular networks except continuing densifying the
networks \cite{golrezaei2013femtocaching,chen2014cache}.

Caching is a technique to improve performance well known in many wired network domains, e.g., 
 content-centric networks (CCN) \cite{choi2012network,llorca2013dynamic,Jun2013Energy}. In cellular networks, caching
popular contents in the edge can reduce the backhaul cost, access latency and energy
consumption as well as boost the throughput. Noticing that backhaul becomes a bottleneck
in small cell networks (SCNs) (and therefore in ultra dense networks (UDNs) of 5G) while disk
size increases quickly at a relatively low cost, the authors in \cite{Andy2012} suggested to
replace backhaul links by equipping caches at the BSs. By optimizing the caching
policies to serve more users under the constraints of file downloading time, large
throughput gain was reported. Considering SCNs with backhaul of  very limited capacity and
caching files based on their popularity, the authors in \cite{Procach14}
observed that the backhaul traffic load can be reduced by caching at the BSs.
To minimize the total energy consumed by caching and by data transport between BSs or
between BSs and servers, a policy of allocating cache size to BSs and service gateway
(SGW) was optimized in \cite{xu2014coordinated}. To minimize total service cost, caching
policy was optimized in  \cite{poularakismulticast} where the impact of multicast
transmission was taken into account. In \cite{xi2014joint}, data sharing among backhaul
and cooperative beamforming were jointly optimized to minimize the backhaul cost and
transmit power of cache-enabled systems. For heterogeneous networks, user access and content
caching were jointly optimized to minimize the average access delay in \cite{Dehghan}, and a coded caching scheme was optimized to achieve  information-theoretic bounds in \cite{Hachem}.

For highly skewed demands, caches should be pushed to the edge, say SGW or BSs of
cellular networks \cite{Procach14}. Compared with caching at the SGW, caching at the BSs
creates higher levels of redundancy where more replicas of the same content are stored.
Since caches also consume power, whether locally caching at the BSs can improve the EE
of wireless access network still remains unknown. Somewhat related problems have been
investigated in the context of CCN
\cite{choi2012network,llorca2013dynamic,Jun2013Energy}, but local caching in cellular
networks brings new challenges. In CCN, the energy can be effectively saved by reducing
user-content distances and eliminating duplicated transmissions. Yet in wireless access
networks, duplicated transmissions over the air cannot be removed due to the asynchronous requests
from the users \cite{golrezaei2013femtocaching} despite that caching at the BSs can
reduce the traffic load in core and backhaul networks. Instead, in dense cellular
networks the energy can be reduced by turning BSs into sleep mode with no or light traffic
load \cite{levanen2014radio} and by controlling interference. Furthermore, many factors have entangled impact on
the EE of wireless access networks such as backhaul capacity, interference level, power
consumption parameters, BS density, BS sleeping, and user access,  not to mention the
content popularity, cache size (i.e., cache capacity) and caching policy.

In this paper, we attempt to explore the potential of EE in cache-enabled wireless
access networks and identify the key impacting factors. Specifically, we strive to answer
the following fundamental questions.
\begin{itemize}
	\item Will caching at the BSs bring an EE gain? If
	yes, what is the condition?
	\item What is the relation between EE and cache size? Is there
	a tradeoff or does the cache size should be optimized?
	\item What is the impact of
	network density? Where to cache in the access networks is more energy efficient?
\end{itemize}

To this end, we consider a downlink multicell multiuser multi-antenna network. In order
to show the EE gain of caching at the BSs over caching at the SGW (i.e., not caching at
the BSs), we assume that the contents have been placed at the caches of the BSs by
broadcasting during off-peak times, and hence we consider the energy
consumed for content delivery but ignore the energy consumed for cache placement. With
the aim of finding critical factors  that impact the EE gain, we optimize the
configuration in cache placement phase (i.e., where to cache and how much to cache) and in
delivery phase (i.e., maximal transmit power of each BS) based on statistics of the user
demands, where different levels of interference are considered.

The major contributions of this paper are summarized as follows.
\begin{itemize}
	\item We derive the
	closed-form expression of approximated EE for cache-enabled networks, where the consumption of
	transmit and circuit powers at the BSs, and the power consumption for backhauling and
	caching at the BSs are taken into account.
	\item We provide the condition when EE can
	benefit from caching, find the optimal cache capacity that maximizes the network EE, and
	analyze the maximal EE gain brought by caching.
	\item We show that caching at the BSs may
	not improve the network EE. When caching brings an EE gain, caching more contents at the
	BSs may not always increase the EE. Both numerical and simulation results show that caching at pico BSs
	can provide higher EE gain than caching at macro BSs.
\end{itemize}

The rest of this paper is organized as follows. In Section II, we present the system
model. The EE of the cache-enabled access network is derived and analyzed in Section III
and Section IV, respectively. The numerical and simulation results are provided in
Section V, and the conclusions are drawn in Section VI.
	\section{System Model}
	Consider a downlink network consisting of $N_b$ BSs. Each BS is with $N_t$ antennas and
	serves multiple users each with a single antenna. Each BS is equipped with a cache and
	is connected to the core network with backhaul. In order to understanding the potential of EE of the cache-enabled wireless networks and  identifying the key impacting factors, we make the following assumptions in the analysis, which define a simple scenario but can capture the basic elements.
	\begin{itemize}
		\item We use circle cells each with radius $D$ to approximate hexagonal cells for easy analysis.
		\item Each content is of equal size $F$ bits as in \cite{llorca2013dynamic,Andy2012,wang2013optimal}  for mathematical tractability and notational simplicity.\footnote{When the content size is random, we can show that the performance depends on the average content size, and the main results do not change.}
		\item The content popularity distribution  changes with time slowly \cite{Andy2012} so that can be regarded as static and the energy consumption for refreshing the cached content can be safely neglected. Specifically, we consider a static content catalog that contains $N_f$ contents, ranking from the most popular (the 1${st}$ content) to  the least popular (the $N_f$th content) based on the popularity. In practice, Zipf-like distribution
is widely applied to characterize many real world phenomena. Assume that each user requests one content from the catalog, and the probability of requesting the $f$th content is \cite{breslau1999web},
		\begin{equation}
		p_f = \frac{f^{-\delta}}{\sum_{j = 1}^{N_f}j^{-\delta}}\label{Zipfprob}
		\end{equation}
		where the typical value of $\delta$ is between 0.5 and 1.0, which determines the ``peakiness" of the distribution
		\cite{cha2008watching}. Since $\delta$ reflects different levels of skewness of the distribution, it is called skew parameter.
		\item The spatial distribution of the users is modeled as homogeneous Poisson point
		process (PPP) \cite{SCN2,li2014throughput} where the average number of users in the whole network is $\lambda$.\footnote{When this assumption does not hold, say, if the users are distributed within hotpot areas, the network EE will become lower due to stronger interference. Nonetheless, the main results still hold.} Then, the probability that there are $K$ users in each cell is
		$\frac{(\lambda/N_b)^K}{K!} e^{-{\lambda}/{N_b}}$.
		\item Each user is associated with the closest BS,\footnote{User association based on instantaneous channel gain will cause unnecessary handovers (i.e., the so-called ``ping-pong effect") \cite{jo2012heterogeneous}. For mathematical tractability, we do not consider shadowing, which will not change the main trends of the performance.} which is called its local BS, and each BS caches $N_c$ most popular contents. In fact, with the static content catalog, when each user is associated with its local BS and the users' requests are with identical distribution, caching most popular contents everywhere is the optimal caching strategy in terms of maximizing the cache hit ratio \cite{golrezaei2013femtocaching}.
		\item Each BS serves the associated users with zero-forcing beamforming (ZFBF),
		which is a widely-used precoder to eliminate multi-user interference \cite{yoo2006optimality}, and with equal power allocation among multiple users.\footnote{Optimizing power allocation is rather involved in the considered setting with limited-capacity backhaul. Moreover, the closed-formed expression even for an approximated EE with optimal power allocation is hard to obtain if not impossible. Equal power allocation provides an EE lower bound, which however can reflect the main trends of the EE and becomes near optimal when signal-to-interference-plus-noise ratio (SINR) is high. }
	\end{itemize}


	
	Denote $\mathcal{C}_b = \{1,2,\cdots,N_c\}$ as the set
	of the contents cached at the $b$th BS (denoted by BS$_b$), $b = 1, \cdots, N_b$, then the
	cache capacity of each BS is $N_c F$. When a user requests a content that is cached at its
	local BS, the BS will fetch the content from the cache directly  and then transmit to the
	user. Otherwise, the BS will fetch the content from the core network via backhaul
	link.
	
 To reduce energy consumption and avoid interference, we consider BS idling  ranging from very short period  (less than 1 ms) to longer period (e.g., 100 ms) \cite{levanen2014radio}. Once a BS has no user to serve, the BS is turned into idle mode. Otherwise, the BS operates in active mode. The probability that BS$_b$ is active is $p_a = 1 - e^{-{\lambda}/{N_b}}$ according to the spatial distribution of users. Since we do not restrict the type of caching hardwares where some of them
	can not be switched off when contents are cached (e.g., Dynamic Random Access Memory (DRAM)), we do not consider cache idling.\footnote{Some cache hardwares such as hard drive disk (HDD) or solid state disk (SSD) can be switched off without losing the cached contents. When a BS is turned into in deep sleep (e.g., with period in hours), these cache hardwares can be switched off to further reduce energy consumption.}
	
	The network EE is associated with the throughput, which largely depends on the interference level. To capture the essence of the problem and simplify the analysis, we introduce a parameter to reflect the portion of inter-cell interference (ICI) able to be removed in a network, ranging from the best case to the worst case, as detailed later.  When the user density is high such that the number of users
	in a cell exceeds $N_t$, we can select several users to serve according to a certain criterion. When round-robin scheduling is used to select $N_t$ users to serve, the probability that BS$_b$ serves $K_b$ users can be derived as
	\begin{equation}
	p_{K_b} = \left\{ \begin{array}{ll}
	\big(\frac{\lambda}{N_b}\big)^{K_b} \frac{1}{K_b !} e^{-\frac{\lambda}{N_b}},&\text{if}~ K_b<N_t \\
	1 - \sum_{k = 0}^{N_t - 1} \big(\frac{\lambda}{N_b}\big)^{k}\frac{1}{k !} e^{-\frac{\lambda}{N_b}},
	&\text{if}~K_b = N_t
	\end{array} \right.
	\label{eqn:Pkb}
	\end{equation}
	The probability for other user scheduling can also be derived, which is not shown for conciseness.
	
	Denote $\mathbf{H}_b = [\sqrt{\smash[b]{r_{1b}^{-\alpha}}}\mathbf{h}_{1b},\cdots,
	\sqrt{\smash[b]{r_{K_b b}^{-\alpha}}}\mathbf{h}_{K_bb}]$ as the downlink channel matrix
	from BS$_b$ to the $K_b$ users located in the $b$th cell, where $r_{kb}$ and
	$\mathbf{h}_{kb}$ are respectively the distance and the small-scale Rayleigh fading
	channel vector from BS$_b$ to the $k$th user (denoted by MS$_k$), and $\alpha$ is the
	path-loss exponent. When perfect channel is available at each BS, the ZFBF vector at
	BS$_b$ can be computed as $\mathbf{W}_b = \frac{1}{\sqrt{K_b}}[ \mathbf{w}_{1b}, \cdots,
	\mathbf{w}_{K_bb}]$, where $\mathbf{w}_{kb} = \bar{\mathbf w}_{kb}/\|\bar{\mathbf
		w}_{kb}\|$, $\bar{\mathbf{w}}_{kb}$ denotes the $k$th column vector of
	$(\mathbf{H}_b^H)^{\dagger}$, $(\cdot)^\dagger$, $(\cdot)^H$ , and $\| \cdot \|$ stand
	by the Moore-Penrose inverse, conjugate transpose, and Euclidean norm, respectively.
	
	Then, the instantaneous receive SINR of MS$_k$ served by BS$_b$ when the BS is active is
	\begin{equation}
	\gamma_{kb}= \frac{ P r_{kb}^{-\alpha} |\mathbf h_{kb}^H\mathbf{w}_{kb}|^2}
	{K_b (\beta P I_k + \sigma^2)} \label{eqn:gamma}
	\end{equation}
	where $I_k \triangleq \sum_{j=1, j\neq b}^{N_b} \zeta_{j} r_{kj}^{-\alpha} \|\mathbf
	h_{kj} \mathbf{W}_{j} \|^2$ is the power of ICI normalized by the transmit power $P$ at
	BS, $\zeta_j$ is an indicator for the status of BS$_j$, $\zeta_j = 1$ if BS$_j$ is
	active, $\zeta_j = 0$ otherwise, $\sigma^2$ is the variance of the white Gaussian noise,
	and $\beta \in [0,1]$ reflects the percentage of how much ICI can be removed  by some sort of interference management techniques.
	For example, $\beta = 0$ reflects the optimistic scenario, where all ICIs are
	assumed to be completely eliminated.
	$\beta = 1$ reflects the pessimistic case, where no interference coordination is
	assumed among the BSs.
	
	Considering that the requested contents not cached at BS$_b$ need to be fetched via backhaul and the backhaul traffic load is constrained by the backhaul capacity, the instantaneous downlink throughput of the $b$th cell can be expressed as
	\begin{multline}
	R_b = \zeta_b \bigg(
	\underbrace{B \sum_{f_k \in \mathcal{C}_b} \log_2(1+\gamma_{kb})}
	_{R_{b, \rm ca}} \\
	 + \underbrace{ \min \Big(B \sum_{f_k \notin \mathcal{C}_b} \log_2(1+\gamma_{kb}),
		C_{\rm bh} \Big)}_{R_{b, \rm bh}} \bigg) \label{eqn:throughput}
	\end{multline}
	where $f_k$ denotes the index of the content requested by MS$_k$, $B$ is the downlink transmission bandwidth,  $C_{\rm bh}$ is the backhaul
	capacity, and the $\min (x, y)$ function returns the smallest value between $x$ and $y$.
	
	The first term $R_{b, \rm
		ca}$ in \eqref{eqn:throughput} is the sum rate of the users in the $b$th cell whose requested
	contents are cached at the BS,  called \emph{cache-hit users}. The second term
	$R_{b, \rm bh}$ is the sum rate of the users
	whose requested contents are not cached at the BS, called \emph{cache-miss users}.

	\section{EE of the Cache-Enabled Network} \label{sec:EE}
	The EE of the downlink network is defined as the ratio of the average number of bits
	transmitted to the average energy consumed \cite{auer2010d2,chong2011energy,li2011energy}, which is equivalent to the ratio of the average throughput of the network to the average total power consumption at the BSs
	\begin{equation}
	EE = \frac{ \mathbb{E}\left\{\sum_{b=1}^{N_b} R_b \right\} }{\mathbb{E}\left\{ \sum_{b=1}^{N_b}
		P_{b,\rm BS} \right\}} \triangleq \frac{\bar R}{\bar P_{\rm tot}} \label{eqn:EE def}
	\end{equation}
	where the expectations are taken over small scale fading, user location and the number of users in the network,\footnote{In this paper, unless otherwise
		specified, the expectation operator $\mathbb{E}\{\cdot\}$ is taken over all random
		variables (RVs) inside  ``$\{\cdot\}$".} and $P_{b,\rm BS}$ is the total power
	consumed at BS$_b$, which will be detailed later.

	In the following, we first derive the average throughput, and then derive the average total power
	consumption, from which we can obtain the EE of the network.
	
	\subsection{Average Throughput of the Network} \label{subsec:SE}
	Since the system configuration, caching and transmission strategies of every BS are the
	same and the users are uniformly located,
	the average throughput of the
	network can be obtained as
	\begin{equation}
	\bar R  = \mathbb{E}\left\{\sum_{b=1}^{N_b} R_b \right\} = N_b \mathbb{E} \{R_b\} \label{eqn:SE0}
	\end{equation}
	and the average throughput of the $b$th cell can be expressed as
	\begin{align}
	\mathbb{E}\{R_b\} = \sum_{K_b = 1}^{N_t}  \sum_{K_c = 0}^{K_b} p_{(K_b, K_c)}
	\mathbb{E} \{R_b | (K_b, K_c)\} \label{eqn:ERb}
	\end{align}
	where $p_{(K_b, K_c)}$ denotes the probability that $K_b$ users are served by BS$_b$
	meanwhile $K_c$ of them are cache-hit users, and $\mathbb{E} \{R_b | (K_b, K_c)\}$ is
	the average throughput of the $b$th cell under the condition that $K_b$ users are served by
	BS$_b$ meanwhile $K_c$ of them are cache-hit users.
	
	Using the conditional probability formula, we have $p_{(K_b, K_c)} = p_{K_b}\cdot
	p_{K_c|K_b}$, where $p_{K_b}$ is given in \eqref{eqn:Pkb}, and $p_{K_c|K_b}$ denotes the  probability of $K_c$ users requesting the
	contents from local cache under the condition that BS$_b$ serves $K_b$ users, which can
	be expressed as
	\begin{equation}
	p_{K_c|K_b} = \binom{K_b}{K_c} p_h^{K_c}(1-p_h)^{K_b-K_c} \label{eqn:condition}
	\end{equation}
	where $p_h$ is the probability that $f_k \in \mathcal{C}_b$ (i.e., the \emph{cache hit ratio}), which can be obtained from the Zipf-like distribution probability in \eqref{Zipfprob} as
	\begin{equation}
	p_h = \sum_{f = 1}^{N_c} p_f = \frac{ \sum_{f=1}^{N_c} f^{-\delta} }{
		\sum_{j=1}^{N_f} j^{-\delta}} \label{cachhitratio}
	\end{equation}
	
	Without loss of generality, we assume that the contents requested by MS$_1$,$\cdots$, MS$_{K_c}$ are cached at BS$_b$ and the contents
	requested by MS$_{K_c + 1}$, $\cdots$, MS$_{K_b}$ are not cached at BS$_b$. Then, from \eqref{eqn:throughput}, the conditional expectation of the average throughput of the $b$th cell is given by
	\begin{equation}
	\mathbb{E} \{R_b | (K_b,K_c)\} = \bar R_{\rm ca} (K_b, K_c) + \bar R_{\rm bh} (K_b,K_c,C_{\rm bh}) \label{eqn:SE multi}
	\end{equation}
	where
	$\bar R_{\rm ca} (K_b, K_c) \triangleq \mathbb{E} \big\{ B\sum_{k=1}^{K_c} \log_2(1+\gamma_{kb})
	\big\} $ is the average sum rate of the cache-hit users, and $\bar R_{\rm bh} (K_b, K_c, C_{\rm bh})  \triangleq \mathbb{E} \big\{ \min \big(B \sum_{k = K_c + 1}^{K_b}\log_2(1+\gamma_{kb}),
	C_{\rm bh}  \big) \big\}$ is the average sum rate of the cache-miss users.

To obtain a closed-form expression of EE for further analysis, we derive the approximated $\bar R_{\rm ca} (K_b, K_c)$ and $\bar R_{\rm bh} (K_b, K_c, C_{\rm bh})$ in the following two lemmas.

	\begin{lemma}
		The average sum rate of the cache-hit users can be approximated as
		\begin{align}
		\bar R_{\rm ca}(K_b, K_c) & \approx K_c B\left( \frac{\alpha}{2\ln 2} + \log_2 \frac{(N_t-K_b+1)P
		}{K_b ( p_a \beta P 2^{\Phi} + D^\alpha \sigma^2 )} \right) \nonumber\\
		&\triangleq K_c \left( \frac{\alpha B}{2\ln 2} + \bar R_{\rm e}(K_b) \right)
		\label{eqn:Rca final}
		\end{align}
		where $\Phi$ is a constant only depending on the path-loss exponent
		$\alpha$ when $N_b \to \infty$, $\bar R_{\rm e}(K_b) \triangleq B\log_2 \frac{(N_t-K_b+1)P }{K_b ( p_a \beta P 2^{\Phi} + D^\alpha \sigma^2 ) }$ can be regarded as the average
		achievable rate of a cell-edge user when BS$_b$
		serves $K_b$ users under unlimited-capacity backhaul.
	\end{lemma}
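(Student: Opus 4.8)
The plan is to reduce the claim to a single per-user expectation and then estimate that expectation term by term. Since every BS uses the same caching and transmission policy and the users within a cell are exchangeable, the $K_c$ cache-hit users are statistically identical, so conditioned on $(K_b,K_c)$ linearity of expectation gives $\bar R_{\rm ca}(K_b,K_c) = K_c\,\mathbb{E}\{B\log_2(1+\gamma_{1b})\}$ with $\gamma_{1b}$ as in \eqref{eqn:gamma}. The first approximation is the high-SINR one, $\log_2(1+\gamma_{1b})\approx\log_2\gamma_{1b}$, which turns the log of a ratio into a sum of logs that can be treated separately: $\log_2\gamma_{1b} = \log_2 P + \log_2|\mathbf h_{1b}^H\mathbf w_{1b}|^2 - \alpha\log_2 r_{1b} - \log_2 K_b - \log_2(\beta P I_1 + \sigma^2)$.

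Next I would take the expectation of each of the four non-trivial terms. For the array-gain term I would use the standard fact that, under ZFBF with $N_t$ antennas serving $K_b$ users, $|\mathbf h_{1b}^H\mathbf w_{1b}|^2$ equals $1/[(\mathbf G^H\mathbf G)^{-1}]_{11}$ for an $N_t\times K_b$ matrix $\mathbf G$ with i.i.d.\ $\mathcal{CN}(0,1)$ entries (the path-loss factors cancelling), hence is Gamma distributed with shape $N_t-K_b+1$ and unit scale; approximating the expectation of its logarithm by the logarithm of its mean gives $\mathbb{E}\{\log_2|\mathbf h_{1b}^H\mathbf w_{1b}|^2\}\approx\log_2(N_t-K_b+1)$. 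The path-loss term is exact: conditioned on lying in the disc of radius $D$ centred at its BS the user is uniform, so $r_{1b}$ has density $2r/D^2$ and $\mathbb{E}\{\log_2 r_{1b}\} = \log_2 D - 1/(2\ln 2)$, which is the source of the $\alpha/(2\ln 2)$ offset. For the interference term I would again pull the expectation inside the logarithm (a Jensen-type approximation), reducing the task to computing $\mathbb{E}\{I_1\}$; since the interfering BSs' activity indicators $\zeta_j$ and channels $\mathbf h_{1j},\mathbf W_j$ are independent of the typical user's location, $\mathbb{E}\{\|\mathbf h_{1j}^H\mathbf W_j\|^2\}=1$ and $\mathbb{E}\{I_1\} = p_a\sum_{j\ne b}\mathbb{E}\{r_{1j}^{-\alpha}\}$, so defining $2^\Phi \triangleq D^\alpha\sum_{j\ne b}\mathbb{E}\{r_{1j}^{-\alpha}\}$ yields $\mathbb{E}\{\log_2(\beta P I_1+\sigma^2)\}\approx\log_2(\beta P p_a D^{-\alpha}2^\Phi + \sigma^2)$. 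Substituting the four estimates back, collecting the $D^\alpha$ factors, and multiplying by $K_c$ gives exactly \eqref{eqn:Rca final}; the interpretation of $\bar R_{\rm e}(K_b)$ as a cell-edge rate then follows by setting $r_{1b}=D$ and replacing the random array gain and interference by their means.

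The main obstacle is the interference constant $\Phi$: one must show that $D^\alpha\sum_{j\ne b}\mathbb{E}\{r_{1j}^{-\alpha}\}$ is well defined and, in the limit $N_b\to\infty$, depends on the geometry only through $\alpha$. Scale-invariance (so the limit is independent of $D$) is immediate because all inter-node distances scale linearly with $D$, but convergence of the sum requires $\alpha>2$ together with a careful accounting of how the interfering BS positions, relative to the typical user inside its own cell, fill the plane as the tiling grows — this is where the circular-cell idealization is invoked to keep the geometric series tractable. A secondary issue is that the two approximations, dropping the ``$+1$'' inside the logarithm and exchanging $\mathbb{E}$ with $\log$ for both the array gain and the interference, are accurate only in the high-SINR / many-interferer regime and are not claimed to be exact, consistent with the ``$\approx$'' in the statement.
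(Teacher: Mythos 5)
Your proposal is correct in structure and reproduces the paper's argument for most of the terms: the reduction to a single per-user expectation, the high-SINR decomposition of $\log_2\gamma_{kb}$, the exact path-loss integral $\mathbb{E}\{\log_2 r_{kb}\}=\log_2 D-\tfrac{1}{2\ln 2}$ (the source of $\tfrac{\alpha}{2\ln 2}$), and the Gamma$(N_t-K_b+1,1)$ law of the ZF effective gain all match Appendix A. Two of your steps take a slightly different route. For the array gain you approximate $\mathbb{E}\{\log_2|\mathbf h_{kb}^H\mathbf w_{kb}|^2\}$ by $\log_2\mathbb{E}\{|\mathbf h_{kb}^H\mathbf w_{kb}|^2\}$, whereas the paper evaluates the expectation exactly as $\tfrac{1}{\ln 2}\psi(N_t-K_b+1)$ and then uses $\psi(n)\approx\ln n$; the end result is identical, but the digamma route makes the size of the error explicit ($\mathcal{O}(1/n)$). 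The more substantive divergence is the interference term: you push the expectation over \emph{all} randomness (activity, fading, and interferer distances) inside the logarithm and define $2^{\Phi}=D^{\alpha}\sum_{j\neq b}\mathbb{E}\{r_{kj}^{-\alpha}\}$, i.e.\ $\Phi$ is the log of a mean. The paper applies Jensen only to the activity indicators $\zeta_j$ (justified by $\zeta_j\to\mathbb{E}\{\zeta_j\}=p_a$ as $\lambda/N_b$ grows), keeps the fading and geometry inside the expectation, and defines $\Phi=\mathbb{E}\{\log_2(\sum_{j\neq b}(D/r_{kj})^{\alpha}\|\mathbf h_{kj}\mathbf W_j\|^2)\}$ -- a mean of a log. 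By Jensen your constant upper-bounds theirs, so both satisfy the lemma as stated (a constant depending only on $\alpha$ as $N_b\to\infty$, with the same scale-invariance and $\alpha>2$ convergence argument you sketch), but the paper's partial-Jensen version incurs a smaller approximation gap, and the paper additionally checks the interference-limited and noise-limited regimes separately to justify writing the combined denominator $p_a\beta P2^{\Phi}+D^{\alpha}\sigma^2$. Neither difference breaks the proof; they only change which constant is called $\Phi$ and how tight the resulting approximation is.
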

	\begin{proof}
		See Appendix A.
	\end{proof}
	
	The approximation of $\bar R_{\rm ca}(K_b, K_c)$ is accurate when both SINR and $\frac{\lambda}{N_b}$ are high.
	\begin{lemma}
		The average sum rate of the cache-miss users can be approximated as
		\begin{align}
		&  \bar R_{\rm bh}(K_b, K_c, C_{\rm bh})\approx \nonumber\\
		& \left\{ \begin{array}{l}
		\!\!\! (K_b \! - \! K_c)\!\left(\frac{\alpha B}{2\ln2} \gamma(K_b\!-\!K_c\!+\!1, z)	+ \bar R_{\rm e}(K_b) 	\gamma(K_b\! -\! K_c, z) \right)  \\
		\hspace{1.2em}+ ~ C_{\rm bh} \Gamma(K_b\!-K_c, z), \hspace{1.7em}\text{if}~C_{\rm bh}  > (K_b - K_c) \bar R_{\rm e}(K_b) \\
		\!\!\! C_{\rm bh}, \hspace{10.4em}\text{otherwise}
		\end{array} \right. \label{eqn:Rbh final}
		\end{align}
		where $z \triangleq \frac{2\ln2 }{\alpha B} \big(C_{\rm bh} - (K_b - K_c) \bar R_{\rm
			e}(K_b) \big)$, $\Gamma(k,x) \triangleq  e^{-x}\sum_{i=0}^{k-1}\frac{x^i}{i!}$, and $\gamma(k,x) \triangleq  1 -
		e^{-x}\sum_{i=0}^{k-1}\frac{x^i}{i!}$.
	\end{lemma}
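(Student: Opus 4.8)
The plan is to build on the per-user rate representation that underlies Lemma~1 and then evaluate the expectation of the $\min(\cdot,\cdot)$ in closed form. In the proof of Lemma~1 (Appendix~A), working in the high-SINR, large-$\lambda/N_b$ regime and replacing the normalized ICI $I_k$ and the zero-forcing array gain $|\mathbf h_{kb}^H\mathbf w_{kb}|^2$ by their typical values (with $N_b\to\infty$), one obtains that the rate of a single served user admits the approximation
\begin{equation*}
B\log_2(1+\gamma_{kb})\;\approx\;\bar R_{\rm e}(K_b)\;-\;\frac{\alpha B}{2}\log_2\!\frac{r_{kb}^2}{D^2}.
\end{equation*}
Since each served user is uniformly located in the disk of radius $D$, $r_{kb}^2/D^2$ is uniform on $[0,1]$, so $X_k\triangleq-\ln(r_{kb}^2/D^2)$ is $\mathrm{Exp}(1)$ and the rate equals $\bar R_{\rm e}(K_b)+\frac{\alpha B}{2\ln 2}X_k$; its mean $\frac{\alpha B}{2\ln 2}$ (from $\mathbb E\{X_k\}=1$) is precisely the term appearing in \eqref{eqn:Rca final}, and $X_k=0$ at the cell edge $r_{kb}=D$, consistent with the interpretation of $\bar R_{\rm e}(K_b)$. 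Moreover, because the $K_b$ users are placed independently, the $X_k$ are i.i.d., so $S\triangleq\sum_{k=K_c+1}^{K_b}X_k$ is $\mathrm{Gamma}(m,1)$ (shape $m$, rate $1$) with $m\triangleq K_b-K_c$, i.e.\ it has density $s^{m-1}e^{-s}/(m-1)!$ on $s\ge 0$. I would first assemble these facts.

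With this, $\bar R_{\rm bh}(K_b,K_c,C_{\rm bh})\approx\mathbb E\big\{\min\big(m\bar R_{\rm e}(K_b)+\frac{\alpha B}{2\ln 2}S,\;C_{\rm bh}\big)\big\}$, and I split on the sign of $C_{\rm bh}-m\bar R_{\rm e}(K_b)$. If $C_{\rm bh}\le m\bar R_{\rm e}(K_b)$, then $S\ge 0$ forces the first argument to be at least $C_{\rm bh}$ almost surely, so the minimum equals $C_{\rm bh}$ and $\bar R_{\rm bh}\approx C_{\rm bh}$---the ``otherwise'' branch. If $C_{\rm bh}>m\bar R_{\rm e}(K_b)$, put $z\triangleq\frac{2\ln 2}{\alpha B}\big(C_{\rm bh}-m\bar R_{\rm e}(K_b)\big)>0$; then the minimum equals $m\bar R_{\rm e}(K_b)+\frac{\alpha B}{2\ln 2}S$ on $\{S\le z\}$ and equals $C_{\rm bh}$ on $\{S>z\}$, so
\begin{equation*}
\bar R_{\rm bh}\;\approx\;\int_0^z\!\Big(m\bar R_{\rm e}(K_b)+\frac{\alpha B}{2\ln 2}s\Big)\frac{s^{m-1}e^{-s}}{(m-1)!}\,ds\;+\;C_{\rm bh}\,\Pr(S>z).
\end{equation*}
Evaluating with $\int_0^z\frac{s^{m-1}e^{-s}}{(m-1)!}ds=\gamma(m,z)$ (which equals $1-e^{-z}\sum_{i=0}^{m-1}z^i/i!$ by repeated integration by parts, matching the stated definition), $\Pr(S>z)=1-\gamma(m,z)=\Gamma(m,z)$, and $\int_0^z s\cdot\frac{s^{m-1}e^{-s}}{(m-1)!}ds=m\int_0^z\frac{s^{m}e^{-s}}{m!}ds=m\,\gamma(m+1,z)$, and then grouping the terms proportional to $m=K_b-K_c$, gives exactly \eqref{eqn:Rbh final}.

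The genuinely delicate point is not this computation but the per-user representation it relies on: justifying that, in the relevant regime, the only randomness one needs to retain is the i.i.d.\ path-loss contributions $X_k$, while the ICI $I_k$ (which is in fact correlated across co-cell users), the array gain, and the truncation to the finite disk can all be collapsed into the deterministic $\bar R_{\rm e}(K_b)$. That approximation is inherited from Lemma~1/Appendix~A, but here it is used in a stronger form---the whole \emph{distribution}, not merely the mean, of the per-user rate must be well approximated, and the statistical dependence among the co-cell users' rates must be negligible. This is the source of the ``$\approx$'' in \eqref{eqn:Rbh final}; everything downstream of it is routine.
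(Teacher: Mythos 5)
Your proposal is correct and follows essentially the same route as the paper's Appendix C: approximate the per-user rate as $\bar R_{\rm e}(K_b)$ plus $\tfrac{\alpha B}{2\ln 2}$ times an $\mathrm{Exp}(1)$ variable $2\ln(D/r_{kb})$, note that the sum over the $K_b-K_c$ cache-miss users is $\mathrm{Gamma}(K_b-K_c,1)$, split on the sign of $C_{\rm bh}-(K_b-K_c)\bar R_{\rm e}(K_b)$, and evaluate the truncated integrals via incomplete Gamma functions. The delicate point you flag — that the whole per-user rate distribution, not just its mean, must collapse onto the $r_{kb}$-randomness — is exactly what the paper handles by invoking channel hardening ($N_t\to\infty$) and the law of large numbers for the aggregate interference ($N_b\to\infty$) before pushing the expectation inside the $\min$.
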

	\begin{proof}
		See appendix C.
	\end{proof}
	The approximation is accurate in high SINR region when $\frac{\lambda}{N_b}$ is high and $N_t, N_b \to \infty$.

	Substituting \eqref{eqn:SE multi} into \eqref{eqn:ERb} and
	then into \eqref{eqn:SE0}, we obtain the network average throughput as
	\begin{equation}
	\bar R \! = \! N_b \!\! \sum_{K_b = 1}^{N_t}  \!\sum_{K_c = 0}^{K_b} \!\! p_{K_b} p_{K_c|K_b}
	\big( \bar R_{\rm ca} (K_b, K_c) + \bar R_{\rm bh} (K_b, K_c, C_{\rm bh}) \big) \label{eqn:SE multi final}
	\end{equation}
	where $p_{K_b}$ is given in \eqref{eqn:Pkb}, $p_{K_c|K_b}$ is given in \eqref{eqn:condition}, and the approximations of
	$\bar R_{\rm ca} (K_b, K_c)$ and $ \bar R_{\rm bh} (K_b, K_c, C_{\rm
		bh})$ are given in \eqref{eqn:Rca final} and \eqref{eqn:Rbh final}, respectively.
	
	\subsection{Average Total Power Consumption}
	To gain useful insight, we consider a basic model for such cache-enabled networks capturing the fundamental challenges and tradeoffs. By extending the typical BS power consumption model in \cite{auer2011much} to include caching power
	consumption, the total power consumed at BS$_b$ can be modeled as follows,
	\begin{equation}
	P_{b, \rm  BS} = \rho P_{b, \rm tx} + P_{b, \rm cc} + P_{b, \rm ca} + P_{b, \rm bh}
	\end{equation}
	where $P_{b,\rm tx}$, $P_{b,\rm cc}$, $P_{b,\rm ca}$, and $P_{b,\rm bh}$ respectively
	denote the power consumed at BS$_b$ for transmitting, operating the baseband and radio frequency circuits, caching, and
	backhauling, and $\rho$ reflects the impact of power amplifier, cooling and power
	supply.
	
	The transmit power of BS$_b$ is $P_{b, \rm tx} = P$ when the BS is in active mode or $P_{b, \rm tx} = 0$ when the BS is idle. The
	circuit power is $P_{b, \rm cc} = P_{{\rm cc}_a}$ in active mode or $P_{{\rm cc}_i}$ in
	idle mode. Since the active status of the BSs are independent from each other, the total
	number of active BSs in the network (denoted by $N_a$) follows Binomial distribution, and
	hence $\mathbb{E}\{N_a\} = N_b p_a = N_b(1 - e^{-\frac{\lambda}{N_b}})$. Therefore, the
	average total transmit and circuit power consumption at all BSs is
	\begin{align}\label{TXCIR}
	&\mathbb{E} \left\{ \sum_{b = 1}^{N_b} \rho P_{b,\rm tx} + P_{b,\rm cc} \right\} \nonumber\\
	&  = \mathbb{E} \{N_a\}  (\rho P + P_{{\rm cc}_a}) + (N_b - \mathbb{E}\{N_a\})P_{{\rm cc}_i}  \nonumber\\
	&  = N_b(1-e^{-\frac{\lambda}{N_b}}) P_{a} + N_be^{-\frac{\lambda}{N_b}} P_{i}
	\end{align}
	where $P_a \triangleq \rho P + P_{{\rm cc}_a}$ and $P_i \triangleq P_{{\rm cc}_i}$ are
	the total transmit and circuit power consumptions at a BS in the active mode and idle mode, respectively.
	
	Energy-proportional model is widely used in
	CCN \cite{choi2012network,llorca2013dynamic,Jun2013Energy} as well as radio access
	network (RAN) \cite{xu2014coordinated}, which enables the efficient use of caching resources. In this model, the caching power consumption is
	proportional to the cache capacity, which can be expressed as $P_{b,\rm ca} = w_{\rm ca}
	B_{\rm ca}$ \cite{choi2012network}, where $B_{\rm ca}$ is the number of bits cached at
	BS$_b$, and $w_{\rm ca}$ is the power coefficient of caching hardware in watt/bit. Since
	the cached contents of each BS are fixed, when each BS caches $N_c$ contents, the average
	total caching power consumption of all BSs is
	\begin{equation}
	\mathbb{E} \left\{ \sum_{b = 1}^{N_b} P_{b,\rm ca}  \right\}
	= N_b P_{b,\rm ca} = N_b w_{\rm ca}  N_cF \label{eqn:Pcache}
	\end{equation}
	
	The backhauling power consumption at BS$_b$ is modeled as \cite{fehske2010bit}
	\vspace{-1mm}
	\begin{equation}
	P_{\rm b,bh} = \frac{{P}_{\rm bh}^0 R_{b, \rm bh}}{C_{\rm bh}^0} \triangleq w_{\rm bh} R_{ b, \rm bh} \label{eqn:PBH}
	\end{equation}
	where ${P}_{\rm bh}^0$ denotes the power consumed by the backhaul equipment when
	supporting the maximum data rate $C_{\rm bh}^0$, $w_{\rm bh} \triangleq {P_{\rm
			bh}^0}/{C_{\rm bh}^0}$ is the power coefficient of backhaul equipment, and $R_{
		b, \rm bh}$ is the backhaul traffic, i.e., the sum rate of cache-miss users as defined in
	\eqref{eqn:throughput}. Then, the average backhaul power consumption is
	\begin{equation}
	\mathbb{E} \left\{ \sum_{b=1}^{N_b} P_{b,\rm bh} \right\}
	= w_{\rm bh}\mathbb{E} \left\{ \sum_{b=1}^{N_b} R_{b, \rm bh} \right\}
	= w_{\rm bh} {N_b} \mathbb{E} \{ R_{b, \rm bh} \} \label{eqn:EPbh}
	\end{equation}
	Similar to the derivations for \eqref{eqn:ERb} and \eqref{eqn:SE multi}, we can derive that
	\begin{align}
	\mathbb{E} \{ R_{b, \rm bh} \} & =\sum_{K_b = 1}^{N_t} \sum_{K_c = 0}^{K_b} p_{(K_b,K_c)}
	\mathbb{E} \{ R_{b, \rm bh}|(K_b,K_c) \} \nonumber \\
	& = \sum_{K_b = 1}^{N_t}\sum_{K_c = 0}^{K_b} p_{K_b} \cdot p_{K_c|K_b} \bar R_{\rm bh} (K_b, K_c, C_{\rm bh})
	\end{align}
	
	Then, the average total power consumption at all the BSs is
	\begin{multline}
	\bar P_{\rm tot}=N_b \Bigg( \left(1-e^{-\frac{\lambda}{N_b}}\right) P_{a} + e^{-\frac{\lambda}{N_b}}
	P_{i} + w_{\rm ca}N_cF  \\
    + w_{\rm bh}\sum_{K_b = 1}^{N_t}  \sum_{K_c = 0}^{K_b} p_{K_b}p_{K_c|K_b} \bar R_{\rm bh}
	(K_b,K_c,C_{\rm bh})\Bigg) \label{totalpower}
	\end{multline}
	
	\vspace{-4mm}\subsection{EE of the Network}
	By substituting \eqref{eqn:SE multi final} and \eqref{totalpower} into \eqref{eqn:EE def}, the EE of the network can be obtained as \eqref{eqn:multi EE}.
	\begin{figure*}[ht]
			\begin{equation}
			EE = \frac{\sum_{K_b = 1}^{N_t} \sum_{K_c = 0}^{K_b}p_{K_b} p_{K_c|K_b} \left(
				\bar R_{\rm ca}(K_b, K_c) + \bar R_{\rm bh}(K_b,K_c,C_{\rm bh})\right)}
			{\Big(1-e^{-\frac{\lambda}{N_b}}\Big) P_{a} + e^{-\frac{\lambda}{N_b}}
				P_{i} + w_{\rm ca}N_cF +
				w_{\rm bh}\sum_{K_b = 1}^{N_t}  \sum_{K_c = 0}^{K_b} p_{K_b}p_{K_c|K_b} \bar R_{\rm bh} (K_b,K_c,C_{\rm bh})}
			\label{eqn:multi EE}
			\end{equation}
	\hrulefill
	\end{figure*}
    With the approximated $\bar R_{\rm ca} (K_b, K_c)$ and $ \bar R_{\rm bh} (K_b, K_c, C_{\rm
		bh})$ introduced in the two lemmas, it is of closed-form and becomes an approximated EE.

	Despite that the approximated EE  is in closed form, it is still complex for further analysis. To
	gain useful insight on how caching impacts the network EE, in the sequel we analyze a
	special scenario where each BS selects one user in each time slot from the associated users \cite{li2014throughput,gupta2014downlink}.
	
	\vspace{-3mm}\section{EE Analysis for the Cache-enabled Network}\label{sec:special}
	In this section, we analyze the impact of several key factors on the EE  and
	reveal their interactions for a special case when each BS serves at most one user in each time slot.\footnote{This can be also regarded as a special case where no more than one user exists in each cell.}

	In this case, the average throughput of the network in \eqref{eqn:SE multi final} degenerates into,
	\begin{equation}
	\bar R  =  N_b p_a \big( p_h \bar R_{\rm ca} + (1-p_h)\bar R_{\rm bh}\big) \label{eqn:SE}
	\end{equation}
	where $\bar R_{\rm ca}$ and $\bar R_{\rm bh}$ are respectively the approximate average achievable rate of cache-hit user and cache-miss user derived from \eqref{eqn:Rca final} and \eqref{eqn:Rbh final} as
	\begin{align}
	\bar R_{\rm ca} & \approx \frac{\alpha B}{2 \ln 2}  + \bar R_{\rm e}\label{eqn:Rca1} \\
	\bar R_{\rm bh} & \approx \left\{ \begin{array}{ll}
	\!\!\! C_{\rm bh},&\text{if}~C_{\rm bh} \leq \bar R_{\rm e}  \\
	\!\!\!\frac{\alpha B}{2 \ln 2} \! + \!\bar R_{\rm e}
	 - \frac{\alpha B}{2\ln2} 2^{-\frac{2(C_{\rm bh}-\bar R_{\rm e})}{\alpha B}}, &\text{otherwise}
	\end{array} \right. \hspace{-3mm} \label{eqn:Rbh1}
	\end{align}
	and $\bar R_{\rm e} = B\log_2 \frac{N_t P}{p_a \beta P 2^{\Phi} + D^\alpha\sigma^2}$ is given by \eqref{eqn:Rca final}.

	%

	\begin{remark}
		The average throughput of the network increases with the cache hit ratio $p_h$ and the backhaul capacity $C_{\rm bh}$. In other words, we can improve the throughput by  caching more contents and increasing backhaul capacity. When $C_{\rm bh}$ is low and the contents are not with uniform popularity (i.e., $\delta>0$), the throughput increases with the cache size first
		rapidly then saturates, i.e., there is a \emph{tradeoff between throughput and memory}.
		
	\end{remark}

	The backhauling power consumption in \eqref{eqn:EPbh} degenerates into
	\begin{align}
	&\mathbb{E} \left\{ \sum_{b=1}^{N_b} P_{\rm bh} \right\}
	=   w_{\rm bh}N_b p_a(1-p_h)\bar R_{\rm bh} \nonumber \\
	&=  \left\{ \begin{array}{l}
	\!\!w_{\rm bh}N_b p_a(1-p_h)C_{\rm bh}, \quad\quad\quad\quad \quad\quad\quad\quad
	~\text{if}~ C_{\rm bh} \leq \bar R_{\rm e}  \\
	\!\!w_{\rm bh} N_b p_a (1-p_h) \big(\bar R_{\rm ca}
	- \frac{\alpha B}{2\ln2} 2^{-\frac{2(C_{\rm bh}-\bar R_{\rm e})}{\alpha B}} \big),
	~\text{otherwise}
	\end{array} \right. \label{eqn:backhaul}
	\end{align}
	which decreases with $p_h$ but increases with $C_{\rm bh}$.
	
	Substituting \eqref{eqn:SE}, \eqref{eqn:backhaul},  \eqref{TXCIR} and \eqref{eqn:Pcache}
	into \eqref{eqn:EE def}, the EE of the network can be approximated as,
	\begin{equation}
	EE \approx \frac{ p_a  \big( p_h \bar R_{\rm ca} + (1-p_h) \bar R_{\rm bh}  \big)}
	{p_a P_{a}+ (1-p_a) P_{i} + w_{\rm ca}N_cF +
		p_a w_{\rm bh}  (1-p_h) \bar R_{\rm bh} } \label{eqn:EE}
	\end{equation}
	where $p_a p_h \bar R_{\rm ca}$ and $p_a (1-p_h) \bar R_{\rm bh}$ are the average sum rates of the cache-hit
	and cache-miss users of each cell,
	$p_a P_{a}+ (1-p_a) P_{i}$, $w_{\rm ca}N_cF$ and
	$p_a w_{\rm bh}  (1-p_h) \bar R_{\rm bh}$ are  the average powers consumed for transmission
	and circuits,  caching, and
	backhauling of each BS,  respectively.
	
	Given that the caches in the network somewhat play a role of replacing the backhaul
	links, and the transmit power affects both the throughput and the total power
	consumption, the cache capacity $N_cF$, backhaul capacity $C_{\rm bh}$, and the transmit
	power of each BS $P$ have an interactive impact on the EE. In what follows, we
	separately analyze the relation between the network EE and cache capacity or transmit
	power for a given backhaul capacity. To simplify the analysis, we only consider the case
	where $\delta=1$ in the following. The impact of other values of $\delta$ will be
	evaluated later by simulations.
	
	\subsection{Relation Between Network EE and Cache Capacity} \label{subsec:C}
	With given backhaul capacity and transmit power, we first answer the following question:
	\emph{whether caching at the BSs can always improve the network EE?}
	\begin{proposition}
		When the following condition holds,
		\begin{equation}
		w_{\rm ca} F \sum_{j = 1}^{N_f} j^{-1} < \left(\frac{\bar R_{\rm ca}}
		{\bar R_{\rm bh}} - 1\right) \left(p_a P_a + \left(1 -
		p_a\right) P_i\right)
		+ p_a w_{\rm bh} \bar R_{\rm ca}  \label{eqn:caching}
		\end{equation}
		caching can improve the network EE. Otherwise, caching can not improve the
		EE.
	\end{proposition}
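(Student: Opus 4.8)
The plan is to compare the approximate EE in \eqref{eqn:EE} when $N_c\ge 1$ contents are cached with the EE of the same network when nothing is cached (so $N_c=0$, hence $p_h=0$ and the caching‑power term vanishes), and to show that ``caching can improve the network EE'' --- i.e.\ that \emph{some} $N_c\ge 1$ gives a strictly larger EE --- is equivalent to \eqref{eqn:caching}. Throughout, I would use that $\bar R_{\rm ca}$ and $\bar R_{\rm bh}$ in \eqref{eqn:Rca1}--\eqref{eqn:Rbh1} do not depend on $N_c$, that $\bar R_{\rm ca}>\bar R_{\rm bh}>0$ (immediate from \eqref{eqn:Rca1}--\eqref{eqn:Rbh1}), and that with $\delta=1$ the hit ratio in \eqref{cachhitratio} becomes $p_h=H_{N_c}/H_{N_f}$, where $H_n\triangleq\sum_{j=1}^{n}j^{-1}$. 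I also abbreviate $P_0\triangleq p_aP_a+(1-p_a)P_i$.

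First I would write $EE(0)=p_a\bar R_{\rm bh}/(P_0+p_aw_{\rm bh}\bar R_{\rm bh})$ and $EE(N_c)$ from \eqref{eqn:EE}, and form $EE(N_c)-EE(0)$ over a common denominator. Both denominators are strictly positive, so the sign of the difference equals the sign of its numerator $N$. Expanding $N$, the $\bar R_{\rm bh}^2$ terms cancel outright, and factoring $p_a\bar R_{\rm bh}$ from what remains leaves $N=p_a\bar R_{\rm bh}\big(p_h[(\bar R_{\rm ca}/\bar R_{\rm bh}-1)P_0+p_aw_{\rm bh}\bar R_{\rm ca}]-w_{\rm ca}N_cF\big)$. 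Since $\bar R_{\rm ca}>\bar R_{\rm bh}$, dividing $N>0$ by $p_h\,\bar R_{\rm bh}>0$ shows that $EE(N_c)>EE(0)$ if and only if
\[ \Big(\frac{\bar R_{\rm ca}}{\bar R_{\rm bh}}-1\Big)P_0 + p_aw_{\rm bh}\bar R_{\rm ca} \;>\; \frac{w_{\rm ca}N_cF}{p_h} \;=\; w_{\rm ca}F\,H_{N_f}\,\frac{N_c}{H_{N_c}} . \]

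It then remains to optimize over $N_c$. The right‑hand side is increasing in $N_c$, because $N_c/H_{N_c}$ is strictly increasing on $\{1,2,\dots\}$: this follows from $(n+1)H_n-nH_{n+1}=H_n-\frac{n}{n+1}>0$ since $H_n\ge 1$. Hence the inequality is easiest to meet at $N_c=1$ (caching the single most popular content), where the right‑hand side equals $w_{\rm ca}F\,H_{N_f}=w_{\rm ca}F\sum_{j=1}^{N_f}j^{-1}$. Consequently there exists $N_c\ge 1$ with $EE(N_c)>EE(0)$ if and only if \eqref{eqn:caching} holds; and if \eqref{eqn:caching} fails, then $EE(N_c)\le EE(0)$ for every $N_c\ge 1$, so caching cannot improve the EE. This is exactly the statement of the proposition.

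I expect the main obstacle to be the monotonicity of $N_c/H_{N_c}$ together with the conceptual point that the right comparison is ``cache one content versus cache nothing'': the achievable EE gain is governed by the hit ratio $p_h$, which under $\delta=1$ rises fast with the first few cached contents and then flattens (cf.\ the throughput--memory tradeoff in the remark), whereas the caching cost grows linearly in $N_c$, so the marginal benefit per cached content is largest at $N_c=1$ --- that is where caching has its best chance of being worthwhile. The algebraic reduction of the numerator $N$ is routine but must be done carefully so the $\bar R_{\rm bh}^2$ terms cancel and the factor $p_a\bar R_{\rm bh}$ comes out cleanly.
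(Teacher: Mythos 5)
Your proposal is correct and follows essentially the same route as the paper's Appendix D: compare the EE of \eqref{eqn:EE} against the no-caching EE, reduce the sign of the difference to the condition at $N_c=1$ via the $\delta=1$ hit ratio, and rule out larger $N_c$ by a harmonic-sum comparison (your monotonicity of $N_c/H_{N_c}$ plays exactly the role of the paper's inequality $N_c>\sum_{f=1}^{N_c}f^{-1}$). Your explicit cancellation yielding $N=p_a\bar R_{\rm bh}\bigl(p_h\bigl[(\bar R_{\rm ca}/\bar R_{\rm bh}-1)(p_aP_a+(1-p_a)P_i)+p_aw_{\rm bh}\bar R_{\rm ca}\bigr]-w_{\rm ca}N_cF\bigr)$ is in fact cleaner than the paper's intermediate display and lands directly on \eqref{eqn:caching}.
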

	\begin{proof}
		See Appendix D.
	\end{proof}
	
	To help understand this condition, we consider two extreme cases in the following corollary.
	\begin{corollary}
		When $C_{\rm bh} = 0$, caching at BSs can always improve the network EE. When $C_{\rm bh} \to \infty$, the condition in \eqref{eqn:caching} becomes,
		\begin{equation}
		\frac{p_a w_{\rm bh} \bar R_{\rm ca}}{w_{\rm ca} F}
		> \sum_{j = 1}^{N_f} j^{-1} \approx \ln N_f \label{eqn:caching2}
		\end{equation}
		
	\end{corollary}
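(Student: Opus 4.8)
The plan is to start from the condition \eqref{eqn:caching} of Proposition~1 and examine its two limiting regimes in $C_{\rm bh}$. First I would handle the case $C_{\rm bh}=0$: from \eqref{eqn:Rbh1}, when $C_{\rm bh}=0\le \bar R_{\rm e}$ we have $\bar R_{\rm bh}=C_{\rm bh}=0$, so the left-hand side of \eqref{eqn:caching} stays finite while the right-hand side contains the factor $\bar R_{\rm ca}/\bar R_{\rm bh}-1$, which diverges to $+\infty$ as $\bar R_{\rm bh}\to 0$ (note $\bar R_{\rm ca}=\frac{\alpha B}{2\ln 2}+\bar R_{\rm e}>0$). Hence the inequality is satisfied automatically, and caching always improves the EE. One should be slightly careful here that the remaining term $p_a w_{\rm bh}\bar R_{\rm ca}$ is nonnegative and the power term $p_a P_a+(1-p_a)P_i$ is strictly positive, so the divergence is genuinely to $+\infty$ and not an indeterminate form; I would note this explicitly.

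Next, for $C_{\rm bh}\to\infty$ I would use the second branch of \eqref{eqn:Rbh1}: since eventually $C_{\rm bh}>\bar R_{\rm e}$, we have $\bar R_{\rm bh}=\frac{\alpha B}{2\ln 2}+\bar R_{\rm e}-\frac{\alpha B}{2\ln 2}2^{-2(C_{\rm bh}-\bar R_{\rm e})/(\alpha B)}$, and the exponential term vanishes as $C_{\rm bh}\to\infty$, so $\bar R_{\rm bh}\to \frac{\alpha B}{2\ln 2}+\bar R_{\rm e}=\bar R_{\rm ca}$. Substituting this limit into \eqref{eqn:caching}: the factor $\bar R_{\rm ca}/\bar R_{\rm bh}-1\to 0$, so the first term on the right-hand side vanishes, leaving the condition $w_{\rm ca}F\sum_{j=1}^{N_f}j^{-1}< p_a w_{\rm bh}\bar R_{\rm ca}$, which rearranges to \eqref{eqn:caching2}. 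The harmonic-sum approximation $\sum_{j=1}^{N_f}j^{-1}\approx\ln N_f$ is the standard one (exact up to the Euler--Mascheroni constant and $O(1/N_f)$), consistent with the $\delta=1$ assumption already in force from the preceding paragraph, where the Zipf normalizer is exactly this harmonic sum.

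The main subtlety — and the step I would be most careful about — is the interchange of the limit $C_{\rm bh}\to\infty$ with the statement of the condition: one must check that $\bar R_{\rm ca}$ and the power terms $P_a,P_i$ do not themselves depend on $C_{\rm bh}$ (they do not, by their definitions in \eqref{eqn:Rca1} and \eqref{TXCIR}), so that only $\bar R_{\rm bh}$ carries the $C_{\rm bh}$-dependence and the limit is clean. A secondary point is to confirm that in the $C_{\rm bh}=0$ case the condition derived in Proposition~1 is still the operative one (i.e.\ that \eqref{eqn:caching} was derived without implicitly assuming $C_{\rm bh}>0$); assuming Appendix~D's derivation is valid for all $C_{\rm bh}\ge 0$, the limiting argument above goes through. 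Everything else is routine substitution and a well-known asymptotic estimate, so I do not expect real difficulty beyond bookkeeping.
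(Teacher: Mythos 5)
Your proposal is correct and follows essentially the same route as the paper's proof: for $C_{\rm bh}=0$ the condition \eqref{eqn:caching} holds trivially (equivalently, $\bar R_{\rm bh}=0$ makes $EE_{\rm no}=0$ while caching yields positive throughput), and for $C_{\rm bh}\to\infty$ one substitutes $\lim_{C_{\rm bh}\to\infty}\bar R_{\rm bh}=\bar R_{\rm ca}$ from \eqref{eqn:Rbh1} so that the first term on the right of \eqref{eqn:caching} vanishes, then applies $\sum_{j=1}^{N_f}j^{-1}=\varepsilon+\ln N_f+\mathcal{O}(1/N_f)$. Your added care about which quantities depend on $C_{\rm bh}$ is a fair (if routine) bookkeeping point that the paper leaves implicit.
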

	\begin{proof}
		When $C_{\rm bh} = 0$, it is easy to see that \eqref{eqn:caching} always holds.
		When $C_{\rm bh} \to \infty$, it is shown from \eqref{eqn:Rca1} and \eqref{eqn:Rbh1}
		that $ \lim\limits_{C_{\rm bh}\to \infty} \bar R_{\rm bh} =  \bar R_{\rm ca}$. Then, by
		substituting $\bar R_{\rm bh} = \bar R_{\rm ca}$ and using $\sum_{j = 1}^{N_f}
		j^{-1}=\varepsilon + \ln N_f +
		\mathcal{O}(\frac{1}{N_f})$ with $\varepsilon \approx 0.577$ as the Euler-Mascheroni constant, \eqref{eqn:caching} becomes \eqref{eqn:caching2} and the approximation is accurate when $N_f \gg 1 $.
	\end{proof}
	\begin{remark}
		In \eqref{eqn:caching2}, $p_a w_{\rm bh} \bar R_{\rm ca}$ is the average backhaul power consumption of each BS
		without caching, and $w_{\rm ca} F$ is the average cache power consumption of each BS when only the
		most popular content is cached at each BS.
		This suggests that whether caching benefits EE largely depends on the power consumption
		parameters for the cache and backhaul hardwares.
	\end{remark}

	

	In what follows, we consider the scenario where the condition holds, and strive to answer the second question: \emph{what is the relation between
		maximal EE of the network and the cache size?} To this end, we first provide the cache
	hit ratio $p_h$ for large values of $N_c$ and $N_f$. To reflect the impact of the content
	catalog size $N_f$, we analyze a normalized cache capacity $\eta \triangleq N_c/N_f$, $\eta \in
	[0,1]$. Then, from \eqref{cachhitratio} we can derive
	\begin{equation}
	p_h \!=\! \frac{\sum_{f=1}^{N_c} f^{-1}}{\sum_{j=1}^{N_f} j^{-1}}
	= \frac{\varepsilon + \ln N_c + \mathcal{O}(\frac{1}{N_c})}{\varepsilon + \ln N_f +
		\mathcal{O}(\frac{1}{N_f})} \! \approx \! \frac{\ln N_c }{\ln N_f}
	= 1 + \frac{\ln \eta}{\ln N_f} \label{eqn:scaling law}
	\end{equation}
	where the approximation in \eqref{eqn:scaling law} is accurate when $N_c\gg 1$ and $N_f
	\gg 1$.
	
	By substituting \eqref{eqn:scaling law} into \eqref{eqn:EE}, we can approximate the network EE as
	\begin{equation}
	EE \!\approx \!
	\frac{ p_a \left( \bar R_{\rm bh} + (
		\bar R_{\rm ca} - \bar R_{\rm bh})\left(1 + \frac{\ln \eta}{\ln N_f} \right) \right)}
	{p_a P_{a} \!+\! (1 \!- \!p_a) P_{i}
		\! + \! w_{\rm ca} \eta N_f F \! - \! p_a w_{\rm bh} \bar R_{\rm bh}
		\frac{\ln \eta}{\ln N_f}  } \label{eqn:EE-C}
	\end{equation}
	Denote $W(x)$ as the Lambert-W function satisfying $W(x)e^{W(x)} = x$. Then, the relation
	between EE and cache capacity is shown in the following proposition.
	\begin{proposition}
The solution of the equation $\frac{dEE}{d\eta}\big|_{\eta = \eta_0} =
		0$ is
		\begin{equation}
		\eta_0 = \frac{\Omega}{N_fW\left(\Omega e^{- 1 + \frac{\bar R_{\rm bh} }
				{\bar R_{\rm ca} - \bar R_{\rm bh}}\ln N_f  }\right)} \label{eqn:N0}
		\end{equation}
		where
		\begin{equation}
		\Omega \triangleq \frac{ \frac{\bar R_{\rm ca}\bar R_{\rm bh}}{\bar R_{\rm ca} - \bar R_{\rm bh}}
			w_{\rm bh} p_a +  p_a
			P_{a}+ (1 - p_a) P_{i}}{ w_{\rm ca} F  }
		\label{eqn:omega}
		\end{equation}
When $\eta_0 <1$, the EE-maximal normalized cache capacity is $\eta^* = \eta_0$. When $\eta_0 \geq 1$, $\eta^* = 1$.		
	\end{proposition}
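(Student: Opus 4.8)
The plan is to treat the approximate EE in \eqref{eqn:EE-C} as a ratio $EE(\eta)=A(\eta)/D(\eta)$, where (after simplifying $\bar R_{\rm bh}+(\bar R_{\rm ca}-\bar R_{\rm bh})(1+\frac{\ln\eta}{\ln N_f})=\bar R_{\rm ca}+(\bar R_{\rm ca}-\bar R_{\rm bh})\frac{\ln\eta}{\ln N_f}$) the numerator $A(\eta)=a_0+a_1\ln\eta$ is affine in $\ln\eta$ with $a_0=p_a\bar R_{\rm ca}$, $a_1=p_a(\bar R_{\rm ca}-\bar R_{\rm bh})/\ln N_f$, and the denominator $D(\eta)=d_0+d_1\eta+d_2\ln\eta$ with $d_0=p_aP_{a}+(1-p_a)P_{i}$, $d_1=w_{\rm ca}N_fF$, $d_2=-p_aw_{\rm bh}\bar R_{\rm bh}/\ln N_f$; here $a_1,d_1>0$, $d_2<0$, and $\bar R_{\rm ca}>\bar R_{\rm bh}$ by \eqref{eqn:Rca1}--\eqref{eqn:Rbh1}. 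Since $D(\eta)>0$, the sign of $\frac{dEE}{d\eta}=(A'D-AD')/D^2$ equals that of $G(\eta)\triangleq\eta(A'D-AD')=a_1D(\eta)-A(\eta)(d_1\eta+d_2)$, and the stationarity condition $\frac{dEE}{d\eta}\big|_{\eta_0}=0$ is $G(\eta_0)=0$. Expanding $G$ I would observe that the two terms of the form $a_1d_2\ln\eta$ cancel, leaving $G(\eta)=a_1d_0-a_0d_2+(a_1-a_0)d_1\eta-a_1d_1\eta\ln\eta$.

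To solve $G(\eta_0)=0$, I would divide by $a_1d_1$ and rearrange to $\eta\big(\ln\eta-1+\frac{a_0}{a_1}\big)=\frac{a_1d_0-a_0d_2}{a_1d_1}$. Using $\frac{a_0}{a_1}=\frac{\bar R_{\rm ca}}{\bar R_{\rm ca}-\bar R_{\rm bh}}\ln N_f=\big(1+\frac{\bar R_{\rm bh}}{\bar R_{\rm ca}-\bar R_{\rm bh}}\big)\ln N_f$ and $\ln\eta+\ln N_f=\ln(\eta N_f)$, this becomes $\eta\big(\ln(\eta N_f)+\theta\big)=\Omega/N_f$ with $\theta\triangleq-1+\frac{\bar R_{\rm bh}}{\bar R_{\rm ca}-\bar R_{\rm bh}}\ln N_f$; the bulk of the work is checking that $\frac{N_f(a_1d_0-a_0d_2)}{a_1d_1}$ collapses to exactly the $\Omega$ of \eqref{eqn:omega} — in particular that the term $-a_0d_2$ supplies the summand $\frac{\bar R_{\rm ca}\bar R_{\rm bh}}{\bar R_{\rm ca}-\bar R_{\rm bh}}w_{\rm bh}p_a$ — but that is routine. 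Substituting $u=\eta N_f$ and then $v=\ln u+\theta$, the equation $u(\ln u+\theta)=\Omega$ becomes $v\,e^{v}=\Omega e^{\theta}$, so by definition of the Lambert-W function $v=W(\Omega e^{\theta})$; since $uv=\Omega$ this yields $u=\Omega/W(\Omega e^{\theta})$ and hence $\eta_0=u/N_f=\Omega/\big(N_fW(\Omega e^{\theta})\big)$, which is \eqref{eqn:N0}.

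To determine the maximizer over $[0,1]$, I would analyze the sign of $G$. From the simplified form, $G'(\eta)=-d_1(a_0+a_1\ln\eta)=-w_{\rm ca}N_fF\,A(\eta)$, which is negative on the physically meaningful range $\eta\in[N_f^{-1},1]$ (where $N_c\ge1$ and the approximate throughput $A(\eta)>0$), so $G$ is strictly decreasing there; moreover $G(\eta)\to a_1d_0-a_0d_2>0$ as $\eta\to0^{+}$. Hence $G$ changes sign at most once, exactly at $\eta_0$, so $EE(\eta)$ is unimodal: increasing up to $\eta_0$ and decreasing afterwards. Therefore, if $\eta_0<1$ the maximizer is the interior stationary point $\eta^{*}=\eta_0$, and if $\eta_0\ge1$ — equivalently $G(1)\ge0$, i.e.\ $EE$ increasing on the whole interval — the maximizer is the boundary point $\eta^{*}=1$.

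The step I expect to be the main obstacle is the algebraic reduction in the second paragraph: carrying the $1/\ln N_f$ factors correctly through $\frac{N_f(a_1d_0-a_0d_2)}{a_1d_1}$ and confirming it equals $\Omega$, while the leftover additive constant combines with $\ln N_f$ to give exactly the exponent $-1+\frac{\bar R_{\rm bh}}{\bar R_{\rm ca}-\bar R_{\rm bh}}\ln N_f$. Once the reduced equation $u(\ln u+\theta)=\Omega$ is in place, the Lambert-W substitution and the unimodality argument are entirely mechanical.
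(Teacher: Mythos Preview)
Your proposal is correct and follows essentially the same route as the paper's proof in Appendix~E: differentiate the approximate EE in \eqref{eqn:EE-C}, reduce the stationarity condition to an equation of the form $u(\ln u+\theta)=\Omega$ (equivalently the paper's $\frac{\Omega}{\eta_0 N_f}+\ln\frac{1}{\eta_0 N_f}=\theta$), and solve via the Lambert-$W$ function. Your presentation is more explicit than the paper's---you carry the coefficients $a_i,d_i$ through and verify the cancellation of the $\ln\eta$ cross-term, and your unimodality argument via $G'(\eta)=-d_1A(\eta)<0$ is cleaner than the paper's bare assertion that $\frac{\Omega}{\eta N_f}+\ln\frac{\Omega}{\eta N_f}$ is decreasing---but the underlying idea is identical.
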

	\begin{proof}
		See Appendix E.
	\end{proof}
	\begin{remark}
     If $\eta_0 < 1$, the EE will first increase and then decrease with the cache capacity. Otherwise, if
		$\eta_0\geq1$, the EE will be maximized when all contents in the catalog are cached at each
		BS,  i.e., there is a \emph{tradeoff between the maximal EE and the cache size}.
	\end{remark}
	
	To understand when the EE-memory tradeoff exists, we rewrite \eqref{eqn:N0} in a form of
	$\frac{x}{W(x)}$  as
	\begin{equation}
	\eta_0 = \frac{\Omega e^{- 1 + \frac{\bar R_{\rm bh} }
			{\bar R_{\rm ca} - \bar R_{\rm bh}}\ln N_f  } }
	{W\left(\Omega e^{- 1 + \frac{\bar R_{\rm bh} }
			{\bar R_{\rm ca} - \bar R_{\rm bh}}\ln N_f  }\right)} \cdot
	\frac{e^{1- \frac{\bar R_{\rm bh} }
			{\bar R_{\rm ca} - \bar R_{\rm bh}}\ln N_f  }}{N_f}
	\end{equation}
	
	As shown in \eqref{eqn:omega}, $\Omega$ increases with the average power consumed for transmission and circuits  at each BS $p_a P_{a}+ (1-p_a) P_{i}$ and the backhaul power
	coefficient $w_{\rm bh}$, and decreases with the content size $F$ and cache power
	coefficient $w_{\rm ca}$. Further considering that $\frac{x}{W(x)}$ is an increasing function of $x$
	\cite{corless1996lambertw}, $\eta_0$ increases with $p_a P_{a}+ (1-p_a)
	P_{i}$ and $w_{\rm bh}$, and decreases with $F$ and $w_{\rm ca}$. Moreover, it is shown from \eqref{eqn:N0} that $\eta_0$ increases when the
		content catalog size $N_f$ decreases since $W(x)$ an increasing function of $x$ \cite{corless1996lambertw}.
	
	\begin{remark}
		$\eta_0\geq1$ for the systems with high transmit power, large circuit and backhauling power consumptions, power-saving caching hardware, small content size $F$ and small catalog size $N_f$. Otherwise, $\eta_0 < 1$, where caching more contents is not always energy
		efficient.
	\end{remark}
	
	To further identify the key impacting factors on network EE and gain useful insight on network configuration, in what follows we consider the case when backhaul capacity is
	unlimited.
	\subsubsection{An Extreme Case of $C_{\rm bh}\to \infty$} In this case, $ \lim\limits_{C_{\rm bh}\to \infty} \bar R_{\rm bh} =  \bar
	R_{\rm ca}$. Then, the network EE in \eqref{eqn:EE-C} can be expressed as
	\begin{align}
	EE & \approx \frac{ p_a \bar R_{\rm ca} }
	{p_a P_{a}\!+ \!(1\!-\!p_a) P_{i}\! +\! w_{\rm ca}\eta N_f F\! -\!
		p_a w_{\rm bh} \frac{\ln \eta}{\ln N_f} \bar R_{\rm ca} } \nonumber \\
	& = \frac{ p_a \bar R_{\rm ca} }
	{p_a P_{a}\!+ \!(1\!-\!p_a) P_{i}\! + \bar P_{\rm ca} + \bar P_{\rm bh} }
	\label{eqn:EEinftyCbh}
	\end{align}
	\begin{remark}
		In \eqref{eqn:EEinftyCbh}, only the powers consumed for caching and backhauling depend on $\eta$.  Because
		$\bar P_{\rm ca}$ increases with $\eta$ linearly, while $\bar P_{\rm bh}$ decreases with
		$\eta$ first rapidly and then slowly, the total power consumption first increases and
		then decreases with $\eta$. Hence, the relation between network EE and cache capacity relies on the trade-off between backhauling and caching powers.
	\end{remark}
	
	From \eqref{eqn:EEinftyCbh} and considering the expression of $\bar R_{\rm ca}$ in \eqref{eqn:Rca1}, we obtain the following corollary.
	\begin{corollary}
		When $C_{\rm bh} \to \infty$, the solution of the equation $\frac{dEE}{d\eta}\big|_{\eta = \eta_0} = 0$ is
		\begin{equation}
		\eta_0 \!=\!
		p_a\cdot \frac{w_{\rm bh}}{w_{\rm ca}}\cdot \frac{B}{F} \cdot
		\frac{1}{ N_f \!\ln N_f}\!\! \left(\!\frac{\alpha}{2 \ln 2} \!+\!
		\log_2 \!\frac{N_t}{p_a\beta 2^{\Phi} \!+\!
			\left(\!\frac{P}{D^\alpha \sigma^2}\!\right)^{-1}}\!\right) \!\!
		\label{eqn:N0inftyCbh}
		\end{equation}
		where $\Phi$ is the constant only depending on $\alpha$,  and
		$\frac{P}{D^{\alpha}\sigma^2}$ is the average cell-edge signal-to-noise-ratio (SNR).
	\end{corollary}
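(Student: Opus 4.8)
The plan is to exploit that, once $C_{\rm bh}\to\infty$, the numerator of the EE expression in \eqref{eqn:EEinftyCbh} ceases to depend on $\eta$. Indeed $\lim_{C_{\rm bh}\to\infty}\bar R_{\rm bh}=\bar R_{\rm ca}$ by \eqref{eqn:Rbh1}, and $\bar R_{\rm ca}$ in \eqref{eqn:Rca1} involves only $N_t$, $P$, $D$, $\sigma^2$, $\alpha$, $\beta$ and $p_a$, so $p_a\bar R_{\rm ca}$ is a positive constant in $\eta$. Hence maximizing $EE$ over $\eta\in(0,1]$ amounts to minimizing the denominator
\[
g(\eta)\triangleq p_aP_a+(1-p_a)P_i+w_{\rm ca}\eta N_fF-p_aw_{\rm bh}\frac{\ln\eta}{\ln N_f}\bar R_{\rm ca},
\]
and, since $g(\eta)>0$ throughout (the last term is nonnegative for $\eta\leq1$), the condition $\frac{dEE}{d\eta}=0$ is equivalent to $g'(\eta)=0$.

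Next I would differentiate: only the last two summands of $g$ depend on $\eta$, giving
\[
g'(\eta)=w_{\rm ca}N_fF-\frac{p_aw_{\rm bh}\bar R_{\rm ca}}{\eta\ln N_f},\qquad g''(\eta)=\frac{p_aw_{\rm bh}\bar R_{\rm ca}}{\eta^2\ln N_f}>0.
\]
Strict convexity shows $g$ has a unique stationary point, which is its minimizer; solving $g'(\eta_0)=0$ gives
\[
\eta_0=\frac{p_aw_{\rm bh}\bar R_{\rm ca}}{w_{\rm ca}N_fF\ln N_f}.
\]
As in Proposition~2, if this value is $\geq1$ the minimum of $g$ over $(0,1]$ is attained at $\eta=1$.

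Finally I would insert the closed form of $\bar R_{\rm ca}$. Combining \eqref{eqn:Rca1} with the expression of $\bar R_{\rm e}$ stated just below it, $\bar R_{\rm ca}\approx\frac{\alpha B}{2\ln2}+B\log_2\frac{N_tP}{p_a\beta P2^{\Phi}+D^\alpha\sigma^2}$; dividing numerator and denominator inside the logarithm by $P$ rewrites its argument as $\frac{N_t}{p_a\beta2^{\Phi}+(P/(D^\alpha\sigma^2))^{-1}}$, so that $\bar R_{\rm ca}\approx B\big(\frac{\alpha}{2\ln2}+\log_2\frac{N_t}{p_a\beta2^{\Phi}+(P/(D^\alpha\sigma^2))^{-1}}\big)$. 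Substituting this into the above expression for $\eta_0$ and grouping the factors $p_a$, $w_{\rm bh}/w_{\rm ca}$, $B/F$ and $1/(N_f\ln N_f)$ yields exactly \eqref{eqn:N0inftyCbh}.

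The argument is entirely elementary, so there is no genuine obstacle; the point worth noticing is that the $C_{\rm bh}\to\infty$ limit makes the numerator of $EE$ constant, which is precisely why the Lambert-W function appearing in the general Proposition~2 drops out here. One could alternatively take $C_{\rm bh}\to\infty$ directly in \eqref{eqn:N0} using $W(x)\sim\ln x$ as $x\to\infty$, but that route passes through an indeterminate limit in both $\Omega$ and the exponent $\frac{\bar R_{\rm bh}}{\bar R_{\rm ca}-\bar R_{\rm bh}}\ln N_f$, so the direct computation above is cleaner.
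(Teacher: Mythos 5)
Your proposal is correct and follows essentially the same route the paper takes (the paper gives no separate appendix proof, only the remark that the corollary follows by differentiating \eqref{eqn:EEinftyCbh} using $\bar R_{\rm ca}$ from \eqref{eqn:Rca1}): since the numerator of \eqref{eqn:EEinftyCbh} is independent of $\eta$, setting the derivative of the denominator to zero yields $\eta_0 = \frac{p_a w_{\rm bh}\bar R_{\rm ca}}{w_{\rm ca}N_f F\ln N_f}$, which is exactly \eqref{eqn:N0inftyCbh}. Your added observations (positivity and strict convexity of the denominator, hence uniqueness of the stationary point) are correct and only strengthen the argument.
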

	\begin{remark}
		As shown in \eqref{eqn:N0inftyCbh}, $\eta_0$ increases with $N_t$ and $P$. This suggests that BS with larger number of antennas and transmit power should cache more to achieve the maximal EE.
	\end{remark}

	According to Proposition 2, when $\eta_0 \geq 1$, there exists a trade-off between EE
	and $\eta$. Considering that  $y = x \ln x $ can be rewritten as $x = e^{W(y)}$, from
	$\eta_0 \geq 1$ and \eqref{eqn:N0inftyCbh} we can obtain the following corollary.
	\begin{corollary}
		When $C_{\rm bh} \to \infty$, there exists a trade-off between EE and $\eta$ if  $N_f \leq N_{th}$, where
		\begin{align}
		N_{th} & = e^{W \left( p_a \cdot \frac{w_{\rm bh}}{w_{\rm ca}} \cdot
			\frac{B}{F} \left(\frac{\alpha}{2 \ln 2}
			+ \log_2 \frac{N_t}{p_a \beta  2^{\Phi} +  (P/D^\alpha \sigma^2)^{-1}}\right)\right)} \nonumber\\
		&= e^{W \big( p_a \cdot \frac{w_{\rm bh}}{w_{\rm ca}} \cdot   \frac{\bar R_{\rm ca}}{F}\big)} \label{eqn:tradeoff}
		\end{align}
	\end{corollary}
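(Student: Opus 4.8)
The plan is to reduce the claim to a one-line manipulation of the closed-form expression for $\eta_0$ already obtained in Corollary 3. By the remark following Proposition 2 (and the sentence immediately preceding Corollary 3), in the regime $C_{\rm bh}\to\infty$ a trade-off between EE and $\eta$ exists precisely when $\eta_0\geq 1$, with $\eta_0$ given by \eqref{eqn:N0inftyCbh}. Hence the entire content of Corollary 4 is to solve the inequality $\eta_0\geq 1$ for $N_f$ and exhibit the threshold in closed form.

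First I would abbreviate the $N_f$-independent part of \eqref{eqn:N0inftyCbh} as
\[
A \triangleq p_a\cdot\frac{w_{\rm bh}}{w_{\rm ca}}\cdot\frac{B}{F}\left(\frac{\alpha}{2\ln2}+\log_2\frac{N_t}{p_a\beta 2^{\Phi}+(P/D^\alpha\sigma^2)^{-1}}\right),
\]
so that \eqref{eqn:N0inftyCbh} reads $\eta_0 = A/(N_f\ln N_f)$. Dividing numerator and denominator of $\bar R_{\rm e}$ in \eqref{eqn:Rca1} by $P$ shows $\bar R_{\rm e}=B\log_2\frac{N_t}{p_a\beta 2^{\Phi}+(P/D^\alpha\sigma^2)^{-1}}$, hence $\frac{\alpha B}{2\ln2}+\bar R_{\rm e}=\bar R_{\rm ca}$ and $A = p_a\frac{w_{\rm bh}}{w_{\rm ca}}\frac{\bar R_{\rm ca}}{F}$; this is what turns the first form of $N_{th}$ in \eqref{eqn:tradeoff} into the second.

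Next, $\eta_0\geq 1$ is equivalent to $N_f\ln N_f\leq A$. Since $g(x)=x\ln x$ has $g'(x)=\ln x+1>0$ for $x>1$, it is strictly increasing on $(1,\infty)$, and we are in the regime $N_f\gg 1$; therefore $g(N_f)\leq A$ is equivalent to $N_f\leq N_{th}$, where $N_{th}$ is the unique root of $g(N_{th})=A$, i.e. $N_{th}\ln N_{th}=A$. Finally I would solve this transcendental equation by substituting $x=\ln N_{th}$, which gives $xe^{x}=A$; by the defining identity $W(x)e^{W(x)}=x$ we get $x=W(A)$, and thus $N_{th}=e^{W(A)}$, which is exactly \eqref{eqn:tradeoff}.

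There is essentially no hard step; the only points needing care are (i) that $x\ln x$ is monotone only on $(1/e,\infty)$, so the reduction to the clean statement $N_f\leq N_{th}$ relies on the standing assumption $N_f\gg 1$, and (ii) that the principal branch of $W$ is the relevant one, which is automatic since $A>0$ forces $W(A)>0$ and hence $N_{th}=e^{W(A)}>1$. The identification $A=p_a\frac{w_{\rm bh}}{w_{\rm ca}}\frac{\bar R_{\rm ca}}{F}$ is the only algebraic simplification, and it follows immediately from the definition of $\bar R_{\rm ca}$ in \eqref{eqn:Rca1}.
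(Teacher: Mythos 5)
Your proposal is correct and follows essentially the same route as the paper: the trade-off condition is $\eta_0\geq 1$, which with \eqref{eqn:N0inftyCbh} reduces to $N_f\ln N_f\leq p_a\frac{w_{\rm bh}}{w_{\rm ca}}\frac{\bar R_{\rm ca}}{F}$, and the substitution $x=\ln N_{th}$ together with $W(x)e^{W(x)}=x$ yields $N_{th}=e^{W(\cdot)}$, exactly as the paper's one-line derivation via ``$y=x\ln x \Leftrightarrow x=e^{W(y)}$''. Your added care about the monotonicity of $x\ln x$ and the choice of the principal branch of $W$ only makes explicit what the paper leaves implicit.
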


	\begin{remark}
		As shown in \eqref{eqn:tradeoff}, when the average cell-edge SNR is high, the interference level $\beta$ dominates the value of
		$\bar R_{\rm ca}$. If the interference can be reduced to a low level, $\bar R_{\rm ca}$
		will increase and the value of $N_{th}$ will be large, and then the EE-memory trade-off
		will exist even for a large content catalog size.
	\end{remark}

	Again according to Proposition 2, when $\eta_0 < 1$, the EE optimal normalized cache capacity  is $\eta^* = \eta_0$.
	From \eqref{eqn:N0inftyCbh}, we can further analyze the impact of network density.
	\begin{corollary}
		When $C_{\rm bh} \to \infty$, for a given total coverage area of the cells $N_b \pi D^2$, $\eta^* = \eta_0$ decreases with $N_b$, and  $N_b \eta$ increases with $N_b$ for $\frac{\lambda}{N_b} \to 0$.
	\end{corollary}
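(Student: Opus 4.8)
The plan is to substitute the area constraint into \eqref{eqn:N0inftyCbh}, reduce every quantity to a single scalar variable, and analyze the sign of one derivative. Fixing the total coverage area $S\triangleq N_b\pi D^2$ forces $D^{\alpha}=\bigl(S/(\pi N_b)\bigr)^{\alpha/2}$, which is strictly decreasing in $N_b$, whereas $N_f, N_t, P, \sigma^2, \beta, \Phi, w_{\rm bh}, w_{\rm ca}, F, B, \alpha$ do not depend on $N_b$. Comparing \eqref{eqn:N0inftyCbh} with the expression of $\bar R_{\rm ca}$ in \eqref{eqn:Rca1} and \eqref{eqn:Rca final} gives the compact form $\eta_0 = K_0\, p_a\, \bar R_{\rm ca}$ with $K_0\triangleq w_{\rm bh}/(w_{\rm ca}F N_f\ln N_f)>0$. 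Setting $x\triangleq\lambda/N_b$ (so that ``$N_b$ increasing'' means ``$x$ decreasing'' and the hypothesis $\lambda/N_b\to0$ means $x\to0^{+}$), we have $p_a=1-e^{-x}$ and $(P/D^{\alpha}\sigma^2)^{-1}=D^{\alpha}\sigma^2/P=c\,x^{\alpha/2}$ for a positive constant $c$, so that $\bar R_{\rm ca}=\kappa-B\log_2\delta(x)$ for a constant $\kappa$ independent of $N_b$, with $\delta(x)\triangleq(1-e^{-x})\beta 2^{\Phi}+c\,x^{\alpha/2}$. Since $\delta$ is strictly increasing on $(0,\infty)$ with $\delta(0^{+})=0$, the factor $p_a$ is strictly decreasing in $N_b$ while $\bar R_{\rm ca}$ is strictly increasing in $N_b$ with $\bar R_{\rm ca}\to+\infty$.

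For the first claim, I would differentiate $\eta_0=K_0 p_a\bar R_{\rm ca}$ to get $\frac{d\eta_0}{dN_b}=K_0\bigl(p_a'\,\bar R_{\rm ca}+p_a\,\bar R_{\rm ca}'\bigr)$ with $p_a'<0$, so it suffices to show $|p_a'|\,\bar R_{\rm ca}>p_a\,\bar R_{\rm ca}'$ for small $x$. One computes $|p_a'|=\frac{x}{N_b}e^{-x}$ and $\bar R_{\rm ca}'=\frac{B}{\ln 2}\cdot\frac{x\delta'(x)}{\delta(x)}\cdot\frac{1}{N_b}$, hence $\frac{|p_a'|\,\bar R_{\rm ca}}{p_a\,\bar R_{\rm ca}'}=\frac{e^{-x}\,\bar R_{\rm ca}}{(1-e^{-x})\,\frac{B}{\ln 2}\,\frac{\delta'(x)}{\delta(x)}}$. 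As $x\to0^{+}$ the numerator diverges, since $\bar R_{\rm ca}=\kappa-B\log_2\delta(x)\to+\infty$, while the denominator converges to a finite positive limit, because $(1-e^{-x})\frac{\delta'(x)}{\delta(x)}=\frac{1-e^{-x}}{x}\cdot\frac{x\delta'(x)}{\delta(x)}$, where $\frac{1-e^{-x}}{x}\to1$ and $\frac{x\delta'(x)}{\delta(x)}$ stays bounded, being a convex combination of $x/(e^{x}-1)\in(0,1)$ and $\alpha/2$ weighted by the relative sizes of the two terms that make up $\delta$. Therefore the ratio tends to $+\infty$, so $\frac{d\eta_0}{dN_b}<0$ for all sufficiently small $x$, i.e., $\eta_0$ is strictly decreasing in $N_b$ in the regime $\lambda/N_b\to0$. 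Since moreover $\eta_0=K_0 p_a\bar R_{\rm ca}\to0$ there (as $p_a\sim x\to0$ while $\bar R_{\rm ca}$ grows only logarithmically in $N_b$), eventually $\eta_0<1$ and Proposition~2 gives $\eta^{*}=\eta_0$, which is the claim.

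For the second claim, write $N_b\eta_0=K_0\,(N_b p_a)\,\bar R_{\rm ca}$. The factor $N_b p_a=N_b(1-e^{-\lambda/N_b})$ is strictly increasing in $N_b$: its derivative equals $1-(1+\lambda/N_b)e^{-\lambda/N_b}$, which is positive because $e^{t}>1+t$ for $t>0$. The factor $\bar R_{\rm ca}$ is also strictly increasing in $N_b$ and, in the regime of interest, positive, so $N_b\eta_0$ (a product of two positive strictly increasing functions) is strictly increasing in $N_b$; and in this regime $\eta_0<1$, so $N_b\eta^{*}=N_b\eta_0$ increases with $N_b$. (If instead $\eta_0\ge1$ then $\eta^{*}=1$ and $N_b\eta^{*}=N_b$ also increases, so the statement holds in any case.)

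The step I expect to be the main obstacle is the first claim, because $\eta_0$ is a product of the decreasing factor $p_a$ and the increasing, unbounded factor $\bar R_{\rm ca}$, so monotonicity is not automatic and must be certified by quantitatively comparing the two competing rates; everything hinges on the uniform bound on $\frac{x\delta'(x)}{\delta(x)}$ (this is where the path-loss exponent $\alpha$ enters), which forces $\bar R_{\rm ca}$ to grow no faster than $\ln N_b$ while $p_a$ decays like $1/N_b$, so that the decay of $p_a$ wins. Once this estimate is in hand the first claim follows, and the second is comparatively routine, since there both factors are already monotone in the same direction; the only residual care is to record that $\bar R_{\rm ca}>0$ and $\eta_0<1$ hold throughout the regime $\lambda/N_b\to0$.
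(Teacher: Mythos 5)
Your proposal is correct in substance and starts the same way as the paper's Appendix F --- fix $N_b \pi D^2 = c$, substitute $D = (c/(\pi N_b))^{1/2}$ and $p_a = 1 - e^{-\lambda/N_b}$ into \eqref{eqn:N0inftyCbh} --- but then diverges in method. The paper differentiates \eqref{eqn:N0inftyCbh} directly in $N_b$ and observes that the resulting bracket is a sum of nonnegative terms once $\alpha>2$ (and the cell-edge rate is positive), so $\frac{d\eta_0}{dN_b}<0$ for \emph{every} $N_b$; you instead factor $\eta_0 = K_0\, p_a\, \bar R_{\rm ca}$ and compare the decay rate of $p_a$ against the logarithmic growth of $\bar R_{\rm ca}$. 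Your rate comparison is sound, but it only certifies $\frac{d\eta_0}{dN_b}<0$ for all sufficiently small $\lambda/N_b$, which is strictly weaker than the corollary's first clause: in the paper's own proof the qualifier $\lambda/N_b \to 0$ attaches only to the second clause about $N_b\eta$, while the decrease of $\eta_0$ is established without it. To close that gap you would need a non-asymptotic version of your inequality $|p_a'|\,\bar R_{\rm ca} > p_a\,\bar R_{\rm ca}'$, which is exactly what the paper's explicit derivative formula delivers (using $\alpha>2$ and your uniform bound $\frac{x\delta'(x)}{\delta(x)} \le \alpha/2$, the inequality can in fact be verified for all $x>0$, so the repair is within reach of your setup). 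For the second clause your argument is actually tighter than the paper's: writing $N_b\eta_0 = K_0 (N_b p_a)\bar R_{\rm ca}$ and noting that $N_b(1-e^{-\lambda/N_b})$ and $\bar R_{\rm ca}$ are both positive and increasing in $N_b$ gives the monotonicity exactly, whereas the paper first replaces $p_a$ by its limit $\lambda/N_b$ and reads the monotonicity off the limiting expression; your handling of the boundary case $\eta_0 \ge 1$ (where $\eta^*=1$ and $N_b\eta^*=N_b$ trivially increases) is also a welcome addition not present in the paper.
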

	\begin{proof}
		See Appendix F.
	\end{proof}
	\begin{remark}
		Corollary 4 indicates that when the network becomes denser, each BS should cache
		less contents but the total cache capacity of the network should
		increase in order to maximize the network EE. Further considering that $\eta_0$ decreases with $N_t$
		and $P$ as mentioned in Remark 6, this implies that a pico BS should cache
		less contents than a macro BS to achieve the maximal EE.
	\end{remark}
	
	Since \eqref{eqn:N0inftyCbh} gives the optimal cache capacity maixmizing the network EE when $\eta_0 < 1$, we can further analyze the impacts of different factors on the maximal EE gain brought by caching.
	\begin{corollary}
		When $C_{\rm bh} \to \infty$ and  $\eta_0 < 1 $, the gain of maximal EE with caching over that without caching is
		\begin{equation}
		EE_{\rm gain} = \frac{1}{1 - G}  \label{eqn:gain2}
		\end{equation}
		where
		\begin{equation}
		G = \frac{\frac{1}
			{\ln N_f}\left( \ln \frac{p_a w_{\rm bh}\bar R_{\rm ca}}
			{ w_{\rm ca}  F\ln N_f} - 1 \right)}
		{\frac{p_a P_{a}+ (1-p_a) P_{i}}{p_a\bar R_{\rm ca} w_{\rm bh}}
			+ 1} \label{eqn:G}
		\end{equation}
	\end{corollary}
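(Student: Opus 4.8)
The plan is to compute $EE_{\rm gain}$ directly as the quotient of the EE at the optimal cache size and the EE with no cache, exploiting the fact that in the regime $C_{\rm bh}\to\infty$ the numerator $p_a\bar R_{\rm ca}$ of the EE expression in \eqref{eqn:EEinftyCbh} is independent of the cache size; it therefore cancels, and $EE_{\rm gain}$ reduces to the ratio of two power-consumption denominators.

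First I would write the two EE values explicitly. By Proposition 2, under $\eta_0<1$ the EE-optimal normalized cache capacity is $\eta^{*}=\eta_0$; substituting this into \eqref{eqn:EEinftyCbh} gives the maximal EE as $p_a\bar R_{\rm ca}$ over the denominator $p_a P_a+(1-p_a)P_i+w_{\rm ca}\eta_0 N_f F-p_a w_{\rm bh}\bar R_{\rm ca}\frac{\ln\eta_0}{\ln N_f}$. For the ``without caching'' baseline I cannot simply set $\eta=0$ in \eqref{eqn:EEinftyCbh} because of the $\ln\eta$ term; instead I return to \eqref{eqn:EE}, set the cache hit ratio $p_h=0$ (so $N_c=0$ and the caching power vanishes) and use $\bar R_{\rm bh}\to\bar R_{\rm ca}$ as $C_{\rm bh}\to\infty$, which yields $EE_{\rm no\,cache}=p_a\bar R_{\rm ca}/(p_a P_a+(1-p_a)P_i+p_a w_{\rm bh}\bar R_{\rm ca})$. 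Dividing the maximal EE by this, the common factor $p_a\bar R_{\rm ca}$ cancels, so $EE_{\rm gain}$ equals $p_a P_a+(1-p_a)P_i+p_a w_{\rm bh}\bar R_{\rm ca}$ divided by the denominator of the maximal EE.

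Next I would substitute the closed form of $\eta_0$ from Corollary 2. Comparing \eqref{eqn:N0inftyCbh} with $\bar R_{\rm ca}=\frac{\alpha B}{2\ln2}+\bar R_{\rm e}$ in \eqref{eqn:Rca1} shows that the bracketed factor in \eqref{eqn:N0inftyCbh} is exactly $\bar R_{\rm ca}/B$, hence $\eta_0=\frac{p_a w_{\rm bh}\bar R_{\rm ca}}{w_{\rm ca}F N_f\ln N_f}$. Consequently $w_{\rm ca}\eta_0 N_f F=\frac{p_a w_{\rm bh}\bar R_{\rm ca}}{\ln N_f}$, and since $\ln\eta_0=\ln\frac{p_a w_{\rm bh}\bar R_{\rm ca}}{w_{\rm ca}F\ln N_f}-\ln N_f$, the backhaul power term becomes $-p_a w_{\rm bh}\bar R_{\rm ca}\frac{\ln\eta_0}{\ln N_f}=p_a w_{\rm bh}\bar R_{\rm ca}-\frac{p_a w_{\rm bh}\bar R_{\rm ca}}{\ln N_f}\ln\frac{p_a w_{\rm bh}\bar R_{\rm ca}}{w_{\rm ca}F\ln N_f}$. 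Adding these two contributions, the denominator of the maximal EE collapses to $(p_a P_a+(1-p_a)P_i+p_a w_{\rm bh}\bar R_{\rm ca})-\frac{p_a w_{\rm bh}\bar R_{\rm ca}}{\ln N_f}(\ln\frac{p_a w_{\rm bh}\bar R_{\rm ca}}{w_{\rm ca}F\ln N_f}-1)$, i.e., exactly the no-cache denominator minus that last term.

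Finally, factoring the no-cache denominator out of both pieces turns $EE_{\rm gain}$ into $1/(1-G)$ with $G=\frac{p_a w_{\rm bh}\bar R_{\rm ca}}{\ln N_f}(\ln\frac{p_a w_{\rm bh}\bar R_{\rm ca}}{w_{\rm ca}F\ln N_f}-1)$ over $p_a P_a+(1-p_a)P_i+p_a w_{\rm bh}\bar R_{\rm ca}$; cancelling a common factor $p_a w_{\rm bh}\bar R_{\rm ca}$ from the numerator and denominator of $G$ gives \eqref{eqn:G}. I expect the only delicate points to be (i) justifying the no-cache EE by returning to \eqref{eqn:EE} rather than to \eqref{eqn:EEinftyCbh}, and (ii) the identification of the bracket in \eqref{eqn:N0inftyCbh} with $\bar R_{\rm ca}/B$ — this identity is what makes $\eta_0$, and hence $w_{\rm ca}\eta_0 N_f F$ and $\ln\eta_0$, simplify so that everything telescopes against the no-cache denominator; the remaining steps are routine algebra.
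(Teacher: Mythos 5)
Your proposal is correct and follows essentially the same route as the paper, which simply substitutes \eqref{eqn:N0inftyCbh} into \eqref{eqn:EEinftyCbh} to get $EE_{\max}$, forms the ratio with the $N_c=0$ baseline $EE_{\rm no}$, and simplifies; your identification of the bracket in \eqref{eqn:N0inftyCbh} with $\bar R_{\rm ca}/B$, giving $\eta_0 = \frac{p_a w_{\rm bh}\bar R_{\rm ca}}{w_{\rm ca}FN_f\ln N_f}$, and the resulting telescoping against the no-cache denominator are exactly the omitted algebra. Your care in obtaining $EE_{\rm no}$ from \eqref{eqn:EE} with $p_h=0$ rather than from the $\ln\eta$ form of \eqref{eqn:EEinftyCbh} is the right way to handle the $\eta\to 0$ singularity of the approximation \eqref{eqn:scaling law}.
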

	\begin{proof}
		By substituting \eqref{eqn:N0inftyCbh} into \eqref{eqn:EEinftyCbh}, we
		can obtain the maximal EE denoted as $EE_{\max}$. Denoting the network EE without caching (i.e., $N_c = 0$) as $EE_{\rm no}$, we can obtain the maximal EE gain with caching over that without caching as
		$ EE_{\rm gain} \triangleq \frac{EE_{\max}}{EE_{\rm no}} $, which can be written as \eqref{eqn:gain2}.
	\end{proof}
	
	\begin{remark}
		As shown in \eqref{eqn:G}, $G$ increases with $\bar R_{\rm ca}$ since the numerator increases with $\bar R_{\rm ca}$ while the denominator decreases with $\bar R_{\rm ca}$. This implies that the EE gain of caching at the BSs can be improved significantly by mitigating ICI because the value of $\bar R_{\rm ca}$ largely depends on the interference level $\beta$ as we mentioned before and $EE_{\rm gain}$ increases with $G$. We can also see from \eqref{eqn:G} that $G$ increases when the ratio of total transmit and circuit power to the backhauling power
		without caching (i.e., $\frac{p_aP_{a}+ (1-p_a) P_{i}}{p_a \bar	R_{\rm ca} w_{\rm bh}}$) decreases. This implies that caching at the pico BSs may provide higher EE gain than caching at the macro BSs since backhaul power consumption usually takes a larger portion of the energy in the pico cells \cite{GC2011}.
	\end{remark}
	
	When $\eta_0 \ge 1 $, the results are similar and the conclusion is the same.
	
	\subsection{Relation Between Network EE and Transmit Power}\label{subsec:power}

	When the backaul capacity is unlimited, by substituting $\bar R_{\rm ca}$ in
	\eqref{eqn:Rca1}, and $P_a$ and $P_i$ in \eqref{TXCIR} into \eqref{eqn:EEinftyCbh}, the
	network EE can be expressed as a function of transmit power $P$ as \eqref{eqn:EEinftyCbh2}.
	\begin{figure*}
		\begin{equation}
		EE \approx \frac{ p_a B\left(\frac{\alpha}{2 \ln 2} + \log_2 \frac{N_t
				P}{p_a\beta P 2^{\Phi} + D^\alpha \sigma^2}\right) }
		{p_a (\rho P + P_{{\rm cc}_a}) + (1 - p_a) P_{{\rm cc}_i} +
			w_{\rm ca}N_c F + p_a w_{\rm bh} B (1-p_h) \left(\frac{\alpha}{2 \ln 2} + \log_2 \frac{N_tP}{p_a\beta P 2^{\Phi} + D^\alpha \sigma^2}\right) }
		\label{eqn:EEinftyCbh2}
		\end{equation}
		\hrulefill
	\end{figure*}

	\begin{corollary}
		When $C_{\rm bh} \to \infty$ and the network is interference limited, i.e., $p_a \beta P 2^\Phi \gg
		D^{\alpha}\sigma^2$,\footnote{This condition can be rewritten as $\beta \gg \frac{1}{p_a
				2^{\Phi}}\cdot\frac{D^{\alpha}\sigma^2}{P}$, which is $\beta \gg 0.015$ for $p_a = 0.8$ and $20$
			dB cell-edge SNR.} the EE decreases with the transmit power $P$.
	\end{corollary}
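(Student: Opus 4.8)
The plan is to work directly from the closed-form expression \eqref{eqn:EEinftyCbh2} for the unlimited-backhaul network EE and to show that the interference-limited hypothesis collapses every $P$-dependent quantity except the transmit power itself. The key observation is about the argument of the logarithm, $\frac{N_t P}{p_a\beta P 2^\Phi + D^\alpha\sigma^2}$, which appears both in the numerator of \eqref{eqn:EEinftyCbh2} and, through the backhaul power term, in its denominator: this ratio tends to $\frac{N_t}{p_a\beta 2^\Phi}$ as the noise contribution becomes negligible, so under $p_a\beta P 2^\Phi\gg D^\alpha\sigma^2$ it is (to the same order of approximation used for $EE$ itself) a constant independent of $P$. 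Consequently $\bar R_{\rm ca}$ in \eqref{eqn:Rca1}, and hence the numerator $p_a\bar R_{\rm ca}$ as well as the backhaul power $p_a w_{\rm bh}B(1-p_h)\bar R_{\rm ca}$ in the denominator, no longer vary with $P$.

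Given that reduction, I would simply read off the residual $P$-dependence of \eqref{eqn:EEinftyCbh2}. The numerator is now a positive constant; in the denominator the circuit/idle powers $p_a P_{{\rm cc}_a}+(1-p_a)P_{{\rm cc}_i}$, the caching power $w_{\rm ca}N_c F$, and the backhaul power $p_a w_{\rm bh}B(1-p_h)\bar R_{\rm ca}$ are all independent of $P$, while the transmit-power term $p_a\rho P$ is strictly increasing in $P$. Therefore $EE$ is a positive constant divided by a strictly increasing affine function of $P$, so it is strictly decreasing in $P$, which is the claim.

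If one prefers a derivation that does not lean on the high-interference approximation inside the proof, I would instead differentiate \eqref{eqn:EEinftyCbh2} exactly. Writing its denominator as $p_a\rho P + A + p_a w_{\rm bh}(1-p_h)\bar R_{\rm ca}(P)$ with $A$ collecting the $P$-independent constants, the quotient rule gives that $\mathrm{sign}\big(\frac{dEE}{dP}\big)$ equals the sign of $\bar R_{\rm ca}'(P)\big(p_a\rho P+A\big)-p_a\rho\,\bar R_{\rm ca}(P)$, the two terms proportional to $w_{\rm bh}$ having cancelled, where $\bar R_{\rm ca}'(P)=\frac{B}{\ln2}\cdot\frac{D^\alpha\sigma^2}{P\,(p_a\beta P 2^\Phi+D^\alpha\sigma^2)}$. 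The main obstacle — and the only place the hypothesis is genuinely used — is to establish that this bracket is negative: the condition $p_a\beta P 2^\Phi\gg D^\alpha\sigma^2$ forces $\bar R_{\rm ca}'(P)$ to be of order $D^\alpha\sigma^2/(p_a\beta 2^\Phi P^2)$, so $\bar R_{\rm ca}'(P)(p_a\rho P+A)$ is negligible compared with the positive constant $p_a\rho\,\bar R_{\rm ca}(P)$ (bounded below because $\bar R_{\rm ca}$ is a rate), whence $\frac{dEE}{dP}<0$. Since this step merely quantifies the very condition the statement imposes, and since $EE$ is already an approximate quantity throughout, I would present the first, substitution-based route as the cleanest argument.
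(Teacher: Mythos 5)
Your first, substitution-based argument is exactly the paper's proof: under $p_a\beta P2^\Phi\gg D^\alpha\sigma^2$ the noise term is dropped from \eqref{eqn:EEinftyCbh2}, the logarithm's argument becomes the $P$-independent constant $N_t/(p_a\beta 2^\Phi)$, and the only surviving $P$-dependence is the increasing term $p_a\rho P$ in the denominator, so $EE$ decreases in $P$. The additional quotient-rule derivation is a correct and slightly more careful confirmation, but it is not needed; the proposal is sound and matches the paper's approach.
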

	\begin{proof}
		Since $p_a \beta P 2^\Phi \gg  D^{\alpha}\sigma^2$, by omitting the term $D^{\alpha}\sigma^2$ in \eqref{eqn:EEinftyCbh2}, we can see that
		EE decreases with the transmit power $P$.
	\end{proof}
	\begin{corollary}
		When $C_{\rm bh} \to \infty$ and the network is noise limited, i.e., $p_a \beta P 2^\Phi \ll D^{\alpha}\sigma^2$, the EE first increases and then decreases with the transmit power, and the optimal transmit power maximizing the EE is
		\begin{equation}
		P_0 = \frac{(\bar P_{{\rm cc}} + \bar P_{\rm ca})}{p_a\rho W \left(
			\frac{p_aN_t(\bar P_{{\rm cc}} + \bar P_{\rm ca})}{\rho D^{\alpha}\sigma^2}
			e^{\frac{\alpha}{2}-1}\right)} \label{eqn:P0Cinf}
		\end{equation}
		where $\bar P_{\rm cc} = p_a P_{{\rm cc}_a} + (1-p_a)P_{{\rm cc}_i}$ is
		the average circuit power consumption of each BS, and $\bar P_{\rm ca} = w_{\rm ca}\eta N_f F$
		is the average cache power consumption of each BS.
	\end{corollary}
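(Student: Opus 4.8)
The plan is to start from the closed-form EE expression \eqref{eqn:EEinftyCbh2} and reduce the claim to a one-dimensional maximization over the transmit power $P$. First, under the stated noise-limited condition $p_a\beta P2^{\Phi}\ll D^{\alpha}\sigma^2$ I would drop the term $p_a\beta P2^{\Phi}$ inside the logarithm, so that the rate appearing both in the numerator and in the backhauling term of the denominator collapses to the single function $R(P)\triangleq B\big(\frac{\alpha}{2\ln2}+\log_2\frac{N_tP}{D^{\alpha}\sigma^2}\big)$, which is precisely $\bar R_{\rm ca}$ in this regime (cf.\ \eqref{eqn:Rca1}). Abbreviating $A\triangleq\bar P_{\rm cc}+\bar P_{\rm ca}=p_aP_{{\rm cc}_a}+(1-p_a)P_{{\rm cc}_i}+w_{\rm ca}N_cF$, the EE becomes $EE\approx p_aR(P)\big/[\,p_a\rho P+A+p_aw_{\rm bh}(1-p_h)R(P)\,]$.

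The key structural observation is that the numerator and the backhaul term of the denominator are both proportional to the same $R(P)$; hence in $\frac{dEE}{dP}$ the contributions carrying the product $R'(P)R(P)$ cancel, and the stationarity condition $\frac{dEE}{dP}\big|_{P=P_0}=0$ reduces to $R'(P_0)\,(p_a\rho P_0+A)=p_a\rho\,R(P_0)$, independently of $w_{\rm bh}$ and $p_h$. Substituting $R'(P)=\frac{B}{P\ln2}$ and cancelling the common factors $\frac{B}{\ln2}$ and $p_a\rho$, this is the transcendental equation $\frac{A}{p_a\rho P_0}=\frac{\alpha}{2}-1+\ln\frac{N_tP_0}{D^{\alpha}\sigma^2}$.

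To solve it in closed form I would introduce $\xi\triangleq\frac{\alpha}{2}-1+\ln\frac{N_tP_0}{D^{\alpha}\sigma^2}$, which gives simultaneously $P_0=\frac{A}{p_a\rho\,\xi}$ (from the left-hand side) and $P_0=\frac{D^{\alpha}\sigma^2}{N_t}e^{\,\xi+1-\alpha/2}$ (from the definition of $\xi$); equating the two turns the equation into the canonical form $\xi e^{\xi}=\text{(positive constant)}$, so that $\xi=W(\cdot)$ by the definition of the Lambert-W function, and resubstituting into $P_0=\frac{A}{p_a\rho\,\xi}$ yields exactly \eqref{eqn:P0Cinf} after re-inserting $A=\bar P_{\rm cc}+\bar P_{\rm ca}$ and $\bar P_{\rm ca}=w_{\rm ca}\eta N_fF$.

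Finally, to justify the ``first increases then decreases'' claim I would show this stationary point is the unique maximizer. Since $\frac{dEE}{dP}$ has the sign of $\psi(P)\triangleq R'(P)(p_a\rho P+A)-p_a\rho R(P)=\frac{B}{\ln2}\big(p_a\rho+\frac{A}{P}\big)-p_a\rho R(P)$, and $\psi'(P)=-\frac{B}{\ln2\,P}\big(\frac{A}{P}+p_a\rho\big)<0$ for all $P>0$, the function $\psi$ is strictly decreasing; combined with $\psi(P)\to+\infty$ as $P\downarrow0$ and $\psi(P)\to-\infty$ as $P\to\infty$, this forces a unique root $P_0$ with $\psi>0$ on $(0,P_0)$ and $\psi<0$ on $(P_0,\infty)$ (the denominator staying positive over the high-SINR operating range $R(P)\ge0$), so the EE increases on $(0,P_0)$ and decreases thereafter. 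I expect the main obstacle to be the algebraic step of choosing the substitution $\xi=\frac{\alpha}{2}-1+\ln\frac{N_tP_0}{D^{\alpha}\sigma^2}$ that recasts the stationarity equation as $\xi e^{\xi}=\text{const}$ so that the Lambert-W inversion applies and the argument lines up with \eqref{eqn:P0Cinf}; the cancellation of the backhaul term in the first-order condition and the monotonicity argument are then routine.
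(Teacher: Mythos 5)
Your proposal is correct and follows essentially the same route as the paper's Appendix G: specialize \eqref{eqn:EEinftyCbh2} to the noise-limited regime, set the derivative to zero (where the backhaul terms proportional to $R'(P)R(P)$ indeed cancel, which is why the paper's stationarity equation \eqref{eqn:dEE} contains no $w_{\rm bh}$ or $p_h$), and invert via Lambert-W; your explicit substitution $\xi=\frac{\alpha}{2}-1+\ln\frac{N_tP_0}{D^{\alpha}\sigma^2}$ and the monotonicity argument for $\psi(P)$ simply spell out what the paper leaves as ``from this equation we can derive \eqref{eqn:P0Cinf}'' and its brief unimodality remark. One caveat: carrying your (correct) algebra through from the shared stationarity equation $\bar P_{\rm cc}+\bar P_{\rm ca}=p_a\rho P_0\big(\ln\frac{N_tP_0}{D^{\alpha}\sigma^2}+\frac{\alpha}{2}-1\big)$ gives $\xi e^{\xi}=\frac{N_t(\bar P_{\rm cc}+\bar P_{\rm ca})}{p_a\rho D^{\alpha}\sigma^2}e^{\alpha/2-1}$, i.e.\ $p_a$ belongs in the denominator of the Lambert-W argument, whereas the printed \eqref{eqn:P0Cinf} places it in the numerator; so your claim of recovering \eqref{eqn:P0Cinf} ``exactly'' holds only up to what appears to be a typo (a factor $p_a^2$) in the paper's displayed formula, not a flaw in your derivation.
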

	\begin{proof}
		See Appendix G.
	\end{proof}
	\begin{remark}
		As shown in \eqref{eqn:P0Cinf}, $P_0$ increases
		with $\bar P_{\rm ca}$ since $\frac{x}{W(x)}$ increases with $x$. This means that the
		transmit power should increase with the cache capacity to maximize the EE.
	\end{remark}
	
	We can show that the EE is not joint concave in $\eta$ and $P$, despite that the EE is
	an unimodal function  respectively of $\eta$ and $P$ when the network is noise limited. Therefore, the point $(P_0, \eta_0)$ satisfying $\frac{dEE}{dP} = 0$ in \eqref{eqn:P0Cinf} and $\frac{dEE}{d\eta} = 0$	in \eqref{eqn:N0inftyCbh} may not be joint  optimal. In the next section, we provide
	numerical results to show that $(P_0, \eta_0)$ is joint optimal in the considered system setup.

	When the backaul capacity is very low, i.e., $C_{\rm bh} \to 0$, almost the same results and conclusion can be obtained, which are not shown for conciseness.

	From previous analysis in this section, we can draw the following conclusions.
	\begin{itemize}
		\item If the backhaul capacity is unlimited, then the average throughput of the network will not change no
	matter whether each BS is equipped with cache. If the backhaul is with limited capacity, there will exist a tradeoff between throughput and
	memory.
		\item Whether caching at the BSs brings an EE gain depends on the backhaul capacity, and the power consumption
	parameters of the cache and backhaul hardware.
		\item If the backhaul capacity is unlimited, the
	EE gain of caching will come from trading off the backhaul power consumption with the cache
	power consumption. If the backhaul capacity is limited, the caching gain will come from both the
	increase of network throughput and the decrease of backhaul power consumption.
		\item  When the
	content catalog size is small, there is a tradeoff between EE and memory. Otherwise, the cache size
	of each BS should be optimized to maximize the EE of the network.
	\end{itemize}	
	
	\section{Numerical and Simulation Results}\label{sec:simulation}
	In this section, we validate the analysis and
	evaluate the EE of the cache-enabled networks. We show when caching at BSs has EE gain and how much gain we can expect in real systems.
	
	While in the derivation we have assumed circle cells, in the simulation we consider a
	hexagonal region with radius $250$ m. To demonstrate the impact of interference, we
	deploy three tiers of hexagonal pico cells in the region. Then, $N_b = 37$, and the
	radius of each pico cell is $D = \frac{250}{\sqrt{N_b}} \approx 40$ m. In order to
	remove the  boundary effect, we deploy three more tiers of cells to ensure that every
	cell is surrounded by no less than three tiers of cells. Each pico BS is equipped with
	four antennas, and the transmission bandwidth is set as $20$ MHz. The noise power is set
	as $\sigma^2 = -95$ dBm and the path-loss model is $30.6 + 36.7\log_{10}(r_{kb})$ in dB
	\cite{3GPP}.\footnote{In practice, the propagation environment may change and the line of sight (LoS) paths may exist
between BS and user with a certain probability. In this scenario, the EE will reduce due to stronger ICI but the EE-cache size relation will not change.} The catalog contains $N_f=10^4$ contents each with size of $F = 30$ MB
	(MegaByte) \cite{golrezaei2013femtocaching}. Recall that the EE analysis in Section IV
	is obtained for a special scenario where each BS serves at most one user. To show that the analytical results are also true for more
	general scenarios, in the following, each BS can schedule at most $N_t$ users with ZFBF. The user distribution in the whole network follows PPP and the average number of users in the network is $\lambda = 30$. Then, the ratio of user density to BS density is $\frac{\lambda}{N_b}\approx 0.8$.\footnote{The ratio of user density to BS density is typically around one for SCNs \cite{SCN,SCN2} and is much smaller than one for future UDNs in 5G \cite{UDN}.} The popularity of the contents
	follows Zipf-like distribution with typical parameter $\delta = 0.8$
	\cite{hefeeda2008traffic}. The  power consumption parameters of the system are $\rho =
	15.13$, $P = 21$ dBm, $P_{{\rm cc}_i}$ is $3.85$ W, $P_{{\rm cc}_a}$ is $10.16$ W for
	typical pico BS \cite{auer2010d2}, $w_{\rm bh} = 5\times10^{-7}$ J/bit for microwave
	backhaul link \cite{fehske2010bit}, and $w_{\rm ca} = 6.25 \times 10^{-12}$ W/bit for
	high-speed SSD \cite{choi2012network}. Unless otherwise specified,
	the above setups will be used  for all simulations and numerical results.

	\subsection{Validation of the Analysis}
	To validate the assumption that the energy consumption for content update is negligible when content popularity changes slowly, we estimate the energy consumption for updating  contents. Suppose that $u$ percent of the cached contents are updated at interval $T$. Then, the percentage of energy consumption for content update to the total energy consumption during $T$ is
	\begin{equation}
	E_{\rm ud} = \frac{u N_b N_c F w_{\rm bh}}{T \bar{P}_{\rm tot} }
	\end{equation}
	where $uN_bN_cF$ is the total number of bits conveyed through backhaul links and thus $uN_bN_cFw_{\rm bh}$ is the energy consumed for updating contents. Considering that the popularity of many contents changes slowly,\footnote{For example, new movies
are posted  (or change popularity) every week, and new music videos  are posted about every month \cite{Andy2012}.} we set $u = 10\%$  and $T = 12$ hours. Then, when $N_c = 10^3$, $E_{\rm ud} < 3\%$.
	
	To validate the  approximation made for $\mathbb{E}\{\log_2
	(\beta I_k + \frac{\sigma^2}{P})\}$ in Appendix A, we compare the simulation results of this term with the numerical results of its approximation
	given in \eqref{eqn:EI final} in Fig. \ref{fig:approx}.
	Since the term depends on
	$p_a = 1 - e^{-\frac{\lambda}{N_b}}$ and $\beta$, the results for different values of
	$\frac{\lambda}{N_b}$ and $\beta$ are provided. We can see that the simulation and numerical results almost overlap for all values of $\beta \in [0,1]$ especially when
	$\frac{\lambda}{N_b}$ is high, i.e., the approximation is accurate.
	
	\begin{figure}[!htb]
		\centering
		\includegraphics[width=0.4\textwidth]{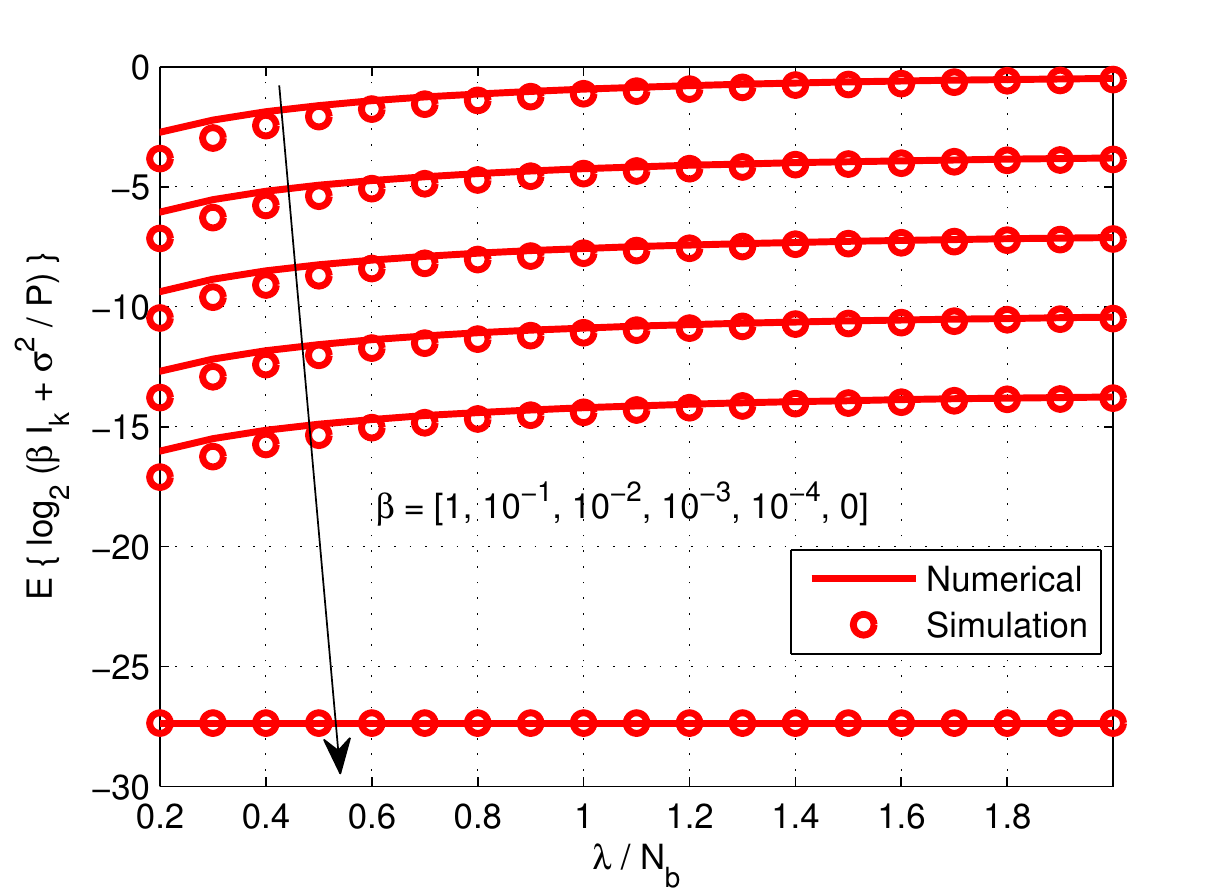}\\
		\caption{The accuracy of the approximation of $\mathbb{E}\{\log_2 (\beta I_k +
			\frac{\sigma^2}{P})\}$.} \label{fig:approx}
	\end{figure}

	\begin{figure}[!htb]
		\centering
		\hspace*{-2mm}\subfigure[Average throughput versus $C_{\rm bh}$, $\eta = 0.1$.]{
			\label{fig:R_vs_Cbh} 
			\includegraphics[width=0.4\textwidth]{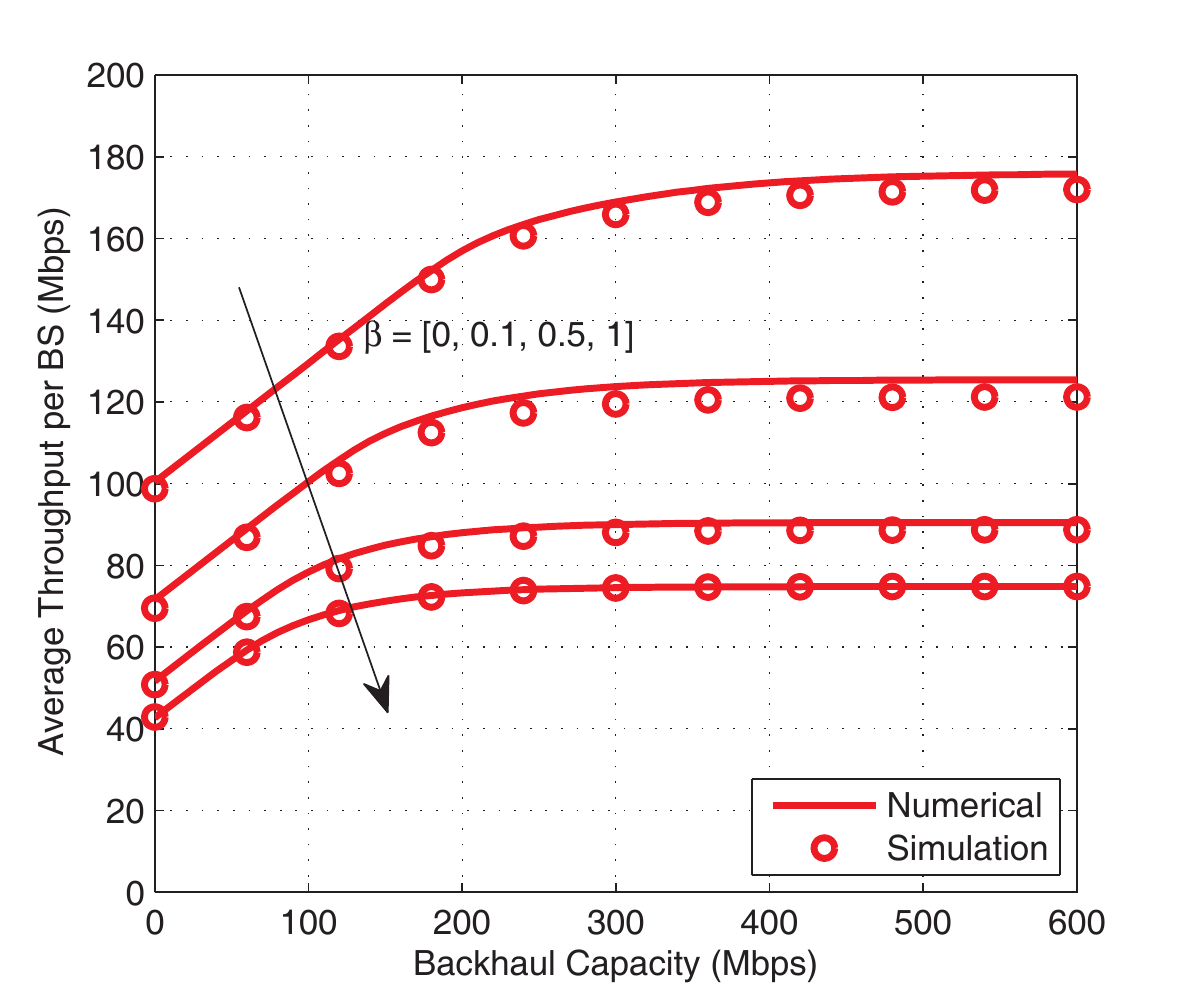}}
		\subfigure[Average throughput versus $\eta$, $C_{\rm bh} = 100$Mbps.]{
			\label{fig:R_vs_C} 
			\includegraphics[width=0.4\textwidth]{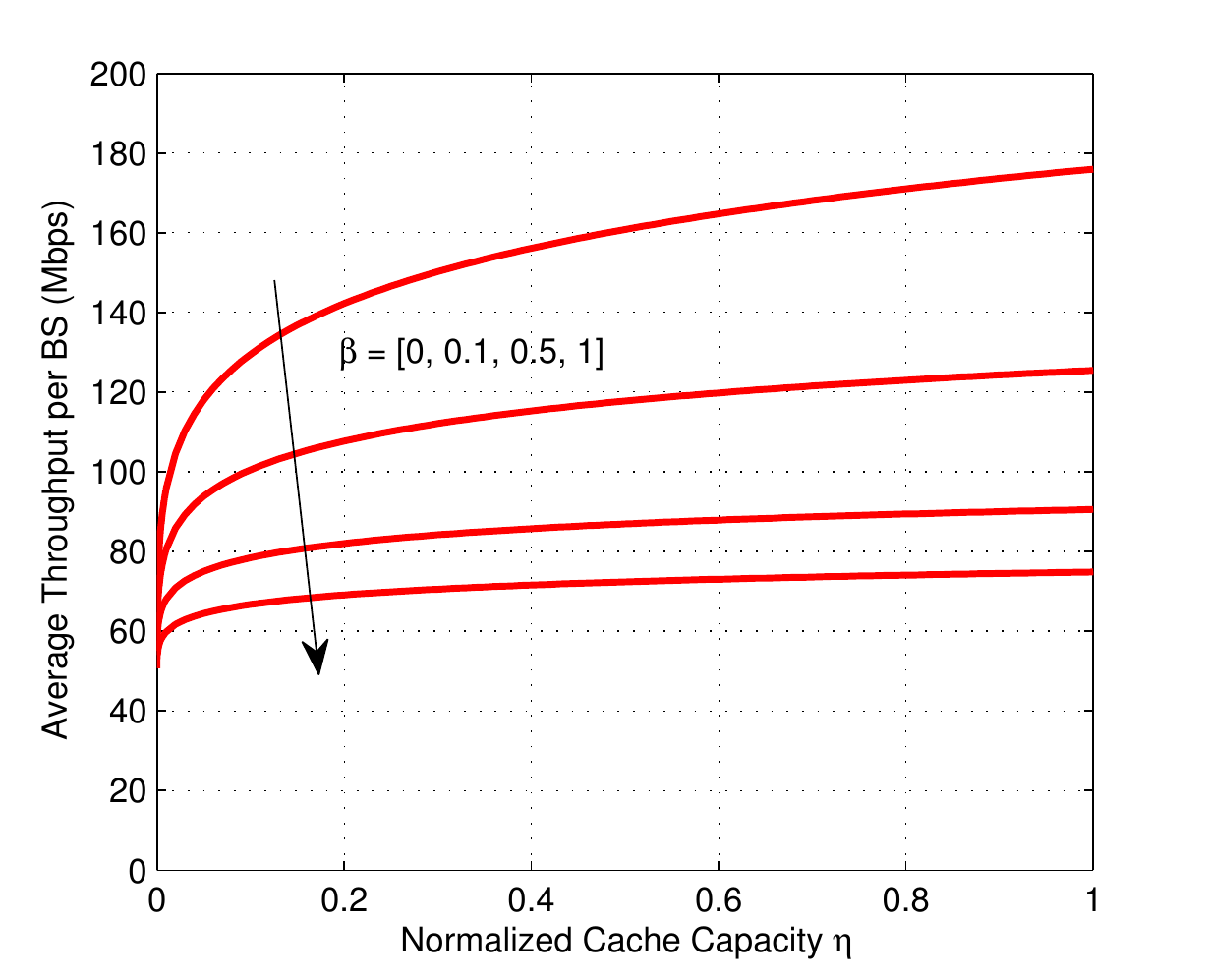}}
		\caption{Average throughput versus backhaul capacity and cache capacity.}
		\label{fig:SE} 
	\end{figure}
	
	To validate the approximation introduced in \eqref{eqn:E rkb}, we compare the simulation
	results of the average throughput per cell with the numerical results obtained from
	\eqref{eqn:SE multi final} versus $C_{\rm bh}$ in Fig. \ref{fig:R_vs_Cbh}.  We can see
	that the simulation and numerical results almost overlap, i.e., the approximation is
	accurate, although $N_t=4$ and $N_b=37$ that are far from infinity. To show the impact of caching on the throughput of the network, we also
	provide the numerical results obtained from \eqref{eqn:SE multi final} versus $\eta$ in
	Fig. \ref{fig:R_vs_C}. We can see from Fig. \ref{fig:R_vs_Cbh} and Fig. \ref{fig:R_vs_C}
	that the throughput increases with both the backhaul capacity and cache capacity,
	which agrees with the result in \eqref{eqn:SE} derived in the special scenario. Moreover, the throughput increases with
	$\eta$ more sharply when $\beta$ is small. This suggests that the throughput can be boosted more efficiently by caching at the BSs if the ICI level can be reduced.
	
	\subsection{When EE Benefits from Caching?}
	In Table \ref{tab:numerical}, we use numerical
	results to show when the condition in \eqref{eqn:caching} holds for different content catalog size $N_f$, backhaul hardware
	and cache hardware.
	
	A typical pico BS in LTE system is considered, where the transmission and power
	consumption parameters have been defined in the beginning of this section. The
	interference level is set as $\beta = 1$. In such a worst case, the condition is more
	prone to be invalid. While there are various kinds of memory technologies, we consider
	the two kinds that are most likely employed due to their higher power efficiencies and
	larger cache sizes. Except for the high speed SSD cache hardware with $w_{\rm ca}=6.25
	\times 10^{-12}$ W/bit and microwave backhaul link with $w_{\rm bh} = 5\times10^{-7}$
	J/bit, we also consider DRAM as cache hardware and optical fiber as backhaul link (with capacity $1$ Gbps), whose
	power coefficients are respectively $w_{\rm ca} = 2.5 \times 10^{-9}$ W/bit
	\cite{choi2012network} and $w_{\rm bh} = 4 \times 10^{-8}$ J/bit
	\cite{choi2012network,xu2014coordinated}. Considering that $N_f$ has a wide range in literatures, e.g., $N_f = 10^2 \sim 10^3$ with a large content size $F =10^2 \sim 10^3$ MB \cite{bastug2013proactive,largeF} and $N_f = 10^4 \sim 10^5$ with a small content size $F = 1\sim 10$ MB \cite{Andy2012,smallF}, we also investigate the impact of $N_f$ and $F$ on the condition.
	
	\begin{table}[htbp]
		\centering
		\caption{Numerical Example, $\delta = 1$}
		\begin{tabular}{ccccccc}
			\toprule
			\multicolumn{1}{c}{\multirow{2}[0]{*}{Condition}} &\multicolumn{2}{c}{\eqref{eqn:caching}} & \multicolumn{1}{c}{\multirow{2}[0]{*}{$w_{\rm ca}$}}
			& \multicolumn{1}{c}{\multirow{2}[0]{*}{$w_{\rm bh}$}} & \multicolumn{1}{c}{\multirow{2}[0]{*}{$N_f$}}  & \multicolumn{1}{c}{\multirow{2}[0]{*}{$F$}} \\
			&LHS   & RHS   & \multicolumn{1}{c}{} & \multicolumn{1}{c}{} & \multicolumn{1}{c}{} & \multicolumn{1}{c}{}\\
			\midrule
			Hold& $0.006$  & $34.4$  &  SSD   & microwave & $ 10^5$ & $10$ MB \\
			Hold& $0.006$  & $2.31$  &  SSD   & optical fiber  & $ 10^5$ & $10$ MB\\
			Hold& $0.37$  & $2.31$  &  SSD   & optical fiber  & $ 10^3$ & $10^3$ MB\\
			Hold& $2.41$  & $34.4$  & DRAM   & microwave  & $ 10^5$ & $10$ MB \\
			Not hold& $2.41$  & $2.31$  & DRAM  & optical fiber & $ 10^5$ & $10$ MB\\
			Not hold& $149.7$  & $34.4$  & DRAM   & microwave  & $ 10^3$ & $10^3$ MB \\
			\bottomrule
		\end{tabular}%
		\label{tab:numerical}%
	\end{table}%
	
	As expected, when the values of $w_{\rm ca}$ is large and $w_{\rm bh}$ is small, the EE does not benefit from caching at the BSs. Moveover, with the same value of $N_fF$, the condition is more prone to be invalid when the content size $F$ is large.

	\subsection{Impact of Key Parameters on EE}
	\begin{figure}[!htb]
		\centering
		\subfigure[EE versus backhaul capacity, $\beta = 0.5$.]{
			\label{fig:EE_vs_Cbh} 
			\includegraphics[width=0.4\textwidth]{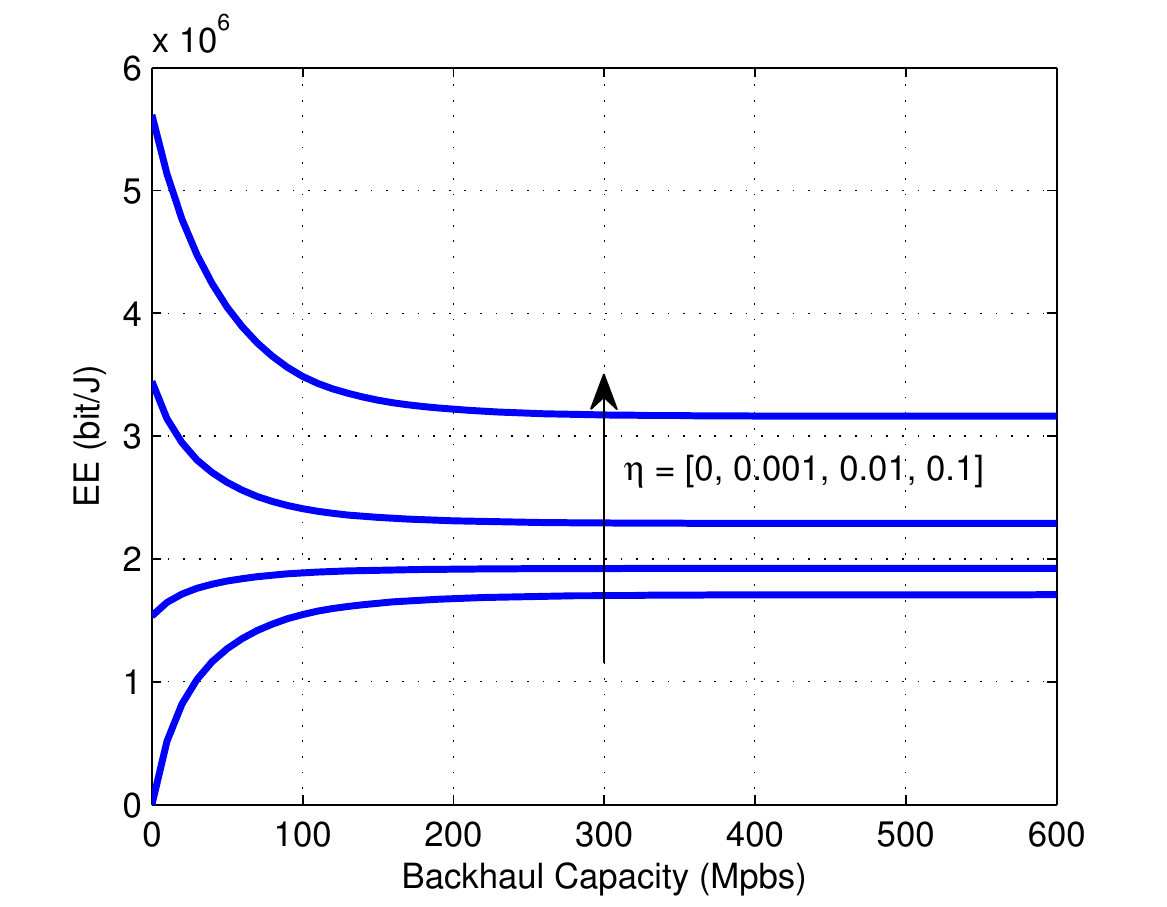}}
		\subfigure[EE versus the cache capacity, $C_{\rm bh}=100$ Mbps.]{
			\label{fig:EE_vs_C} 
			\includegraphics[width=0.4\textwidth]{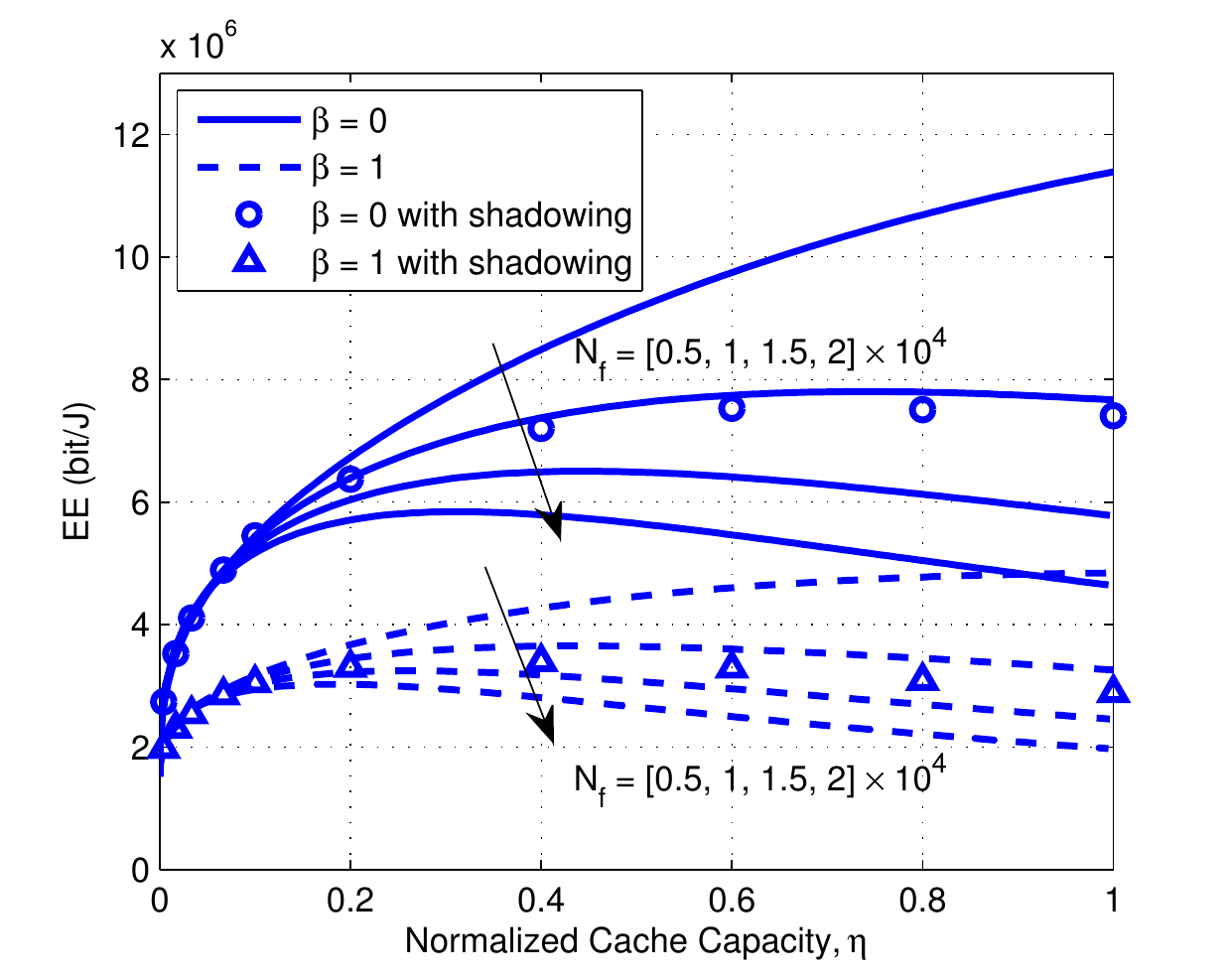}}
		\caption{EE versus backhaul capacity and cache capacity.}
		\label{fig:EE} 
	\end{figure}
	
	In Fig. \ref{fig:EE}, we show the numerical results of EE obtained from \eqref{eqn:multi
		EE} respectively versus backhaul capacity and normalized cache capacity. We can see from
	Fig. \ref{fig:EE}(a) that when no content or a little portion of the contents are cached at
	each BS (i.e., $\eta = 0$ and $0.001$), EE increases with the backhaul capacity, and
	when the portion is large (i.e., $\eta = 0.01, 0.1$), EE decreases with $C_{\rm bh}$.
	This is because although the throughput increases with  $C_{\rm bh}$, the backhaul power
	consumption also increases with more backhaul traffic. Moreover, the EE gain of caching
	over not caching is high when the backhaul capacity is very limited, and the gain
	approaches a constant when $C_{\rm bh}$ is large, say 200 Mbps. Fig. \ref{fig:EE}(b)
	shows that when the catalog size $N_f$ is relatively small (i.e., $N_f<N_{th}$), say
	$N_f=5000$, EE increases with $\eta$ until all contents are cached, and the maximal EE gain
	of caching over not caching is about 575\% when $\beta=0$ and 250\% when $\beta=1$. When
	$N_f$ is large  (i.e., $N_f>N_{th}$), EE first increases and then decreases with $\eta$.
	In fact, we can compute the values of $N_{th}$ from \eqref{eqn:tradeoff} for
	unlimited-capacity backhaul or numerically from \eqref{eqn:N0} for limited-capacity
	backhaul. In the considered setting, the values of $N_{th}$ range from 3000 to 20000
	contents. Note that these results are obtained when each BS can schedule at most $N_t$ users. Nonetheless, the results are consistent with the analysis in Section
	\ref{subsec:C} and Proposition 2, which are obtained in the special case where each BS only serves at most one user. By comparing Fig. \ref{fig:EE}(b) with Fig. \ref{fig:R_vs_C}, we can see
	that the EE gain from caching is much higher than the throughput gain from caching if
	ICI can be perfectly controlled (i.e., $\beta=0$). This is because when backhaul
	capacity is limited, the throughput gain of caching only comes from reducing ICI, but
	the EE gain also comes from reducing the proportion of power consumed for backhauling. To show the impact of shadowing, we also provide the simulation result of EE in Fig. \ref{fig:EE_vs_C}, where the shadowing is subject to log-normal distribution with 8 dB deviation. We can see that the network EE is slightly lower when shadowing is considered but the main trend of EE-cache relationship does not change.

	\begin{figure}[!htb]
		\centering
		\subfigure[EE versus $\eta$ under different skew parameter $\delta$.]{
			\label{fig:delta} 
			\includegraphics[width=0.4\textwidth]{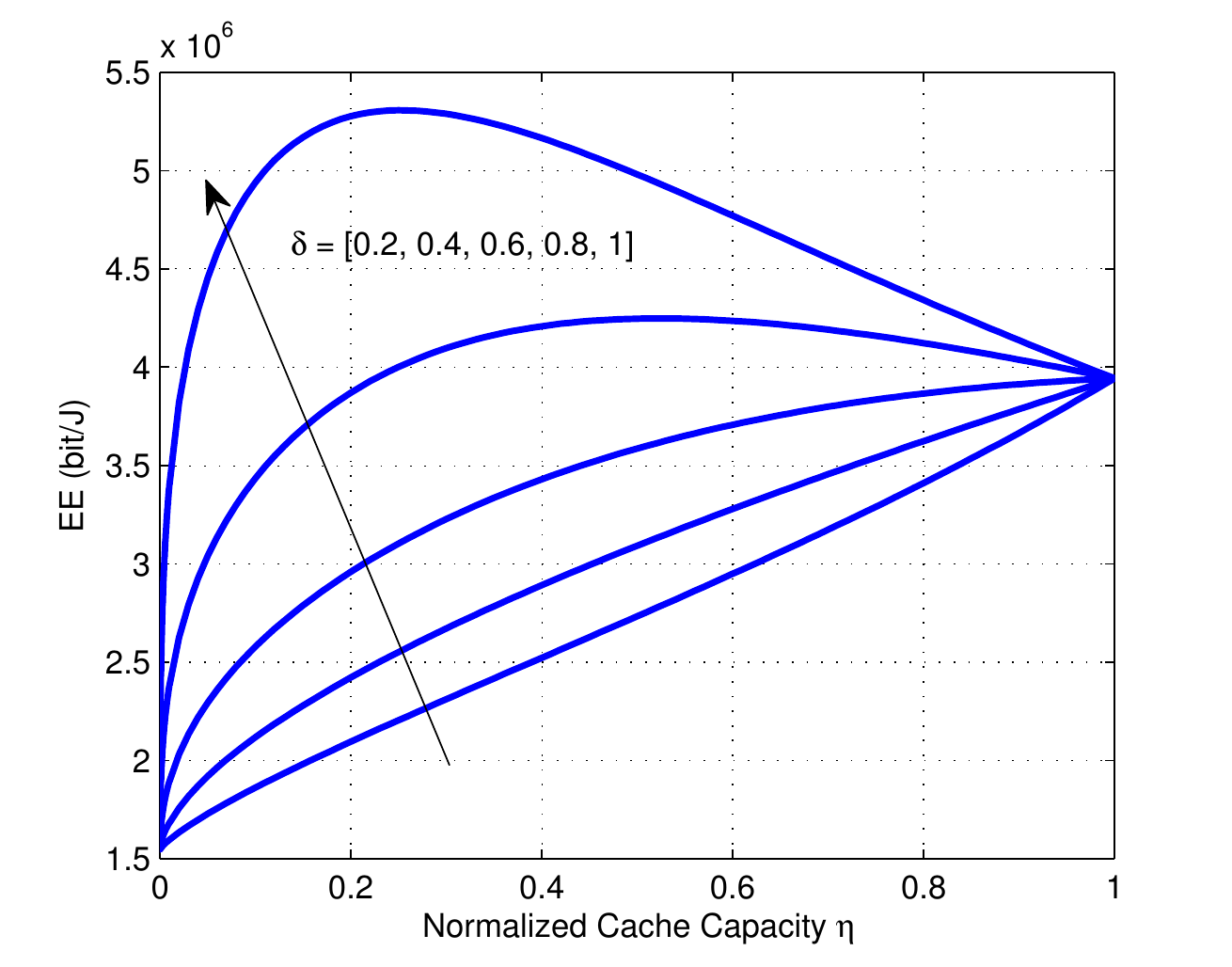}}
		\subfigure[EE versus $\frac{\lambda}{N_b}$ under different cache capacity $\eta$. ]{
			\label{fig:lambda} 
			\includegraphics[width=0.4\textwidth]{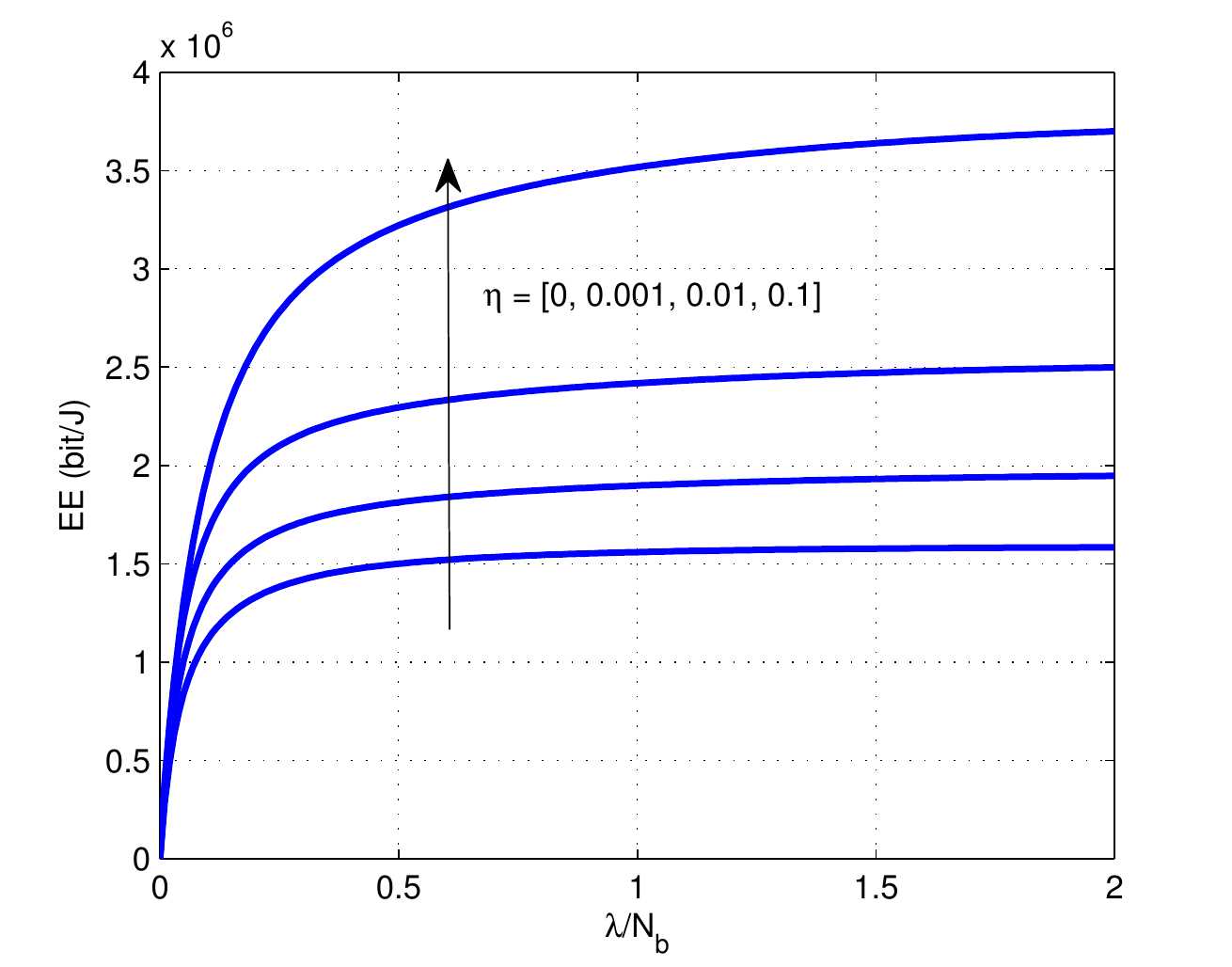}}
		\caption{EE versus cache capacity and user density, $\beta = 0.5$, $C_{\rm bh} = 100$ Mbps.}
		\label{fig:delta lambda} 
	\end{figure}
	
	In Fig. \ref{fig:delta}, we show the numerical results of EE  obtained from
	\eqref{eqn:multi EE} versus the normalized cache capacity with different skew
	parameter $\delta$. We can see that the optimal cache capacity decreases with
	$\delta$. With the same cache capacity, EE increases with
	$\delta$. This is because the cache hit ratio increases with $\delta$ as shown in
	\eqref{cachhitratio}. When $\delta=1$, the EE gain of caching with optimized $\eta$ over not caching is about 350\%.
	In Fig. \ref{fig:lambda}, we show the numerical results of EE obtained from
	\eqref{eqn:multi EE} versus the ratio of user density to BS density. We can see that EE
	first increases with $\frac{\lambda}{N_b}$ quickly and then saturates gradually because the
	throughput is finally limited by ICI. Moreover, the EE increases more sharply when cache is
	enabled. This is because the throughput is increased and the backhaul power
	consumption is reduced by caching. When $\frac{\lambda}{N_b}$ is around one, which is typical for SCNs, the EE gain is about 230\%.

	\begin{figure}[!htb]
		\centering
		\subfigure[$C_{\rm bh} \to \infty$]{
			\label{subfig:Cinfy} 
			\includegraphics[width=0.4\textwidth]{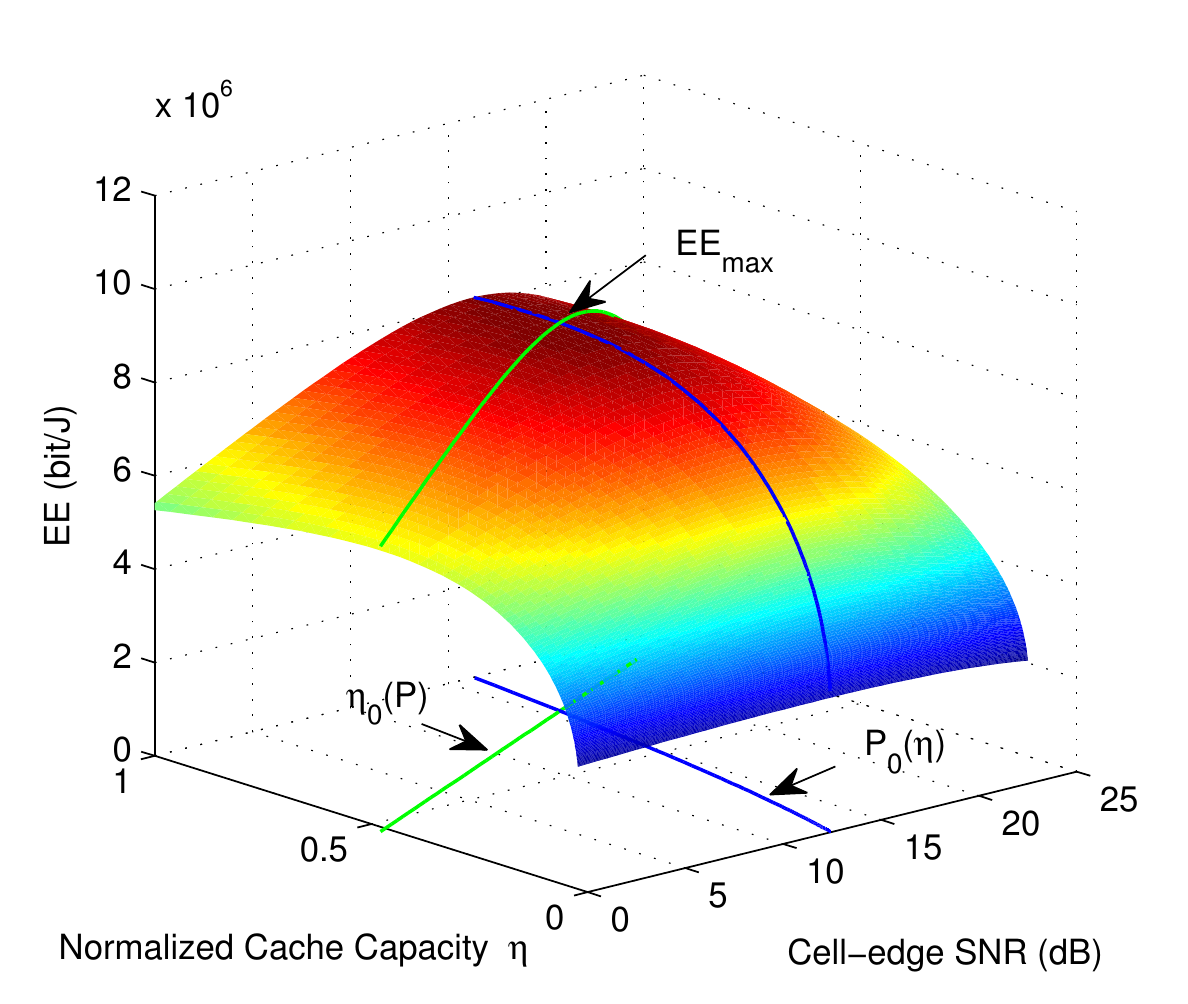}}
		\subfigure[$C_{\rm bh} \to 0 $ ]{
			\label{subfig:C100} 
			\includegraphics[width=0.4\textwidth]{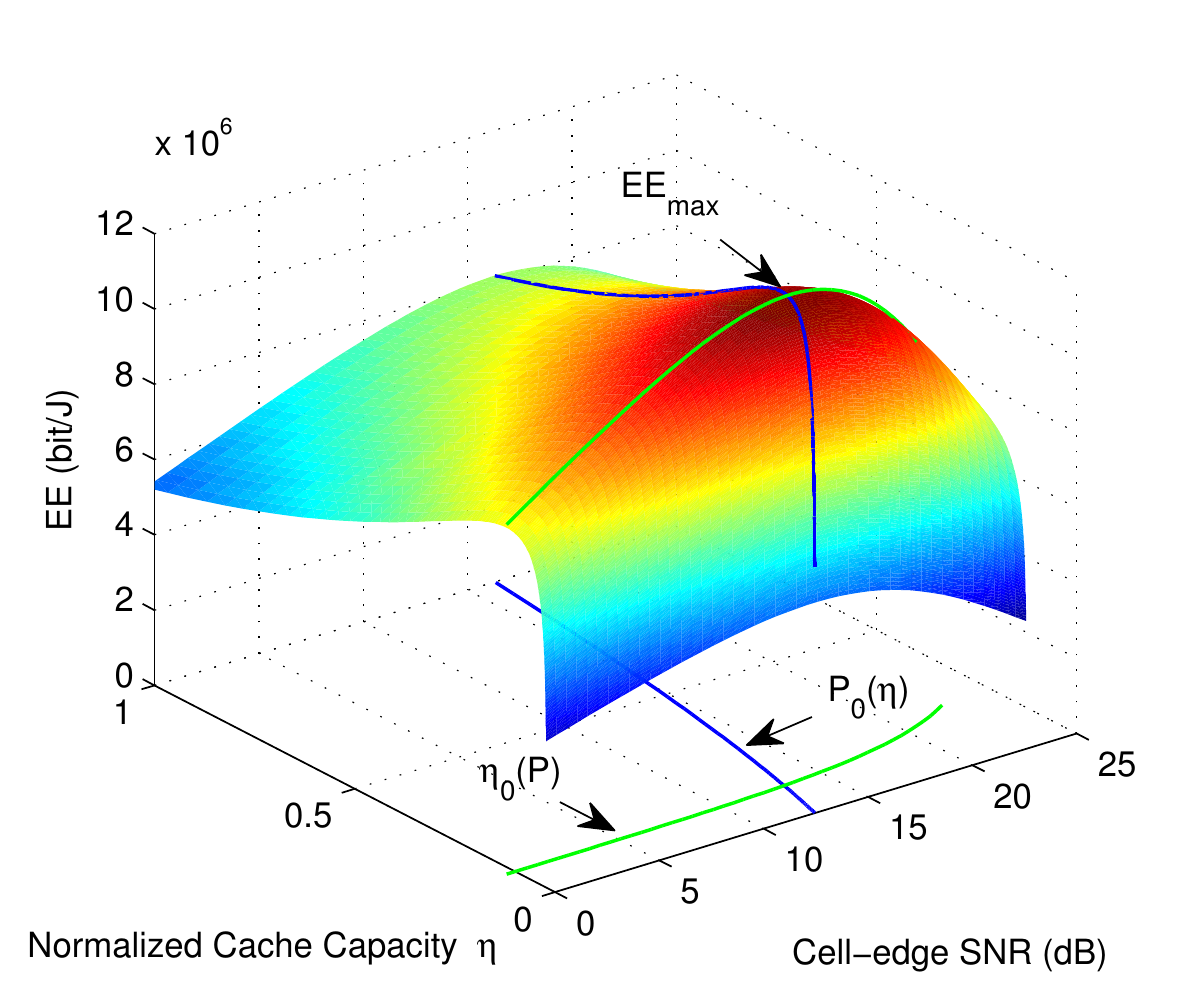}}
		\caption{EE versus cell-edge SNR and normalized cache capacity, $\beta = 0$, $\delta = 1$.}
		\label{fig:P} 
	\end{figure}
	
	In Fig. \ref{fig:P}, we show the numerical results of EE obtained from \eqref{eqn:EE}
	versus the cell-edge SNR (which is controlled by changing the transmit power and hence reflects the impact of transmit power) and normalized cache
	capacity under unlimited-capacity backhaul and very stringent-capacity backhaul. As we
	analyzed in section \ref{subsec:power}, with a given cache capacity, the EE first
	increases with $P$ and then decreases with $P$. We also plot the optimal transmit power
	$P_0$ as a function of $\eta$ denoted as $P_0(\eta)$, as well as the optimal normalized
	cache capacity $\eta_0$ as a function of $P$  denoted as $\eta_0(P)$. We can see that
	$P_0(\eta)$ increases with $\eta$ slowly as we analyzed in Section \ref{subsec:power},
	and $\eta_0(P)$ increases with $P$ slowly with very stringent-capacity backhaul. This
	implies that in a cache-enabled network with stringent-capacity backhaul, the value of
	transmit power has minor impact on the EE-optimal cache capacity and the value of cache
	capacity has little impact on the optimal transmit power. Besides, it is easy to find
	that the joint optimal values of $\eta$ and $P$ maximizing the network EE is the
	crossing point of $\eta_0(P)$ and $P_0(\eta)$. This means that  $(P_0, \eta_0)$
	satisfying both $\frac{dEE}{dP} = 0$ in  \eqref{eqn:P0Cinf} and $\frac{dEE}{d\eta} = 0$ in
	\eqref{eqn:N0inftyCbh} are the joint optimal transmit power and cache capacity with the
	considered system setting, although the EE  is not joint concave in $P$ and $\eta$ as we
	analyzed in Section \ref{subsec:power}.
	
	\subsection{Where to Cache Can Provide Higher EE?}
	To illustrate where to deploy the caches can provide higher EE, we compare the
	throughput and EE achieved by caching at the macro and pico BSs. For a fair comparison,
	we deploy three tiers of macro BSs similar to the pico network. The radius of each macro
	cell is $250$ m, i.e., the coverage area of each macro cell is the same as that of $N_b
	= 37$ pico cells. To ensure that the pico network and the macro network have the same
	total number of antennas and the same sum backhaul capacity within the same coverage
	area, each macro BSs is equipped with $4\times 37$ antennas and the backhaul capacity
	for each pico BS and macro BS is $100$ Mbps and $100\times 37$ Mbps. The power
	consumption parameters of the macro BS are $\rho = 3.22$, $P = 46$ dBm, $P_{{\rm cc}_i}=2.01 \times 10^3$ W ($13.6$ W per antenna), $P_{{\rm cc}_a} = 3.81\times 10^3$ W
	($25.8$ W per antenna) \cite{auer2010d2}. If each BS caches $N_c$ contents, the total cache
	capacities of the macro and pico networks will be $N_c F$ and $N_b N_c F$, respectively.
	In this simulation, we set the two networks with the same total cache capacity, hence
	each pico BS caches less contents.
	\begin{figure}[!htb]
		\centering
		\subfigure[Throughput]{
			\label{subfig:SE macro pico} 
			\includegraphics[width=0.4\textwidth]{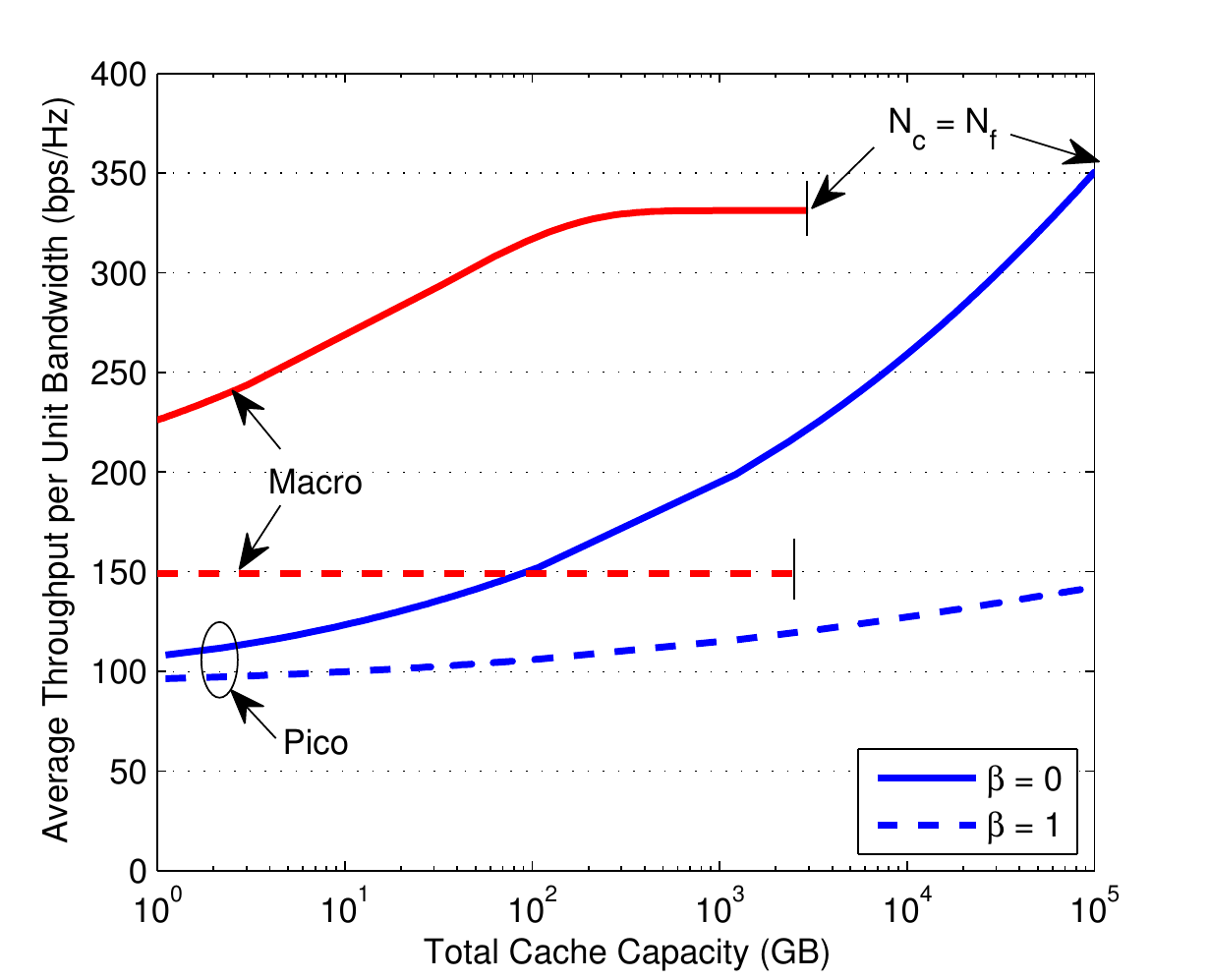}}
		\subfigure[EE]{
			\label{subfig:EE macro pico} 
			\includegraphics[width=0.4\textwidth]{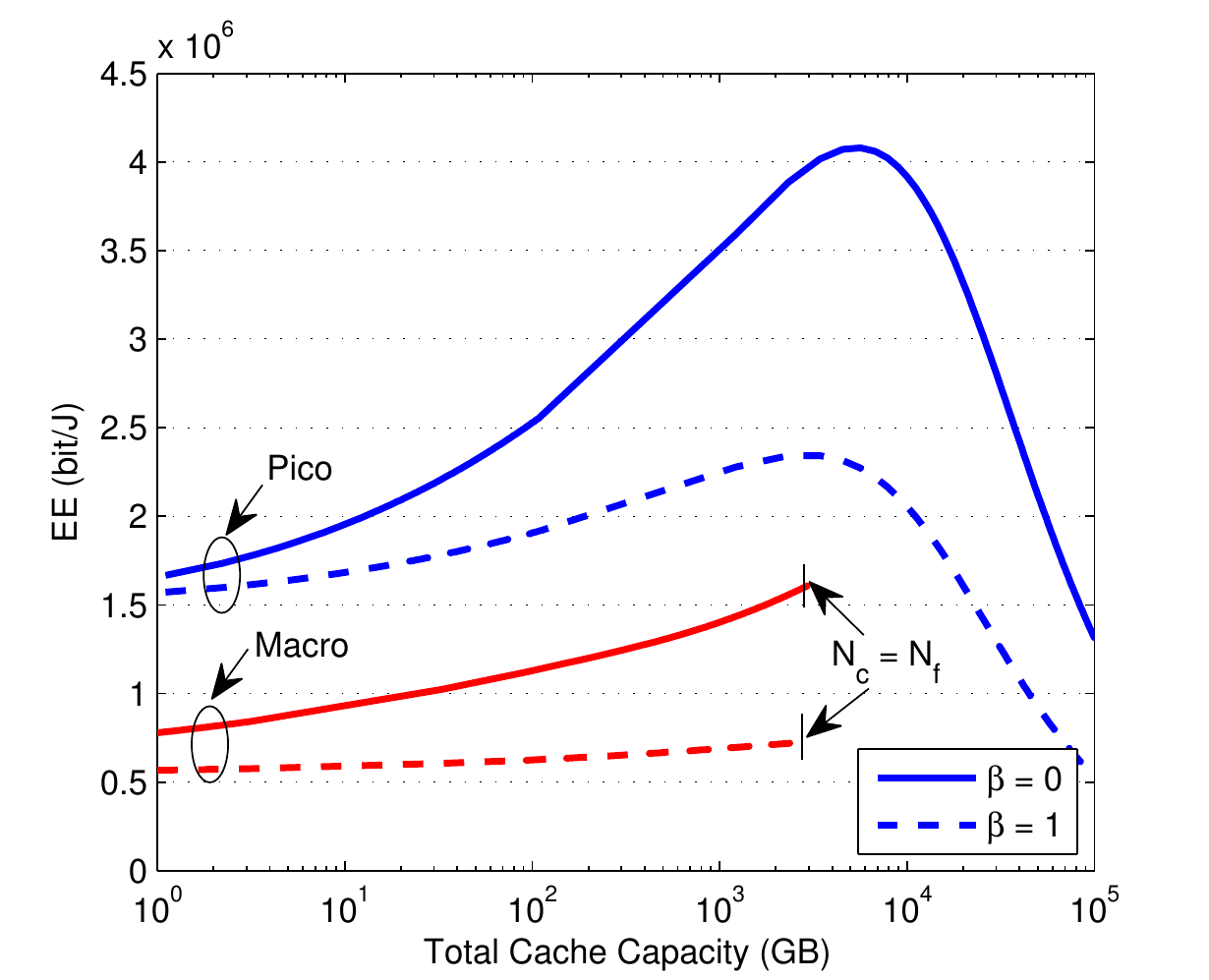}}
		\caption{Throughput and EE comparison between macro and pico networks, $N_f = 10^5$. The throughput is
			evaluated within a region of 250 m radius including one macro cell and $37$ pico cells.
			The curves stop when $N_c = N_f$, i.e, all contents are
			cached at each BS. The curves of pico network stop earlier because each pico BS
			caches less contents than each macro BS because the two networks are set with identical total
			cache capacity.}
		\label{fig:SE EE macro pico} 
	\end{figure}
	
	We can see from Fig. \ref{subfig:SE macro pico} that when the total cache capacity of
	the network is low, the throughput of the macro network is higher than the pico network
	due to higher backhaul capacity for each BS. When $\beta = 1$, the throughput of the
	macro network does not change with cache capacity, but the throughput of the pico
	network increases with cache capacity. This is because the backhaul capacity of each
	macro BS is large such that interference is the limiting factor of throughput, while the
	backhaul capacity of each pico BS network is low so that increasing cache capacity can
	relieve the backhaul congestion and hence increase the throughput. When there is no
	interference and $\beta = 0$, backhaul capacity becomes the bottleneck of both networks
	and thus their throughputs  increase with cache capacity. We can see from Fig.
	\ref{subfig:EE macro pico} that the EE of the pico network is higher than the macro
	network since the pico BSs have more opportunities to idle and have low transmit and
	circuit powers although the cache capacity of each pico BS is smaller than each macro
	BS. The EE of the pico networks benefits more from caching, despite that more replicas
	of the same content are cached. This is because the backhaul capacity limits the
	throughput of each pico BS meanwhile the backhaul power consumption takes a large
	portion of the energy consumed in the pico network.
	
	\subsection{Impact of User Association}
	\begin{figure}[!htb]
		\centering
		\subfigure[An illustration of distributed caching.]{
			\label{fig:disBS} 
			\includegraphics[width=0.25\textwidth]{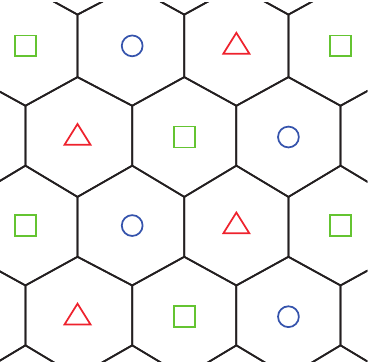}}
		\subfigure[EE comparison,
		$C_{\rm bh}=100$ Mbps.]{
			\label{fig:distributed} 
			\includegraphics[width=0.4\textwidth]{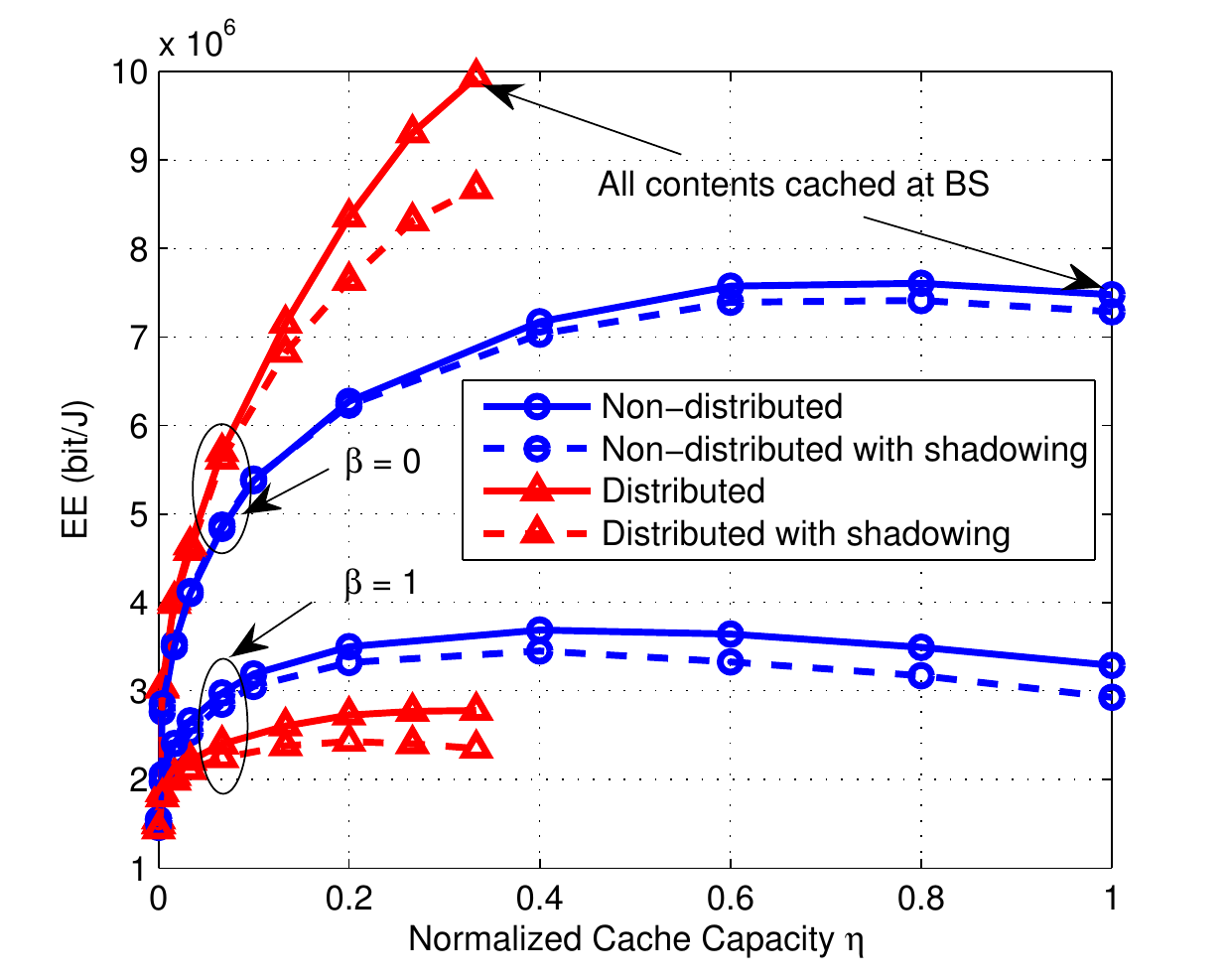}}
		\caption{Impact of user association with distributed caching and shadowing.}
		\label{fig:dis} 
	\end{figure}
	
	In the system model, we have assumed that each user is associated with the closest BS, and
	hence caching most popular contents in each BS is optimal. Now we relax this assumption and
	consider a user association based on both location and content. As shown in
	\eqref{eqn:EE}, EE increases with the cache hit ratio $p_h$. To increase  $p_h$, we
	consider a distributed caching strategy where every three adjecent BSs cache different
	contents and each user associates with the nearest BS that caches the user's requested contents.
	As illustrated in Fig. \ref{fig:disBS}, the BS marked with ``$\Delta$" caches the $1$st,
	$4$th, $7$th, $\cdots$, $(3 N_c-2)$th popular contents, the BS marked with ``$\square$"
	caches the $2$nd, $5$th, $8$th, $\cdots$, $(3N_c - 1)$th popular contents, and the BS
	marked with ``$\bigcirc$" caches the $3$rd, $6$th, $9$th, $\cdots$, $3N_c$th popular
	contents. This way of caching can reduce content redundancy by storing different contents in
	different BSs. Then, when  each BS caches $N_c$ contents with the distributed caching,  each
	user can access to $3N_c$ cached contents, i.e, the equivalent cache capacity seen from
	each user can be regarded as three times over that with non-distributed caching.

	In Fig. \ref{fig:distributed}, we show the simulation results of EE with distributed
	caching and non-distributed caching. We can see that when $\beta = 0$, i.e., no
	interference, distributed caching can achieve higher EE due to higher cache hit ratio.
	When $\beta = 1$, i.e., in the worst case of interference, distributed caching achieves
	lower EE than non-distributed caching. This is because each user may not always associate
	with the nearest BS with distributed caching and hence the nearest BS may generate strong
	interference to the user, which results in the EE reduction. When shadowing is considered and each user is associated to the BS with highest average channel gain, the network EE is slightly lower but the main trend of EE-cache relationship doed not change for both non-distributed and distributed caching.

	\section{Conclusion}
	In this paper, we investigated whether and how caching at BSs can improve EE of wireless
	access networks. By analyzing the EE for the cache-enabled network, we found the condition of whether EE can benefit from caching, the EE-memory
	relation, and the maximal EE gain from caching. Analytical results showed that EE can be
	improved by caching at the BSs when power efficient cache hardware is used. A key observation
	is that the EE gain of caching comes from boosting the throughput, reducing the backhaul
	consumption and exploiting the content popularity when the backhaul is limited. The EE gain is
	large when the interference level is low, the backhaul capacity is stringent, and the content popularity distribution is skewed. Another key observation is that EE-memory relation is not a simple tradeoff. When the
	content catalog size is not very large, there is a tradeoff between EE and cache capacity. Otherwise,
	optimizing cache capacity of each BS can maximize the EE of the network. The EE-optimal
	cache capacity depends on the system setting, and decreases when the network becomes denser.
	Numerical and simulation results validated the analysis and showed that caching at pico BS can
	provide higher EE gain than caching at macro BS. Finally, we provided simulation results to
	illustrate that distributed caching will achieve much higher EE gain than simply caching popular
	contents everywhere if inter-cell interference can be successfully eliminated, but will be inferior to
	the simple caching policy if the interference can not be coordinated.
	
	\appendices
	\section{Proof of Lemma 1}
	\renewcommand{\theequation}{A.\arabic{equation}}
	\setcounter{equation}{0}
	Considering that the SINRs for the users shown in \eqref{eqn:gamma} are identically distributed, $\bar R_{\rm ca}(K_b, K_c)$ can be derived as
	\begin{align}
	&\bar R_{\rm ca}(K_b, K_c)  = K_c B \mathbb{E} \left\{ \log_2 \left( 1+ \frac{r_{kb}^{-\alpha}
		|\mathbf h_{kb}^H\mathbf{w}_{kb}|^2} {K_b  (\beta I_k + \frac{\sigma^2}{P})} \right)
	\right\} \nonumber\\
	& \overset{(a)}{\approx} K_c B \Bigg( \mathbb{E}\left\{ \log_2
	|\mathbf h_{kb}^H\mathbf{w}_{kb}|^2 \right\} - \log_2 K_b +
	\mathbb{E}\{ \log_2 r_{kb}^{-\alpha} \}  \nonumber\\
	& \quad~ -\mathbb{E}\left\{ \log_2 \left(\beta I_k + \frac{\sigma^2}{P} \right) \right\} \Bigg)\nonumber\\
	& \overset{(b)}{=} K_c B\Bigg(\frac{1}{\ln2} \psi (N_t - K_b + 1)-\log_2K_b  \nonumber \\
	& \quad~ + \int_{0}^{D} \log_2\left(r_{kb}^{-\alpha} \right) \frac{2r_{kb}}{D^2} dr_{kb}
	-  \mathbb{E} \left\{ \log_2 \left(\beta I_k + \frac{\sigma^2}{P} \right) \right\} \Bigg)\nonumber\\
	& \overset{(c)}{\approx} K_c B \Bigg(\log_2 \frac{N_t - K_b + 1}{K_b} + \frac{\alpha}{2\ln2} + \log_2 D^{-\alpha} \nonumber \\
	& \quad~  -  \mathbb{E}\left\{ \log_2 \left(\beta I_k + \frac{\sigma^2}{P} \right) \right\} \Bigg)
	\label{eqn:E1}
	\end{align}
	where the approximation in step $(a)$ is from omitting the term ``$1$" inside the log function, which is
	accurate in high SINR region, step $(b)$ comes from the facts that $|\mathbf{h}_{kb}^H
	\mathbf{w}_{kb}|^2 $ follows Gamma distribution $\mathbb{G}(N_t - K_b + 1, 1)$
	\cite{zhang2011multi} and $\frac{2r_{kb}}{D^2}$ is the probability density function
	(PDF) of $r_{kb}$ when the user is uniformly distributed in the circle cell, and step
	$(c)$ is obtained by applying the asymptotic approximation of the Digamma function
	$\psi(n)$, i.e., $\psi(n) = \ln(n) + \mathcal{O}(\frac{1}{n}) \approx \ln n$
	\cite{heath2013modeling} and the approximation is accurate when $N_t - K_b + 1 > 1$.

	When the network is interference-limited, i.e., the interference power $\beta P I_k \gg
	\sigma^2$,
	\begin{equation}
	\mathbb{E}\left\{ \log_2 \left(\beta I_k + \frac{\sigma^2}{P} \right) \right\} \approx
	\mathbb{E}\{ \log_2 (\beta I_k)\} \label{eqn:EI}
	\end{equation}
	Considering the expression of $I_k$ defined in \eqref{eqn:gamma} and $\mathbb{E}\{ \zeta_j \} = 1\cdot p_a + 0\cdot(1-p_a) = p_a$, we have
	\begin{equation}
	\mathbb{E}\{ \log_2 (\beta I_k)\} =
	\mathbb{E}\{ \log_2 (I_k D^{\alpha} )\} + \log_2 (\beta D^{-\alpha}) \label{eqn:betaIk}
	\end{equation}
	where $\mathbb{E}\{ \log_2 (I_k D^{\alpha} )\}$ can be derived as
	\begin{align}
	&\mathbb{E}\{ \log_2 (I_k D^{\alpha} )\} \nonumber\\
	& = \mathbb{E}_{r_{kj},\mathbf{h}_{kj},\zeta_j} \left\{ \log_2 \left( \sum_{j=1, j\neq b}^{N_b}
	\zeta_j \left( \frac{D}{r_{kj}}\right)^{\alpha} \|\mathbf h_{kj}\mathbf{W}_{j} \|^2 \right) \right\} \nonumber \\
	& \overset{(a)}{\leq} \mathbb{E}_{r_{kj},\mathbf{h}_{kj}} \left\{ \log_2 \left( \sum_{j=1, j\neq b}^{N_b}
	\mathbb{E}\{\zeta_j\} \left( \frac{D}{r_{kj}}\right)^{\alpha} \|\mathbf h_{kj}\mathbf{W}_{j}\|^2\right)\right\}
	 \nonumber \\
	& = \mathbb{E}_{r_{kj},\mathbf{h}_{kj}} \left\{ \log_2 \left(\sum_{j=1, j\neq b}^{N_b}
	\!\left( \frac{D}{r_{kj}}\right)^{\alpha}\! \|\mathbf h_{kj}\mathbf{W}_{j}\|^2 \right) \right\}
	\!	+  \log_2p_a \nonumber\\
	& \triangleq  \Phi + \log_2 p_a  = \log_2 p_a 2^\Phi  \label{eqn:beta}
	\end{align}
	where the upper bound in step $(a)$ is from using the Jensen's inequality and the bound
	is tight when $\frac{\lambda}{N_b}$ is high (then $p_a \to 1$ and hence $\zeta_j \to
	\mathbb{E}\{\zeta_j\}$),  and $\Phi$ is a constant only depending on the path-loss exponent
	$\alpha$ when $N_b \to \infty$ (to be proved in Appendix B). By substituting \eqref{eqn:beta} into \eqref{eqn:betaIk} and then into \eqref{eqn:EI}, we
	obtain
	\begin{align}
	\mathbb{E} \left\{ \log_2 \left(\beta I_k + \frac{\sigma^2}{P} \right) \right\} & \leq \log_2( p_a \beta 2^\Phi  D^{-\alpha}) \nonumber\\
	& \approx \log_2 \left(p_a \beta 2^\Phi D^{-\alpha}  + \frac{\sigma^2}{P}\right) \label{eqn:EI final}
	\end{align}
	where the approximation comes from the fact that when $\beta P I_k \gg
	\sigma^2$, we have $\log_2(p_a \beta 2^\Phi D^{-\alpha}) \geq \mathbb{E}\{ \log_2 (\beta
	I_k)\} \gg \log_2(\frac{\sigma^2}{P})$ which means $p_a \beta 2^\Phi D^{-\alpha} \gg
	\frac{\sigma^2}{P}$.
	
	When the network is noise-limited, i.e., $\beta P I_k \ll \sigma^2$, we also have
	$\mathbb{E}\{ \log_2 (\beta I_k + \frac{\sigma^2}{P} ) \} \approx \log_2
	\frac{\sigma^2}{P}
	\approx \log_2(p_a \beta 2^\Phi D^{-\alpha}  + \frac{\sigma^2}{P})$,
	which is the same as the result in \eqref{eqn:EI final}.\footnote{In section V-A, we use simulations to show that \eqref{eqn:EI final} is accurate for all values of $\beta \in [0,1]$.}
	
	By substituting \eqref{eqn:EI final} into \eqref{eqn:E1},  $\bar R_{\rm ca}(K_b, K_c)$ can be approximated as
	\begin{align}
	\bar R_{\rm ca}(K_b, K_c)& \approx K_c B\left( \frac{\alpha}{2\ln 2} + \log_2 \frac{(N_t-K_b+1)P
	}{K_b ( p_a \beta P 2^{\Phi} + D^\alpha \sigma^2 ) } \right) \nonumber\\
	&\triangleq K_c \left( \frac{\alpha B}{2\ln 2} + \bar R_{\rm e}(K_b) \right)
	\label{eqn:Rca}
	\end{align}
	where $\bar R_{\rm e}(K_b) \triangleq B\log_2 \frac{(N_t-K_b+1)P }{K_b ( p_a \beta P
		2^{\Phi} + D^\alpha \sigma^2 ) }$ can also be derived from $\mathbb{E}
	\left\{B\log_2\frac{ PD^{-\alpha} |\mathbf h_{kb}^H \mathbf{w}_{kb} |^2} {K_b  (\beta P
		I_k + \sigma^2)} \right\}$. Hence, $\bar R_{\rm e}(K_b)$ can be regarded as the average
	achievable rate of a cell-edge user when the backhaul capacity is unlimited and BS$_b$
	serves $K_b$ users.
	
	\section{Proof of the constant $\Phi$ when $N_b\to \infty$}
		\renewcommand{\theequation}{B.\arabic{equation}}
		\setcounter{equation}{0}
	In the following, we first prove $\Phi$ only depends on $\alpha$ and $N_b$, and then prove $\Phi$ converges when $N_b\to \infty$.
	Without loss of generality, we assume
	the coordinate of BS$_b$ as $(0,0)$. Denoting $(x_k, y_k)$ and $(u_j, v_j)$ as the
	coordinate of MS$_k$ and BS$_j$, respectively, then $r_{kb} = \sqrt{x_k^2 +
		y_k^2}$ and $r_{kj} = \sqrt{(x_k-u_j)^2 + (y_k - v_j)^2}$.  Denoting $I_{kj} \triangleq \|\mathbf h_{kj}\mathbf{W}_{j}\|^2$ and taking the expectation over user location in \eqref{eqn:beta}, we obtain
	\begin{multline}
	\Phi = \frac{1}{\pi D^2}  \iint_{x_k^2 +  y_k^2 \leq D^2}  \mathbb{E}_{I_{kj}} \Bigg\{   \log_2  \Bigg( \sum_{j=1, j\neq b}^{N_b} \\
	\left( \frac{D}{\sqrt{(x_k-u_j)^2 + (y_k - v_j)^2}}\right)^{\alpha}\! I_{kj} \Bigg)\Bigg\} dx_k dy_k \label{eqn:Er}
	\end{multline}
	We normalize the coordinates of MS$_k$ and BS$_j$ with the cell radius $D$ as $(\bar
	x_k, \bar y_k) = \big( \frac{x_k}{D},\frac{y_k}{D} \big)$ and $(\bar u_j, \bar v_j) =
	\big( \frac{u_j}{D},\frac{v_j}{D}\big)$, respectively. After changing the integration
	variables as $\bar x_k$ and $\bar y_k$, \eqref{eqn:Er} can be rewritten as	
	\begin{multline}
		\Phi = \frac{1}{\pi}  \iint_{\bar x_k^2 + \bar y_k^2 \leq 1}  \mathbb{E}_{I_{kj}}  \Bigg\{ \log_2  \Bigg( \sum_{j=1, j\neq b}^{N_b} \\
		\left( (\bar x_k- \bar u_j)^2 + (\bar y_k - \bar v_j)^2\right)^{-\frac{\alpha}{2}} I_{kj} \Bigg) \Bigg\} d\bar x_k d\bar y_k \label{eqn:Er2}
	\end{multline}
	Since the normalized coordinates
	$(\bar x_k,\bar y_k)$ and $(\bar u_j,\bar v_j)$ do  not depend on $D$, and $I_{kj}$ is averaged over small-scale fading channel in \eqref{eqn:Er2}, $\Phi$ only depend on $\alpha$ and $N_b$.
	
	By using the Jensen's inequality  in \eqref{eqn:Er2} to move the expectation into the $\log$ function and considering $\mathbb{E}\{I_{kj}\} = 1$, we obtain
	\begin{multline}
		\Phi \leq \frac{1}{\pi}  \iint_{\bar x_k^2 + \bar y_k^2 \leq 1}  \log_2  \Bigg( \sum_{j=1, j\neq b}^{N_b} \\
		\left( (\bar x_k- \bar u_j)^2 + (\bar y_k - \bar v_j)^2\right)^{-\frac{\alpha}{2}} \Bigg) d\bar x_k d\bar y_k
	\end{multline}
	Considering $\alpha > 2$ in practice and after some manipulations, we can show that $\sum_{j=1, j\neq b}^{N_b} \big( (\bar x_k- \bar u_j)^2 +(\bar y_k - \bar v_j)^2\big)^{-\frac{\alpha}{2}} $ converges when $N_b \to \infty$. Therefore, $\Phi$ has an upper bound. Further considering $\Phi$ increases with $N_b$, $\Phi$	converges when $N_b \to \infty$.

	\section{Proof of Lemma 2}
		\renewcommand{\theequation}{C.\arabic{equation}}
		\setcounter{equation}{0}
	Consider that when $N_t \to \infty$,
	$\mathbb{E}_{\mathbf{h}_{kb}}\big\{\frac{|\mathbf{h}_{kb}\mathbf{w}_{kb}|^2}{N_t}\big\}  \to 1$ and the
	variance of $\frac{|\mathbf{h}_{kb}\mathbf{w}_{kb}|^2}{N_t}$ approaches to zero resulting from channel hardening \cite{zhang2013downlink}. Besides, when the interference power from each BS is
	independent and identically distributed (i.i.d.),\footnote{When the spatial distribution
		of the BSs also follows PPP, the interference power from each BS is indeed i.i.d.
		\cite{andrews2011tractable}.} the interference power per BS $ \frac{\beta  P I_k}{N_b} = \frac{\beta P}{N_b}
	\sum_{j=1, j\neq b}^{N_b} \zeta_{j} r_{kj}^{-\alpha} \|\mathbf h_{kj} \mathbf{W}_{j}
	\|^2$ approaches to its expectation $\frac{\beta P}{N_b}\mathbb{E}\{I_k\}$ when $N_b \to \infty$ according to the law
	of large numbers.  This suggests that the distance between each user and its local BS $r_{kb}$
	dominates the comparison between $\sum_{k = K_c + 1}^{K_b}B \log_2(1+\gamma_k)$ and
	$C_{\rm bh}$ when $N_b$ is large, and therefore the second term in \eqref{eqn:SE multi} can be approximated
	as
	\begin{align}
	& \bar R_{\rm bh}(K_b, K_c, C_{\rm bh})  = \mathbb{E} \left\{ \min \left(B \!\! \sum_{k = K_c + 1}^{K_b}
	\!\! \log_2(1+\gamma_k),C_{\rm bh}  \right) \right\} \nonumber \\
	& \approx \mathbb{E}_{r_{kb}} \left\{ \min \left(B \!\!
	 \sum_{k=K_c+1}^{K_b} \!\! \mathbb{E}_{\mathbf{h},r_{kj},\zeta_j}\big\{
	\log_2(1+\gamma_k) \big\}, C_{\rm bh} \right) \right\} \label{eqn:E rkb}
	\end{align}
	which is accurate as shown via
	simulations in Section V-A.
	
	By omitting the term ``$1$" inside the log function, approximating $\psi(n)$ by $\ln(n)$
	similar to the derivation for \eqref{eqn:E1}, and further considering \eqref{eqn:EI
		final} and the definition of $\bar R_{\rm e}(K_b)$, we have
	\begin{multline}
	\mathbb{E}_{\mathbf{h},r_{kj},\zeta_j}\big\{\log_2(1+\gamma_k) \big\} \approx \log_2 \frac{(N_t-K_b+1)P}{K_b (p_a \beta P 2^{\Phi}  D^{-\alpha} 	+   \sigma^2)} \\
 + \log_2 r_{kb}^{-\alpha} = \frac{\bar R_{\rm e}({K_b})}{B}
	+ \alpha\log_2 \frac{D}{r_{kb}}
	\label{eqn:E other}
	\end{multline}
	By substituting \eqref{eqn:E other} into \eqref{eqn:E rkb}, we obtain
	\begin{multline}
	\bar R_{\rm bh}(K_b, K_c, C_{\rm bh}) \approx \mathbb{E}_{r_{kb}} \Bigg\{\min  \Bigg( (K_b - K_c)\bar R_{\rm e}(K_b) \\
	 +\frac{\alpha B}{2\ln2}\sum_{k = K_c + 1}^{K_b} \!\!\! 2\ln \frac{D}{r_{kb}}, C_{\rm bh} \Bigg) \Bigg\} \triangleq  \mathbb{E}_{r_{kb}} \{\tilde R_{\rm bh} \}  \label{eqn:Rbh}
	\end{multline}
	where we define $\tilde{R}_{\rm bh}$ to denote the term inside $\mathbb{E}_{r_{kb}}\{\cdot\}$ in \eqref{eqn:Rbh} for notation simplicity.
	
	With the PDF of $r_{kb}$, i.e., $\frac{2r_{kb}}{D^2}$, we can prove that $\{2\ln \frac{D}{r_{kb}}, k=1, \cdots, K_b, b=1,\cdots,
	N_b\}$ are independent exponential distributed RVs with unit mean. Hence,
	the term $y \triangleq \sum_{k=K_c+1}^{K_b}2\ln
	\frac{D}{r_{kb}}$ in \eqref{eqn:Rbh} is a Gamma distributed RV following
	$\mathbb{G}(K_b - K_c, 1)$, i.e., it is positive, and the PDF of this term is $f(y) =
	\frac{y^{K_b-K_c-1}e^{-y}}{(K_b-K_c-1)!}$, $y>0$. This gives rise to the following
	results.

	When $C_{\rm bh} \leq (K_b - K_c) \bar R_{\rm e}(K_b) $, i.e.,  the backhaul capacity is less than the average achievable sum-rate of  all the cache-miss users under unlimited-capacity backhaul when they are located at the cell edge,	the right hand side (RHS) of \eqref{eqn:Rbh} becomes
		\begin{equation}
		\mathbb{E}_{r_{kb}} \{\tilde R_{\rm bh} \}  = C_{\rm bh} \label{eqn:Rbh 1}
		\end{equation}
		
	 When $C_{\rm bh} > (K_b - K_c) \bar R_{\rm e}(K_b)$, considering
		\begin{equation}
		\tilde R_{\rm bh}  = \left\{\begin{array}{l l}
		(K_b - K_c)\bar R_{\rm e}(K_b) +
		\frac{\alpha B}{2\ln2}y, &\text{if}~y<z  \\
		C_{\rm bh}, &\text{if}~y\geq z
		\end{array} \right.
		\end{equation}
		where $z \triangleq \frac{2\ln2 }{\alpha B} \big(C_{\rm bh} - (K_b - K_c) \bar R_{\rm
			e}(K_b) \big)$, the RHS of
		\eqref{eqn:Rbh} can be derived as
		\begin{align}
		&\mathbb{E}_{r_{kb}} \{\tilde R_{\rm bh} \} \nonumber\\
		&  = \int_{0}^{\infty} \min  \left( (K_b - K_c)\bar R_{\rm e}(K_b) +
		\frac{\alpha B}{2\ln2}y , C_{\rm bh} \right) f(y)dy  \nonumber\\
		 &=  \int_{0}^{z} \left((K_b - K_c)
		\bar R_{\rm e}(K_b) + \frac{\alpha B}{2\ln2}y\right) f(y) dy \nonumber \\
		&\quad  +  \int_{z}^{\infty} C_{\rm bh} f(y)dy \nonumber \\
	   &	=  (K_b - K_c)\bigg(\frac{\alpha B}{2\ln2} \gamma(K_b-K_c+1, z) \nonumber \\
	   & \quad + \bar R_{\rm e}(K_b) \gamma(K_b -K_c, z) \bigg) +C_{\rm bh} \Gamma(K_b\!-K_c, z) \label{eqn:Rbh 2}
		\end{align}
	Combine \eqref{eqn:Rbh 1} and \eqref{eqn:Rbh 2}, Lemma 2 is proved.

	\section{Proof of Proposition 1}
		\renewcommand{\theequation}{D.\arabic{equation}}
		\setcounter{equation}{0}
	With $N_c = 0$ and $p_h = 0$, from \eqref{eqn:EE}  the
	EE without caching can be obtained as
	$
	EE_{\rm no} = \frac{ p_a \bar R_{\rm bh}  }
	{p_a P_{a}+ (1-p_a) P_{i} +
		p_a w_{\rm bh} \bar R_{\rm bh} }$.
	If $EE_{\rm no}$ exceeds the EE with caching in \eqref{eqn:EE}, then with \eqref{cachhitratio}  we have
	\begin{multline}
	w_{\rm ca}N_cF \sum_{j=1}^{N_f}j^{-1} > \\
	\big((p_a P_{a}+ (1-p_a) P_{i})  \bar R_{\rm ca}
	+ p_a w_{\rm bh} \bar R_{\rm ca} \bar R_{\rm bh}\big)\sum_{f=1}^{N_c}f^{-1}
	\label{eqn:EE-EEno}
	\end{multline}
	
	If \eqref{eqn:EE-EEno} holds for $N_c = 1$ , then
	\begin{equation}
	w_{\rm ca}F \sum_{j=1}^{N_f}j^{-1}>
	\big((p_a P_{a}+ (1-p_a) P_{i})  \bar R_{\rm ca}
	+ p_a w_{\rm bh} \bar R_{\rm ca} \bar R_{\rm bh}\big) \label{eqn:EENc1}
	\end{equation}
	Multiplying both side of \eqref{eqn:EENc1} by $N_c$, we obtain
	\begin{multline}
	w_{\rm ca}N_c F \sum_{j=1}^{N_f}j^{-1} > \\
	\big((p_a P_{a}+ (N_b-p_a) P_{i})  \bar R_{\rm ca}
	+ p_a w_{\rm bh} \bar R_{\rm ca} \bar R_{\rm bh}\big) N_c \label{eqn:EENc}
	\end{multline}
	Furthering considering that $N_c > \sum_{f=1}^{N_c}f^{-1}$ for $N_c > 1$, \eqref{eqn:EENc}
	turns into
	\begin{multline}
	w_{\rm ca}N_c F \sum_{j=1}^{N_f}j^{-1} > \\
	\big((p_a P_{a}+ (1-p_a) P_{i})  \bar R_{\rm ca}
	+ p_a w_{\rm bh} \bar R_{\rm ca} \bar R_{\rm bh}\big)\sum_{f=1}^{N_c}f^{-1}
	\end{multline}
	which is the same as \eqref{eqn:EE-EEno}. This suggests that if caching one content can not
	improve EE, then for any $N_c> 1$ caching can not improve EE. Therefore,
	\eqref{eqn:EENc1} is the condition of whether caching can increase EE. \eqref{eqn:EENc1}
	can be rewritten as \eqref{eqn:caching}, and Proposition 1 is proved.
	\section{Proof of Proposition 2}
		\renewcommand{\theequation}{E.\arabic{equation}}
		\setcounter{equation}{0}
	From $\frac{dEE}{d\eta}\big|_{\eta = \eta_0} = 0$, we can obtain
	$\frac{\Omega}{\eta_0 N_f} + \ln \frac{1}{\eta_0 N_f}  = \frac{\bar R_{\rm bh}}{\bar R_{\rm ca} - \bar R_{\rm bh}}
	\ln N_f - 1$.
	Adding $\ln \Omega$ on both sides of the equation, we obtain
	\begin{equation}
	\frac{\Omega}{\eta_0 N_f} + \ln \frac{\Omega}{\eta_0 N_f}  = \ln\Omega + \frac{\bar R_{\rm bh}}{\bar R_{\rm ca} -
		\bar R_{\rm bh}}\ln N_f - 1 \label{eqn:s2}
	\end{equation}
	Taking the exponential of both sides of \eqref{eqn:s2}, we have
	$\frac{\Omega}{\eta_0 N_f}e^{\frac{\Omega}{\eta_0 N_f}}  = \Omega e^{\frac{\bar R_{\rm bh}}{\bar R_{\rm ca} -
			\bar R_{\rm bh}}\ln N_f - 1}$.
	Since $W(x)$ satisfies $W(x) e^{W(x)} = x$, we obtain
	\begin{equation}
	\frac{\Omega}{\eta_0 N_f}  = W\left(\Omega e^{\frac{\bar R_{\rm bh}}{\bar R_{\rm ca} -
			\bar R_{\rm bh}}\ln N_f - 1}\right) \label{eqn:N0equal}
	\end{equation}
	
	Since $\frac{\Omega}{\eta N_f} + \ln \frac{\Omega}{\eta N_f}$ decreases with $\eta$,
	$\frac{d EE}{d\eta}> 0$ when $\eta < \eta_0$ and $\frac{d EE}{d\eta} < 0$ when $\eta >
	\eta_0$. Rewriting \eqref{eqn:N0equal} as \eqref{eqn:N0} and further considering $\eta
	\leq 1$, Proposition 2 can be proved.
	\section{Proof of Corollary 4}
		\renewcommand{\theequation}{F.\arabic{equation}}
		\setcounter{equation}{0}
	Denote $N_b \pi D^2 \triangleq c $, where $c$ is a constant. Substituting $D = (\frac{c}{\pi N_b})^\frac{1}{2}$ and $p_a = 1-e^{-\frac{\lambda}{N_b}}$ into \eqref{eqn:N0inftyCbh} and then taking the derivation of $\eta_0$ in \eqref{eqn:N0inftyCbh} with respect to $N_b$, we obtain
	\begin{multline}
	\frac{d\eta_0}{dN_b} =    \frac{-w_{\rm bh} B}{2N_b w_{\rm ca} F N_f \ln N_f\ln 2} \Bigg( \frac{ \frac{2\lambda}{N_b}e^{-\frac{\lambda}{N_b}}  + \alpha \left(1 - e^{-\frac{\lambda}{N_b}}\right)}{1 + \beta 2^{\Phi} D^{\alpha} \left(1-e^{-\frac{\lambda}{N_b}}\right) }
	\\
	+ \frac{\lambda}{N_b} e^{-\frac{\lambda}{N_b}} \left( \alpha- 2   + \log_2 \frac{N_t}{p_a\beta 2^{\Phi} +  \left(\frac{P}{D^\alpha \sigma^2} \right)^{-1}} \right) \Bigg)
	\end{multline}
	Since the path-loss exponent $\alpha > 2$, we have $\frac{d\eta_0}{dN_b} < 0$, i.e., $\eta_0$ decreases with $N_b$.
	
	When $\frac{\lambda}{N_b} \to 0$, we have $p_a = 1-e^{-\frac{\lambda}{N_b}} \to \frac{\lambda}{N_b}$. Then from \eqref{eqn:N0inftyCbh}, $\eta_0N_b$ can be expressed as
	\begin{equation}
	\eta_0 N_b \to \frac{\lambda w_{\rm bh} B }{ w_{\rm ca} F N_f \ln N_f}
	\log_2 \frac{N_t}{\frac{\lambda}{N_b}\beta 2^{\Phi} +
		\big(\frac{P}{D^\alpha \sigma^2}\big)^{-1}}
	\end{equation}
	from which we can see that $\eta_0 N_b$ increases with $N_b$.
	\section{Proof of Corollary 7}
		\renewcommand{\theequation}{G.\arabic{equation}}
		\setcounter{equation}{0}
	By substituting $p_a \beta P 2^\Phi \ll  D^{\alpha}\sigma^2$ into \eqref{eqn:EEinftyCbh2} and letting $\frac{dEE}{dP}\big|_{P = P_0} = 0$, we obtain
	\begin{align}
	\bar P_{{\rm cc}} + \bar P_{\rm ca} - p_a \rho P_0 \left(
	\ln \frac{N_t P_0}{D^\alpha \sigma^2} + \frac{\alpha}{2} - 1 \right) = 0 \label{eqn:dEE}
	\end{align}
	where $\bar P_{\rm cc} = p_a P_{{\rm cc}_a} + (1-p_a)P_{{\rm cc}_i}$ is
	the average circuit power consumption of each BS, and $\bar P_{\rm ca} = w_{\rm ca}\eta N_f F$ is
	the average cache power consumption of each BS.
	From this equation we can derive \eqref{eqn:P0Cinf}. Since in practice the path-loss exponent $\alpha > 2$, $\ln \frac{N_t P_0}{D^\alpha
		\sigma^2} + \frac{\alpha}{2} - 1>0$ and the left hand side (LHS) of \eqref{eqn:dEE}
	decreases with $P_0$. Therefore, $\frac{dEE}{dP} > 0$ when $P<P_0$ and $\frac{dEE}{dP} <
	0$ when $P > P_0$, which indicates that $P_0$ is the optimal transmit power maximizing
	the network EE.
	
	\bibliographystyle{IEEEtran}
	\bibliography{dongbib}
	
\begin{IEEEbiography}[{\includegraphics[width=1in,height=1.25in,clip,keepaspectratio]{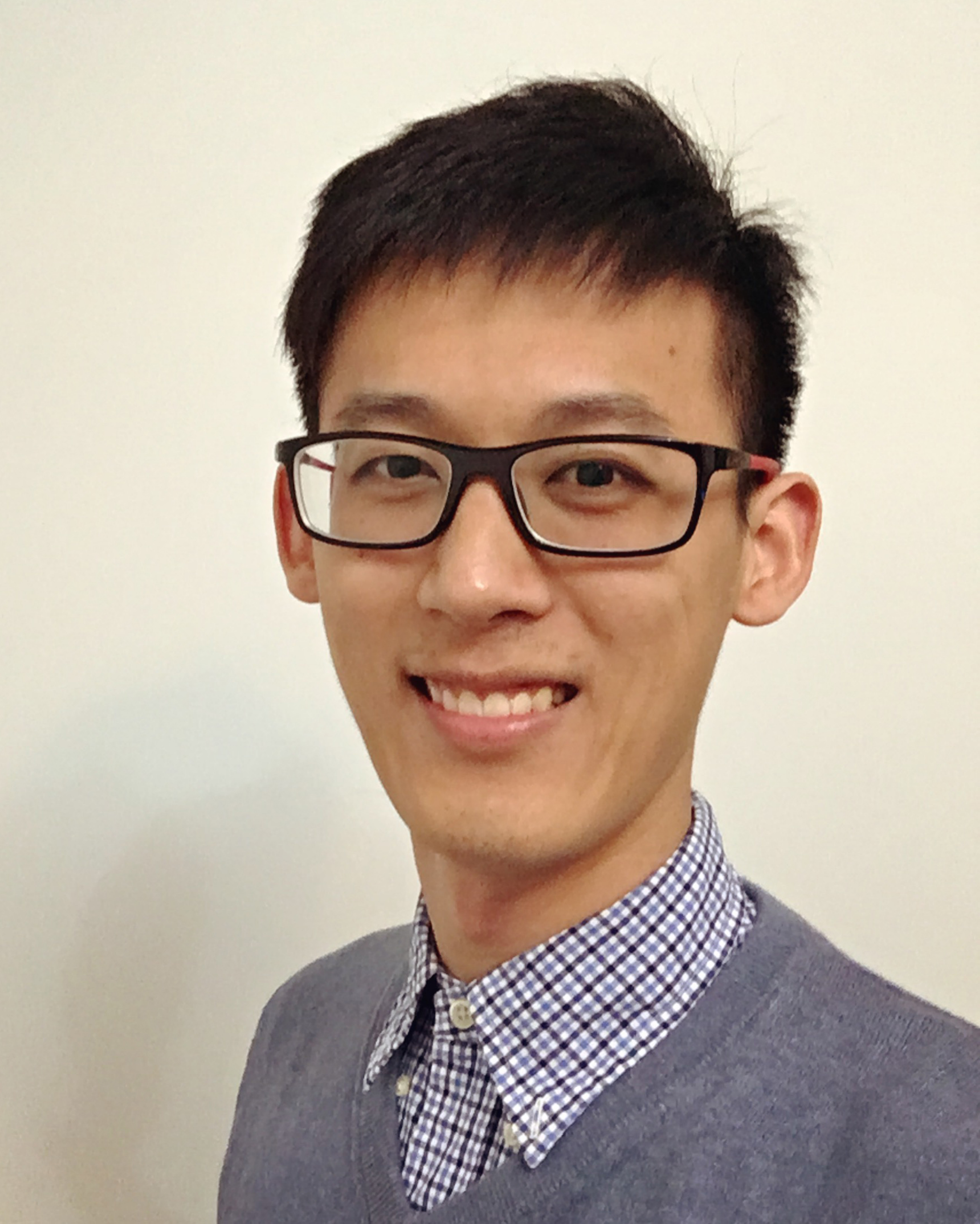}}]
	{Dong Liu} (S'13) received the B.S. degree in electronics engineering from Beihang University (formerly Beijing University of Aeronautics and Astronautics), Beijing, China in 2013. He is currently pursuing Ph.D degree in signal and information processing with the School of Electronics and Information Engineering, Beihang University.
	 
	His research interests lie in the area of caching in wireless network and cooperative transmission.

\end{IEEEbiography}\begin{IEEEbiography}[{\includegraphics[width=1in,height=1.25in,clip,keepaspectratio]{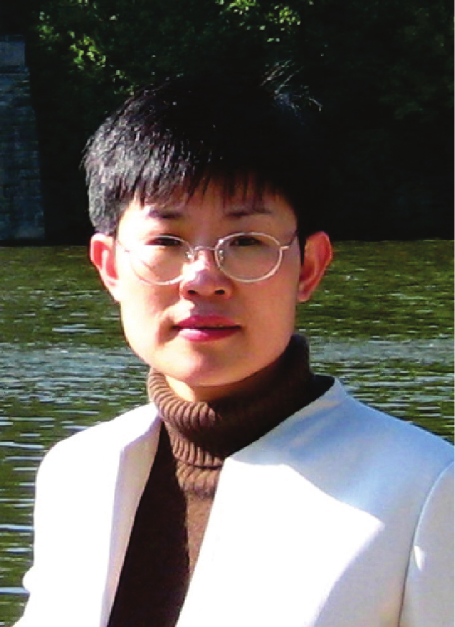}}]
{Chenyang Yang} (SM'08) received the Ph.D. degree in electrical engineering from Beihang University
(formerly Beijing University of Aeronautics and Astronautics), Beijing, China, in 1997.

Since 1999, she has been a Full Professor with the School of Electronic and Information Engineering, Beihang University. She has published more than 200 international journal and conference papers and filed more than 60 patents in the fields of energy-efficient transmission, coordinated multi-point, interference management, cognitive radio, relay, etc. Her recent research interests include green radio, local caching, and other emerging techniques for next generation wireless networks.

Prof. Yang was the Chair of the Beijing chapter of the IEEE Communications Society during 2008-2012 and the Membership Development Committee Chair of the Asia Pacific Board, IEEE Communications Society, during 2011-2013. She has served as a Technical Program Committee Member for numerous IEEE conferences and was the Publication Chair of the IEEE International Conference on Communications in China 2012 and a Special Session Chair of the IEEE China Summit and International Conference on Signal and Information Processing (ChinaSIP) in 2013. She served as an Associate Editor for the \textsc{IEEE Transactions on Wireless Communications} during 2009-2014 and a Guest Editor for the \textsc{IEEE Journal on Selected Topics in Signal Processing} published in February 2015. She is currently an Associate Editor-in-Chief of the {\em Chinese Journal of Communications and the Chinese Journal of Signal Processing}. She was nominated as an Outstanding Young Professor of Beijing in 1995 and was supported by the First Teaching and Research Award Program for Outstanding Young Teachers of Higher Education Institutions by the Ministry of Education during 1999-2004.
\end{IEEEbiography}
\end{document}